\newcommand{\modif}[1]{{\color{black} #1}}
\newcommand{\modifbis}[1]{{\color{black} #1}}
\newcommand{\sout}[1]{}
\definecolor{dkgreen}{rgb}{0,0.6,0}
\definecolor{gray}{rgb}{0.5,0.5,0.5}
\definecolor{mauve}{rgb}{0.58,0,0.82}
\tiny\color{gray},
\newcommand{\pr}{P} 
\newcommand{\cov}{\mathrm{Cov}}
\newcommand{\R}{\mathbb{R}}
\newcommand{\ipsw}{\mathrm{IPSW}}
\newcommand{\aipsw}{\mathrm{AIPSW}}
\newcommand{\E}{\mathbb{E}}
\newcommand{\EE}[1][]{\mathbb{E}\left[#1\right]}
\newcommand\independent{\protect\mathpalette{\protect\independenT}{\perp}}
\def\independenT#1#2{\mathrel{\rlap{$#1#2$}\mkern2mu{#1#2}}}
\newcommand{\es}[1]{\todo[color=magenta, size=\tiny]{ES: #1}}
\title{Causal effect on a target population: a sensitivity analysis to handle missing covariates}
\newtheorem{assumption}{Assumption}
\newtheorem{theorem}{Theorem}
\newtheorem*{theorem-non}{Theorem}
\newtheorem{corollary}{Corollary}
\newtheorem{proposition}{Proposition}
\newtheorem{definition}{Definition}
\newtheorem{lemma}{Lemma}
\author{B\'{e}n\'{e}dicte Colnet \thanks{Soda project-team, INRIA Saclay, Centre de Math\'{e}mathiques Appliqu\'{e}es, Institut Polytechnique de Paris, Palaiseau, France. (email: benedicte.colnet@inria.fr).}
  \and
Julie Josse\thanks{INRIA Sophia-Antipolis, Montpellier, France.}
  \and
Ga\"{e}l Varoquaux\thanks{Soda project-team, INRIA Saclay, France.}
     \and 
Erwan Scornet \thanks{Centre de Math\'{e}mathiques Appliqu\'{e}es, UMR 7641, \'{E}cole polytechnique, CNRS, Institut Polytechnique de Paris, Palaiseau, France.}}
\date{\today}
\begin{document}

\maketitle

\begin{center}
    {\footnotesize This article has been accepted for publication in \textit{Journal of Causal Inference}. \\Updated version is available on \href{https://www.degruyter.com/document/doi/10.1515/jci-2021-0059/html}{degruyter website}.}
\end{center}

%\tableofcontents
\begin{abstract}
Randomized Controlled Trials (RCTs) are often considered the
gold standard for estimating causal effect, but they may lack external validity when the population
eligible to the RCT is substantially different from the target population.
Having at hand a sample of the target population of interest allows \modifbis{us} to generalize the causal effect.
Identifying the treatment effect in the target population requires covariates to capture all
treatment effect modifiers that are shifted between the two sets.
Standard estimators then use either weighting (IPSW), outcome modeling (G-formula), or combine the two in doubly robust approaches (AIPSW).
However such covariates are often not available in both sets.
In this paper, after \sout{completing existing proofs on the complete-case
consistency of these three estimators}\modifbis{proving $L^1$-consistency of these three estimators}, we compute the expected bias
induced by a missing covariate, assuming a Gaussian distribution, a continuous outcome,  and a
semi-parametric model. 
\modifbis{Under this setting, we perform a} sensitivity analysis for each missing
covariate pattern and compute the sign of the expected bias. We also show
that there is no gain in \modifbis{linearly imputing} \sout{imputing }a partially-unobserved covariate. Finally
we study the \sout{replacement}\modifbis{substitution} of a missing covariate by a proxy. We illustrate
all these results on simulations, as well as semi-synthetic benchmarks
using data from the Tennessee Student/Teacher Achievement Ratio (STAR),
and a real-world example from critical care medicine.
    \vspace{12pt}\\
\textit{Keywords:} Average treatment effect (ATE);  
distributional shift; 
external validity; 
generalizability;
transportability.
\end{abstract}
%\jj{une règle de collègues américains, on met dans les mots clés des mots différent du titre, du coup, tu peux peut être mettre, ATE, distributional shift, identifiability, external validity et peut être autre chose}
%\bc{ok, merci !}
\section{Introduction}

\paragraph*{Context}

Randomized Controlled Trials (RCTs) are often considered the
gold standard for estimating causal effects \citep{imbens2015causal}. 
Yet, they may lack external validity, when the population
eligible to the RCT is substantially different from the target population of
the intervention policy \citep{rothwell2005external}. Indeed, 
if there are treatment effect modifiers with a different distribution in the
target population than that in the trial,
some form of adjustment of the causal effects measured on the RCT is necessary to estimate the causal effect in the target population. 
%may induce a \emph{generalization shift} \jj{en termes de formulation, je ne sais pas trop comment dire mais ça fait bizz de dire généraliser les résultats du RCT implique un generalization shift du ATE} in the average treatment effect (ATE) 
%As a consequence there is a huge interest in estimating the ATE from observational data, that are supposed to be more representative of the target population, than some RCT. Still, one can not prevent unobserved covariates that would break the unconfoundeness assumption made on any observational analysis. 
Using covariates present in both RCT and an observational sample of the
target population, this target population average treatment effect (ATE) can be identified and estimated with a
variety of methods \citep{Hotz2005training, cole2010generalizing,
stuart2011use, bareinboim2011formaltransportability, bareinboim2013general, tipton2013improving,
bareinboim2014recovering, pearl2014externalvalidity,
kern2016generalization, bareinboim2016causalfusion,
buchanan2018generalizing, stuart2018generalizability, dong2020integrative}, reviewed in
\citep{colnet2020causal} and \citep{degtiar2021reviewgeneralizability}.
%Practitioners are increasingly paying attention to such external validity shift, though with a great variability in the analytic procedures \citep{he2020generalizabilitypractice}. \jj{phrase floue pour moi, j'enleverai} ok!

In this context, two main approaches exist to estimate the target population ATE from a RCT. 
The \textit{Inverse Probability of Sampling Weighting} (IPSW)
reweights the RCT sample so that it resembles the target
population with respect to the necessary covariates for generalization,
while the \textit{G-formula}
models the outcome, using the RCT sample, with and without treatment
conditionally \sout{to}\modifbis{on} the same covariates, and then \sout{to extend}\modif{marginalizes} the model to the
target population of interest. These two methods can be combined in
a doubly-robust
approach --\textit{Augmented Inverse Probability of Sampling Weighting}
(AIPSW)-- \sout{that}\modifbis{which} enjoys better statistical properties.
These methods rely on covariates to capture the heterogeneity of the
treatment and the population distributional shift. But the datasets describing the RCT and the target population
are seldom acquired as part of a homogeneous effort and as a result they
come with different covariates \citep{bareinboim2011formaltransportability, susukida2016representativeness,lesko2016generalizingHIV, stuart2017generalizing,
egami2021covariate, li2021generalizing}. Restricting the analysis to the covariates in common
raises the risk of omitting an important one leading to \sout{identifiabilities}\modifbis{identifiability} issues. 
%\jj{je dirais pas capture a confounding effect, leading to identifiabilities issues par ex} 
Controlling biases due to unobserved covariates is of
crucial importance \sout{to}\modifbis{for} causal inference, where it is known as
\emph{sensitivity analysis} \citep{cornfield1959smoking, imbens2003sensitivity, rosenbaum2005sensitivity}. 
%Here, we study thisproblem for estimators of the target population causal effect, accounting for the different covariates in the two datasets. 

\paragraph*{Prior work}
 
The problem of missing covariates is central in causal inference as, in 
an observational study, one can never prove that there is no hidden
confounding. 
%\sout{\citep{cochrane1972effectiveness, pearl2009book, imbens2015causal, hernan2020book}.} 
In that setting, sensitivity analysis strives to assess how far confounding would affect
the conclusion of a study (for example, would the ATE be of a different
sign with such a hidden confounder). Such approaches date back to a study
on the effect of smoking on lung cancer \citep{cornfield1959smoking}, and
have been further developed for both parametric
\citep{imbens2003sensitivity, rosenbaum2005sensitivity, dorie2016flexiblesensitivity,
ichino2008sensitivity, cinelli2020sensitivityextending} and
semi-parametric situations \citep{franks2019sensitivity,
veitch2020sense}. Typically, the analysis translates expert judgment
into mathematical expression of how much the confounding affects
treatment assignment and the outcome, and finally how much the estimated treatment effect is biased\sout{(giving for example bounds)}. 
In practice the expert must usually provide \textit{sensitivity parameters} 
that reflect plausible properties of the missing confounder.
Classic sensitivity analysis, dedicated to ATE estimation from observational data, 
use as sensitivity parameters the \textit{impact of the missing covariate on treatment assignment probability} 
along with the \textit{strength on the outcome} of the missing confounder. 
However, given that these quantities are hardly directly transposable when it comes to generalization,
these approaches cannot be directly applied to \sout{estimating} estimate the
population treatment effect.
These parameters have to be respectively replaced by the \sout{\textit{sampling bias}}\modif{\textit{covariate shift}} and the \textit{strength of a treatment effect modifier}\sout{in adequate sensitivity analysis method}.
\modifbis{Existing sensitivity analysis methods for generalization usually consider a completely unobserved covariate.}\sout{For sensitivity analysis of the generalization shift\footnote{Note that the term \textit{sampling selection bias} or \textit{selection bias} is also used in the literature, but for this work we prefer the term generalization shift as the term bias is used later on for estimator bias.},}
\citep{Andrews2018ValidityBias} \modifbis{rely on a logistic model for sampling probability and a linear generative model of the outcome.} \sout{study the case of a
totally unobserved covariate and model the strength of the missing covariate as a treatment effect modifier and on the sampling mechanism in a linear generative model.} %\es{revoir la phrase}
\citep{dahabreh2019sensitivity} propose a sensitivity analysis \sout{that models}\modifbis{assuming a model on} the \modifbis{identification} bias of the conditional average treatment effect.\sout{ without the missing covariate.}
Very recent works propose two other approaches: \textit{(i)} \citep{nie2021covariate} rely on the IPSW estimator and bound the error on the the density ratio and then derive the bias on the ATE following the spirit of \citep{rosenbaum2005sensitivity}; \textit{(ii)} \citep{huang2021leveraging} present a method with very few assumptions on the data generative process leading to three sensitivity parameters, including the variance of the treatment effect.
As the analysis starts from two data sets, the missing covariate can also be partially observed in one of the two data set, which opens the door to new dedicated methods, in addition to sensitivity
methods for totally-missing covariates.
Following this observation, \citep{nguyen2017sensitivity, nguyen2018sensitivitybis} \sout{propose a sensitivity analysis dedicated to this specific purpose. In particular they design a method to} handle the case where a covariate is present in the RCT but not in the observational data set, and establish a sensitivity analysis under the hypothesis of a linear generative model for the outcome. %\sout{These three works are to our knowledge the only sensitivity analysis designed for the generalisation purpose.}
When the missing covariate is partially observed, practitioners sometimes
impute missing values based on other observed covariates, though 
this approach is poorly documented. For example,
\citep{lesko2016generalizingHIV} impute \modif{a partially-observed covariate in a clinical study}
%\sout{the CD4 count of the target population}
using a range of plausible distributions.
Imputation
has also been used in the context of individual participant data in
meta-analysis \citep{RescheRigon2013IDEmissing, jolani2015miceIPD}. 

\paragraph*{Contributions}
In this work we investigate the problem of a missing covariate that affects the identifiability of the target population average treatment effect (ATE), a common situation when combining different data sources. This work comes after the identifiability assessment, that is we consider that the necessary set of covariates to generalize is known, but a necessary covariate is totally or partially missing.
\sout{We focus our analysis on three main estimators IPSW, G-formula, and AIPSW.} 
Section~\ref{sec:notations} \sout{we recall the definitions of these three
estimators in the complete case -- that is when all
covariates needed to ensure identifiability are observed. This section also
establish the consistency of  IPSW, G-formula, and AIPSW estimators.}\modifbis{recalls the context along with the generic notations and assumptions used when coming to generalization.}
%\sout{This same section ends with specific notations to describe all missing covariate patterns.}
\modifbis{In Section~\ref{sec:linear-causal-model}, we quantify the bias due to unobserved covariates under the assumption of a semi-parametric generative process, considering a linear conditional average treatment effect (CATE), and under a transportability assumption of links between covariates in both populations.
This bias is not estimator-specific and remains valid for the IPSW, G-formula, and AIPSW estimators.} We also prove that a linear imputation of a partially missing covariate can not replace a sensitivity analysis.
%\jj{ici, je suppose donc que c'est partially missing peut être préciser, c'est un peu dit aprs}.
As mentioned in the introduction, and unlike classic sensitivity analysis, several missing data patterns can be observed: either totally missing or missing in one of the two sets. 
%\jj{ou sinon ici tu dis, either totally missing or missing in one of the two sets}. 
Therefore Section~\ref{sec:linear-causal-model} provides sensitivity analysis frameworks for all the possible missing data patterns, including the case of a proxy variable that would replace the missing one. 
These results can be useful for users as they may be tempted to consider the intersection of common covariates between the RCT and the observational data.
We detail how the different patterns \modif{involve} \sout{imply}%\es{involve?}
either one or two sensitivity parameters. To give users an interpretable analysis, and due to the specificity of the sensitivity parameters at hands, we propose an \sout{alternative to}\modif{adaptation of} sensitivity maps \citep{imbens2003sensitivity} that are \modifbis{commonly} used to \sout{represent and }communicate sensitivity analysis results. 
\sout{In particular, our representation includes the sign of the bias along with a bias landscape rather than a threshold.}
Section~\ref{sec:simulation} presents an extensive synthetic simulation analysis that illustrates theoretical results along with a semi-synthetic data simulation using the Tennessee Student/Teacher Achievement Ratio (STAR) experiment evaluating the effect of class size on children performance in elementary schools \citep{krueger1999star}.
Finally, Section~\ref{sec:traumabase} provides a real-world analysis to assess the effect of acid tranexomic on the Disability Rating Score (DRS) for trauma patients when a covariate is totally missing.

\section{Problem setting: generalizing a causal effect \label{sec:notations}}

This section recalls the complete case context and identification assumptions. Any reader \sout{used to}\modifbis{familiar with} the notations and willing to jump to the sensitivity analysis can directly go to Section~\ref{sec:linear-causal-model}. \sout{with definitions, assumptions, and our contributions on consistency for each of the three \modifbis{following} estimators \sout{considered}: G-formula, IPSW, and AIPSW.  He/She only needs to assume the $L^1$-consistency of the three estimators, which is rigorously proved in Section~\ref{sec:notations}.} %\sout{Subsection~\ref{subsec:notations} details the notation used, and recalls RCT's consistency and identifiability condition necessary to generalize the treatment effect. Subsection~\ref{sec:consistency} contains our contribution on consistency of the G-formula, IPSW, and AIPSW estimators.} \sout{Finally subsection~\ref{subsec:start-missing} introduces the notation when a covariate from the adjustment set is totally or partially unobserved.}

%\subsection{Context and notations }

\subsection{Notations \label{subsec:notations}}

Notations are grounded on the potential outcome framework \citep{imbens2015causal}. We model each observation in the RCT or observational population as described by a random tuple 
$(X_i,Y_i(0),Y_i(1),A_i,S_i)$ for $i \in \{1, \hdots, n\}$ drawn from a distribution $\left(X, Y(0), Y(1), A, S \right) \in \mathbb{R}^p \times \mathbb{R}^2 \times\{0,1\}^2$, such that the observations are \textit{iid}.
For each observation, $X_i$ is a $p$-dimensional vector of covariates,
$A_i$ denotes the binary treatment assignment (with $A_i=1$ if treated and $A_i=0$ otherwise), $Y_i(a)$ is the continuous outcome had the subject been given treatment $a$ (for $a\in\{0,1\}$), and $S_i$ is a binary indicator for RCT eligibility (i.e., meet the RCT inclusion and exclusion criteria) and willingness to participate if being invited to the trial ($S_i=1$ if eligible and $S_i=0$ if not). Assuming \textit{consistency of potential outcomes}, and no \sout{interaction}\modifbis{interference} between treated and non-treated subject (SUTVA assumption), we denote by $Y_i = A_iY_i(1)+(1-A_i)Y_i(0)$ the \modifbis{observed outcome under treatment assignment $A_i$}. %\bc{tout au long du document on assume espérance marginales des PO et conditionnelles à X fini. on garde square integrable pour les preuves quand nécessaires.}

\modif{Assuming the potential outcomes are integrable, we} define the conditional average treatment effect (CATE):
%\footnote{Note that the CATE can also be denoted $\mu^{\star}(X)$ in certain papers \citep{jeong2020causalinferencecovariateshift}.} \jj{not sure I will add the footnote, not useful unless it is widespread}:
$$
\forall x\in \mathcal{X}\,,\quad \tau(x) = \EE [Y(1) - Y(0) \mid X=x]\,,
$$
%\es{pas besoin de mettre $\mathbb{R}^p$ ici plutôt $\mathcal{X}$ pour rester général et permettre des input discrets}
and the population average treatment effect (ATE):
%\footnote{It is important to clearly state the difference in between the CATE $\tau(X)$ and the ATE $\mathbb{E}[\tau(X)]$. The ATE depends on the probability law of $\mathcal{X}$, while the CATE is fixed for any given $x$ whatever the probability law of $\mathcal{X}$. In other word, the ATE is population dependent while the CATE can be seen as a mechanism valid on any populations. %\jj{interesting but not sure it is appropriate here, I will put here as a footnote} }:
$$
\tau = \EE [Y(1) - Y(0)] = \EE[\tau(X)]\,.
$$
%\jj{Bénédicte, on pourra rediscuter de cette footnote, je trouve étrange car ça dépend quand même de Y1 et Y0}
%\es{Je dirais que l'ATE dépend de la loi de $X$ (qui peut être différente selon le jeu de données, alors que le CATE pour un $x$ donné est valide quelle que soit la loi de $X$} \bc{Merci Erwan. Est-ce que cette version te semble plus claire ? Est-ce que la note de bas de page est utile ?} 
%\footnote{Note that this idea can be understood through the prism of the covariate shift domain in machine learning, where the density $f^*(x,y) = f(x) f(y|x)$ is such that $\forall x, f(y|x)$ is invariant, while $f(x)$ can depend on the population \citep{chen2016regressioncovariateshift}.}.

\modif{Unless explicitly stated, all expectations are taken with respect to all variables involved in the expression.}
We model the patients belonging to an RCT sample of size $n$ and in an observational data sample of size $m$ by $n+m$ independent random tuples: $\{X_i,Y_i(0),Y_i(1),A_i,S_i\}_{i=1}^{n+m},$ where the RCT samples $i=1,\ldots,n$ are identically distributed according to $\mathcal{P}(X,Y(0),Y(1),A,S \mid S=1)$, and the observational data samples $i=n+1,\ldots,n+m$ are identically distributed according to $\mathcal{P}(X,Y(0),Y(1),A,S)$. We also denote $\mathcal{R}=\{1,\ldots,n\}$ the index set of units observed in the RCT study, and $\mathcal{O}=\{n+1,\ldots,n + m\}$ the index set of units observed in the observational study.\\
For each RCT sample $i\in \mathcal{R}$, we observe $(X_i,A_i,Y_i,S_i=1)$, while for observational data $i\in \mathcal{O}$, we consider the setting where \textit{we only observe the covariates $X_i$}, which is a common case in practice. A typical data set is presented on Table~\ref{tab:typicalsituation}.
%\sout{A selection diagram -- a causal diagram enriched with a selection variable denoted $S$ where each member of $S$
%\bc{voir si en fait il ne faudrait pas appeler autrement cette variable de selection qui n'est pas égale à S}
%corresponds to a mechanism by which the populations differ
%\citep{bareinboim2011formaltransportability, bareinboim2016causalfusion}
%-- representative of the whole situation is presented on Figure~\ref{fig:dag_notations}. This DAG illustrates that we are in a non-nested case. Associated encoded assumptions for generalizability are recalled later on (see subsection~\ref{subsec:assumption-to-generalize}). }

\begin{comment}
\begin{figure}[!h]
	\begin{minipage}{.40\linewidth}
        \includegraphics[width=0.8\textwidth]{}
    \end{minipage}
    \begin{minipage}{.60\linewidth}
	\caption{\textbf{Selection diagram illustrating the general framework}: note the specific usage of a square for $S$, called a "selection variable", either denoting differences in covariates' distributions between data sets (also called the sources) when pointing to $\rightarrow X$, or denoting intervention or not when pointing to $\rightarrow A$. Note that we consider the conditional potential outcomes to be the same in both sources as no arrow points from $S$ to $Y$.  \citep{bareinboim2011formaltransportability}.}
    \label{fig:dag_notations}
	\end{minipage}
\end{figure}

\end{comment}
% the Selection Variable to Y is to show that depending on the set up, the link between A and Y can be there or not
% is the old one

Because the RCT sample and observational data do not follow the same covariate distribution, the ATE $\tau$ is different from the RCT's (or sample\footnote{Usually $\tau_1$ is also called the Sample Average Treatment Effect (SATE), when $\tau$ is named the Population Average Treatment Effect (PATE) \citep{stuart2011use, miratrix2017worth, egami2021covariate, degtiar2021reviewgeneralizability}.}) average treatment effect $\tau_1$ which can be expressed as:
\begin{equation*}
    \tau \neq \tau_1, \quad \text{where}\; \tau_1 := \mathbb{E} [Y(1) - Y(0) \mid S=1]\,.
\end{equation*}

This difference is the core of the \textit{lack of external validity} introduced in the beginning of the work, but formalized with a mathematical expression\footnote{We would like to emphasize the fact that the target quantity is not $\mathbb{E} [Y(1) - Y(0) \mid S=0]$, but $\tau := \mathbb{E} [Y(1) - Y(0)]$. This notation highlights that the trial sample is a biased sample from a superpopulation, while the observational data is an unbiased sample of this population. In other words, the target population contains individuals with $S=1$ or $S=0$. Note that the generalizability problem tackled in this work  - aiming to recover from a sampling bias - can also be equivalently seen as a transportability problem with two separate populations and a common support. See \cite{colnet2020causal} for a discussion, or \cite{nie2021covariate} for a similar sensitivity analysis method, presented as a transportability problem.}. Throughout the paper, we denote $\mu_a(x) := \EE[Y(a) \mid X=x]$
the conditional mean outcome under treatment $a\in\{0,1\}$ \modifbis{(also called responses surfaces)}.\sout{ in the observational data.}% \es{on est sûr de ça?}\bc{on est pas sûr du tout...}
%and in the RCT population, respectively:
\sout{$\mu_{(a)}$ are also called responses surfaces}
and  $e_{1}(x) := \mathbb{P}(A=1 \mid X=x, S=1)$ the propensity score in the RCT population.
This function is imposed by the trial characteristics and is usually a constant denoted by $e_1$ (other cases include stratified RCT trials).

For notational clarity, estimators are indexed by the number of observations used for their computation. 
For instance, response surfaces can be
estimated using controls and treated individuals %\gv{The term "units" is used in the manuscript, but I find it very strange. It is not a term that I see usually, I rather see "individual" or "sample"} Thank you for noticing it, I choose individuals if you agree
in the RCT to obtain respectively $\hat{\mu}_{0,n}$ and $\hat{\mu}_{1,n}$.
Similarly, we denote by $\hat \tau_n$ an estimator of $\tau$ depending only on the RCT samples (for example the difference-in-means estimator), and by $\hat \tau_{n,m}$ an estimator computed using both datasets.
%as it is the case for the difference-in-means estimator previously introduced (Definition~\ref{def:difference-in-means}). 

\modifbis{\subsection{Identifiability (or causal) assumptions} \label{subsec:assumption-to-generalize}}

%\subsubsection*{RCT internal validity and estimator}

%\es{Mettre au singulier la phrase suivante}
The \textit{consistency of treatment assignment} assumption ($Y=AY(1) + (1-A)Y(0)$) has already been introduced in  Section~\ref{sec:notations}\sout{are}\sout{and randomization within the RCT ($Y(a) \independent A\mid S=1$)}. \sout{Note that to ensure the internal validity of the RCT a last assumption has to be verified, that is, all units from the RCT have a non-zero probability to be in the treated or in the control group.}To ensure the internal validity of the RCT, \modifbis{we need to assume} randomization of \sout{the} treatment assignment and \sout{the} positivity of trial treatment assignment.
\modif{
\begin{assumption}[Treatment randomization within the RCT]\label{a:RCT-randomization}
$\forall a \in \{0,1\} , \,Y(a) \independent A\mid S=1, X$.
\end{assumption}
%\es{Il manque un $\forall a \in \{0,1\}$ dans l'expression précédente, même remarque ci-dessous.}

In some cases, the trial is said to be completely randomized, that is $\forall a \in \{0,1\}, \, Y(a) \independent A\mid S=1$, thus removing any potential stratification of the treatment assignment.}

\begin{assumption}[Positivity of trial treatment assignment]\label{a:eta_1_bounded}
$\exists \eta_1 > 0, \forall x  \in \mathcal{X},  \eta_1 \leq  e_1(x) \leq 1- \eta_1$
\end{assumption}

%\es{two? Ou reciter la consistency}
%These \sout{three}\modif{two} assumptions\modif{, along with the SUTVA assumption,} imply the internal validity of the RCT. \modifbis{In particular it can be shown that
\modifbis{Under these two assumptions, along with the SUTVA assumption (see, e.g.,  \cite{imbens2015causal}), the most classical difference-in-means estimator is consistent for $\tau_1$.} %\es{est-ce qu'on ne metrait pas ça plutôt dans la partie suivante?}
In order to generalize the RCT estimate to the target population, three additional assumptions are required for identification of the target population ATE $\tau$.

\begin{assumption}[Representativity of observational data]\label{a:repres} For all $i \in \mathcal{O}, X_i \sim \mathcal{P}(X)$ where $\mathcal{P}$ is the target population distribution. 
\end{assumption}

Then, a key assumption concerns the set of covariates that allows the identification of the target population treatment effect. This implies a conditional independence relation being called the \textit{ignorability assumption on trial participation} or \textit{S-ignorability} \citep{Hotz2005training, stuart2011use, tipton2013improving, hartman2015sate, pearl2015findings, kern2016generalization, stuart2017generalizing, nguyen2018sensitivitybis, egami2021covariate}. 

\begin{assumption}[Ignorability assumption on trial participation - \cite{stuart2011use}]
\label{a:cate-indep-s-knowing-X}
$Y(1) - Y(0) \independent S \mid  X$.
\end{assumption}
%\sout{Assumption~\ref{a:cate-indep-s-knowing-X} ensures that the ATE is % non-parametrically \es{pourquoi est-ce qu'on précise non-parametrically?} identifiable. }
Assumption~\ref{a:cate-indep-s-knowing-X} indicates that covariates $X$ needed to generalize correspond to covariates being \textit{both} treatment effect modifiers \textit{and} subject to a distributional shift between the RCT sample and the target population.
%\sout{Such a set is sometimes called a \textit{separating set} \citep{egami2021covariate}. Note that Assumption~\ref{a:cate-indep-s-knowing-X} implies a functional relationship between the covariates and the potential outcomes in the RCT and in the target population, being that for all $x \in \mathcal{X}, \tau_1(x) = \mathbb{E} [Y(1)-Y(0)\mid X=x,S=1] = \mathbb{E} [Y(1)-Y(0)\mid X=x] = \tau(x)$.}\sout{Such assumption, named the \textit{transportability} assumption, is encoded in the DAG on Figure~\ref{fig:dag_notations}, where the absence of an $S$ variable pointing to $Y$, encodes the assumption that $X$ and $A$ specific effects on $Y$ are invariant across the two populations \citep{bareinboim2011formaltransportability}.}
%\sout{This last assumption is \sout{effectively used}\modif{sufficient for our work, but for simplicity we}\sout{ in the estimator identification, as detailled in appendix (see Section~\ref{sec:appendix-general-proof}). But in this work we } keep considering the widely used Assumption~\ref{a:cate-indep-s-knowing-X}.}
\modif{\sout{Research works} \modifbis{Different strategies} have been proposed to assess whether \sout{or not} a treatment effect is constant or not, and usually relies on marginal variance, CDFs or quantiles comparison \citep{Miratrix2016ATEVariation}.}
Other \modifbis{techniques} are possible such as comparing $\operatorname{Var}[Y \mid X_{obs}, A=1, S=1]$ to $\operatorname{Var}[Y \mid X_{obs}, A=0, S=1]$, \modifbis{in order to  assess whether or not} an important treatment effect modifier is missing.
\sout{Still, the scope of this work is limited to the case where} \modifbis{In our work, we assume that the user is aware of which variables are treatment effect modifiers and subject to a distributional shift.}
%\sout{, and we only consider the case where the ignorability on trial participation (Assumption~\ref{a:cate-indep-s-knowing-X}) is broken due to totally or partially unobserved set of covariates $X_{mis}$}. 
We call these covariates as \textit{key covariates}. 

\begin{assumption}[Positivity of trial participation - \cite{stuart2011use}]\label{a:pos}There exists a constant
$c$ such that for all $x$ with probability $1$,  \mbox{$\mathbb{P}(S=1\mid X = x)\ge c>0$}
\end{assumption} 
%\sout{\modif{Assumption~\ref{a:pos} implies that any individual has a positive probability to be eligible to the RCT. Note that this assumption requires the probability to be strictly positive, while it can be equal to one. This can be understood as the findings are generalized from one source toward the other and not conversely. A parallel can be made with the usual positivity assumption for average treatment effect on the treated.}}
%In this work, we are not treating the question of how scientists acquire scientific knowledge in the form of Assumption~\ref{a:cate-indep-s-knowing-X}, which is a complicated task that entails the causal interpretation of the estimand \jj{j'enleverais cette phrase, la deuxième suffit}.
%\es{Rajouter "Assumption 5 implies that an individual with given covariate has a positive probability to belong to the RCT or to the observational data set?}
%\bc{add comment on the fact that it is not symetrical}

%In this work, we assume that a separating set $X$ is already identified, and we focus on the \sout{consistency of procedures}\modifbis{estimators' consistency}, along with a sensitivity analysis (detailed in Section~\ref{sec:linear-causal-model}).

\subsection{Estimation strategies}

To \sout{properly} transport the ATE, several methods exist: the G-formula
\citep{lesko2017generalizing, bareinboim2011formaltransportability, dahabreh2019generalizing}, Inverse Propensity Weighting Score (IPSW) \citep{cole2010generalizing, lesko2017generalizing, buchanan2018generalizing}, and the Augmented IPSW (AIPSW) estimators. Note that other methods exist, such as calibration \citep{dong2020integrative, Chattopadhyay2022OneStep}. %\modifbis{Here we focus our analysis on the G-formula.}
For example the G-formula estimator consists in modeling the expected values for each potential outcome, conditional on the covariates. 
\begin{definition}[G-formula - \cite{dahabreh2018generalizing}]\label{def:g-formula} The G-formula is denoted $\hat \tau_{\text{\tiny G},n,m}$, and defined as
%The usual approach is to estimate the two response surfaces $\hat \mu_{1,n}(.)$ and  $\hat \mu_{0,n}(.)$ on the RCT and then extrapolate on the target population:
\begin{equation}
\hat \tau_{\text{\tiny G},n,m} = \frac{1}{m}\sum_{i=n+1}^{n+m}\left(\hat \mu_{1, n}(X_i) - \hat \mu_{0, n}(X_i)\right),
\end{equation}
where $\hat \mu_{a,n}(X_i)$ is an estimator of $\mu_{a}(X_i)$ obtained on the RCT sample. These intermediary estimates are called nuisance components.
\end{definition}

\modifbis{Beyond causal assumptions stated above, the behavior of the G-formula estimator strongly depends on that of the surface response estimators $\hat \mu_{a,n}$ for $a \in \{0,1\}$. To analyze the G-formula, we introduce below assumptions on the consistency of the nuisance parameters $\hat \mu_{0,n}$ and $\hat \mu_{1,n}$.} 
\sout{One can observe that the surface response estimators condition the consistency of the estimator. We introduce an assumption on the consistency\sout{ rate} of those nuisance parameters.}
% For the G-formula to be $L^1$-consistent, one require consistency of the estimated surface responses.

\begin{assumption}[Consistency of surface response estimators]\label{a:consistency-mu}
Denote $\hat \mu_{0,n}$ (respectively $\hat \mu_{1,n}$) an estimator of $\mu_{0}$ (respectively $\mu_{1}$). Let $\mathcal{D}_n$ the RCT sample, so that 
    \item (H1-G) For $a \in \{0,1\}$, $\mathbb{E}\left[ | \hat \mu_{a,n}(X) -  \mu_{a}(X) | \mid \mathcal{D}_n\right] \stackrel{p}{\rightarrow}  0$ when $n \rightarrow \infty$,
    \item (H2-G) For $a \in \{0,1\}$, there exist $C_1, N_1$ so that for all $ n \geqslant N_{1}$, almost surely, $\mathbb{E}[\hat{\mu}_{a,n}^2(X) \mid \mathcal{D}_n] \leqslant C_1$.
\end{assumption}

%\modifbis{Under causal assumptions (Assumptions~ \ref{a:RCT-randomization}, \ref{a:eta_1_bounded}, \ref{a:repres}, \ref{a:cate-indep-s-knowing-X}, and \ref{a:pos}) and Assumption~\ref{a:consistency-mu}, the G-formula is $L^1$-consistent (asymptotically unbiased)\gv{J'ai peur que cela ne soit pas du tout vu tel que c'est écrit. J'écrirai bien un environement "proposition" et je metterai "(informal)".}. In appendix we recall definitions of IPSW and AIPSW estimators and give the precise conditions under which $L^1$-consistency is achieved (see Section~\ref{sec:consistency}). Proofs are in Section~\ref{sec:proof-consistency-ipsw-gformula}.}
%\modifbis{The sensitivity analysis presented below holds for any $L^1$-consistent estimator.}

\begin{proposition}[Informal - $L^1$-consistency of G-formula, IPSW, and AIPSW]
Under causal assumptions (Assumptions~ \ref{a:RCT-randomization}, \ref{a:eta_1_bounded}, \ref{a:repres}, \ref{a:cate-indep-s-knowing-X}, and \ref{a:pos}) and Assumption~\ref{a:consistency-mu}, the G-formula is $L^1$-consistent (asymptotically unbiased). In appendix we recall definitions of IPSW and AIPSW estimators and give the precise conditions under which $L^1$-consistency of those estimators is achieved (see Section~\ref{sec:consistency}).  
\end{proposition}
Proofs and a more formal statement are in Section~\ref{sec:proof-consistency-ipsw-gformula}. The sensitivity analysis presented below holds for any $L^1$-consistent estimator.

\section{Impact of a missing key covariate for a linear CATE \label{sec:linear-causal-model}}

\sout{In this section, we assume thatFor the rest of the work, we assume $X, Y(0), Y(1) \in \mathbb{R}^{p+2}$.while previous results did not require such assumption.}

\subsection{Situation of interest: a missing covariate \sout{on}\modifbis{in} one dataset \label{subsec:start-missing}}

We study the common situation where both data sets (RCT and observational) contain a different subset of the total covariates $X$.
$X$ can be decomposed as $X = X_{m i s} \cup X_{o b s}$ where $X_{o b s}$ denotes the covariates that are present\sout{s} in both data sets, the RCT and the observational study. $X_{m i s}$ denotes the covariates that are either partially observed in one of the two data sets or totally unobserved in both data sets. \sout{Note that in this work}We do not consider (sporadic) missing \modifbis{data problems} as in \cite{mayer2021generalizingincomplete}, but only cases where the covariate is totally observed or not per data sources.
We denote by $obs$ (resp. $mis$) the index set of observed (resp. missing) covariates. 
%Notations are similar to \cite{lemorvan2020neumiss}, to handle several missing data patterns. \jj{j'aurais pas forcément mis une ref à des données manquantes car quand même different}
An illustration of a typical data set\sout{we consider} is presented in Table~\ref{tab:typicalsituation}, with an example of two missing data patterns.

\begin{figure}[!h]
\begin{center}
\begin{tabular}{ |c|c|ccc|c|c| } 
 \hline
 & &\multicolumn{3}{c|}{Covariates} &  & \\
 & Set &$X_1$ & $X_2$ & $X_3$ & $A$ & $Y$ \\ 
 \hline
1 & $\mathcal{R}$ &1.1 & 20 & 5.4 & 1 &  10.1 \\ 
 & $\mathcal{R}$ &-6 & 45 & 8.3 &  0 & 8.4  \\ 
$n$& $\mathcal{R}$ &0 & 15 & 6.2 & 1 & 14.5  \\ 
$n+1$& $\mathcal{O}$ && $\dots$ & & $\dots$ & $\dots$   \\
& $\mathcal{O}$ &-2 & 52 & \texttt{NA} & \texttt{NA} & \texttt{NA}  \\
& $\mathcal{O}$ &-1 & 35 & \texttt{NA} & \texttt{NA} &    \texttt{NA} \\
$n+m$& $\mathcal{O}$ &-2 & 22 & \texttt{NA} & \texttt{NA} &\texttt{NA} \\
 \hline
\end{tabular}
 \hspace{1cm}
\begin{tabular}{ |c|c|ccc|c|c| } 
 \hline
 & &\multicolumn{3}{c|}{Covariates} &  & \\
&  Set &$X_1$ & $X_2$ & $X_3$ & $A$ & $Y$ \\ 
 \hline
1& $\mathcal{R}$ &1.1 & 20 & \texttt{NA} & 1 &  10.1 \\ 
&  $\mathcal{R}$ &-6 & 45 & \texttt{NA} & 0  & 8.4  \\ 
$n$&  $\mathcal{R}$ &0 & 15 & \texttt{NA} & 1 & 14.5  \\ 
$n+1$&  $\mathcal{O}$ && $\dots$ & & $\dots$ & $\dots$   \\
& $\mathcal{O}$ &-2 & 52 & 3.4 & \texttt{NA} & \texttt{NA}  \\
&  $\mathcal{O}$ &-1 & 35 & 3.1 & \texttt{NA} & \texttt{NA} \\
$n+m$& $\mathcal{O}$ &-2 & 22 & 5.7 & \texttt{NA} & \texttt{NA} \\
 \hline
\end{tabular}
\end{center}
\caption{\textbf{Typical \modifbis{data} structure\sout{of the data considered}}, where a covariate would be available in the RCT, but not in the observational data set (left) or the reverse situation (right). In this specific example, $obs = \{1,2\}$ ($mis = \{3\}$), corresponds to common (resp. different) covariates in the two datasets.}
\label{tab:typicalsituation}
\end{figure}

%\bc{several estimators exist... (meme phrase que en bas). Juste un bout de transition "if all covariates are observed in both data sets"}
% \es{of what?}
\sout{As a consequence}\modifbis{In our context, due to (partially-)unobserved covariates,} estimators of the target population ATE may be implemented on $X_{o b s}$ only. 
%\bc{in our work we consider the G-formula defined as follows. en remplaçant X par X obs}
\modifbis{To make the notations clear}, we add a subscript \textit{obs} on any estimator applied on the set $X_{o b s}$ rather than $X$. 
Such estimators may suffer from bias due to  Assumption~\ref{a:cate-indep-s-knowing-X} violation, that is:
\begin{align*}
   Y(1) - Y(0) \independent S \mid X \quad \textrm{but} \quad     Y(1) - Y(0) \not \independent S \mid X_{obs}
\end{align*}

%\es{modif here} way better!
We denote \sout{$\hat \tau_{\text{\tiny G}, n, m, o b s}$}\modifbis{$\hat \tau_{n, m, o b s}$} any generalization estimator \modifbis{(G-formula, IPSW, AIPSW)} applied on the covariate set $X_{o b s}$ rather than $X$. 
\sout{This estimator uses the same observations, but is different from $\hat \tau_{\text{\tiny G}, n, m}$ (Definition~\ref{def:g-formula}) as it requires estimation of nuisance components $\hat \mu_{a, n, obs}(x_{obs})$ for $a \in \{0,1\}$, which corresponds to the estimated surface response on the set $X_{obs}$ instead of $X$.}

%\subsection{Estimators: the case of the G-formula}

\subsection{Expression of the missing-covariate bias}

\subsubsection{Model and hypothesis}
\modifbis{To analyze the effect of a missing covariate, we introduce a nonparametric generative model.}
\modifbis{In particular, we focus on zero-mean additive-error representation,}
%\jj{si tu veux gagner des lignes, tu peux dire un cas particulier du model du lemma 1 et enlever la phrase après la formule}, 
where the CATE depends linearly on $X$. We admit \modifbis{that} there exist $\delta \in \mathbb{R}^{p}$, $\sigma \in \mathbb{R}^{+}$\modifbis{, and a function $g:\mathcal{X} \to \mathbb{R}$,} such that:
\begin{equation}
\label{eq:linear-causal-model}
    Y= g(X) + A \langle X, \delta\rangle + \varepsilon, \qquad \text{where } \varepsilon \sim \mathcal{N}\left(0, \sigma^2\right),
\end{equation}
assuming $\tau(X) := \langle X, \delta\rangle$. \modifbis{In appendix (see Section~\ref{subsec:toward-non-param}) we prove why this assumption on the generative model for $Y$ does not come with a loss of generality.}
%The function $g(X)$ can be any function \jj{du coup pourquoi tu l'appelles f? il faut peut être dire explicitement tous les résultats marchent pour n'importe quel g}, and for example a linear function $f(X) := \langle X, \beta \rangle $ with $\beta\in \mathbb{R}^{p}$, but also any other function.

\modif{Under this model, the Average Treatment Effect (ATE) takes the following form:
\begin{equation*}
 \label{eq:ate-form-when-linear-model}
    \tau=\int \mathbb{E}\left[Y(1)-Y(0) \mid X=x\right] f(x) \mathrm{d}x=\int \langle \delta, x \rangle f(x) \mathrm{d}x=\delta^{T} \mathbb{E}[X] \modifbis{.}
\end{equation*}}
%\es{à écrire sous forme d'espérance ici etnon d'intégrale?}\bc{tout me va. De mémoire on a repris texto un commentaire de R2}
%\jj{tu n'as A que dans les données RCT, c'est pour ça que E(X/A=1)=E(X) car sinon, dans observationnel ce n'est pas le cas}
%Note that if the linear causal model \eqref{eq:linear-causal-model} is known, the ATE of any population can be computed as soon as all marginal expectations of this population are known.
%In this section we also assume Assumption~\ref{a:trans-sigma} holds, which supposes a multivariate normal distribution of the covariates along with the transportability of the variance-covariance matrix.

%As stated in Assumption~\ref{a:cate-indep-s-knowing-X}, and highlighted with a functional interpretation in subsection~\ref{subsec:toward-non-param}, o
Only variables that are both treatment effect modifier ($\delta_j \neq 0$) and subject to a distributional change between the RCT and the target population are necessary to generalize the ATE. If \sout{such a}\modifbis{some of these} key covariates are missing, the estimation of the target population ATE \modifbis{will be biased}. \modifbis{Our goal here is} \sout{We want} to express the bias of \sout{such an omitted} \modifbis{a missing} variable on the transported ATE. But first, we have to specify a context in which a certain permanence of the relationship between $X_{o b s}$ and $X_{m i s}$ in the two data sets holds. Therefore, we introduce the \textit{Transportability of covariate\sout{s} relationship} assumption.

\begin{assumption}[Transportability of covariate\sout{s} relationship]\label{a:trans-sigma} The distribution of $X$ is Gaussian, that is, $X \sim \mathcal{N}\left(\mu, \Sigma\right)$, and transportability of $\Sigma$ is true, that is, $X \mid S = 1 \sim \mathcal{N}\left(\mu_{RCT}, \Sigma\right)$.
\end{assumption} 

This assumption, \modifbis{and in particular, the transportability of $\Sigma$,}  is of major importance for the sensitivity analysis \modifbis{we develop below}\sout{proposed, in particular the transportability of $\Sigma$}. Indeed, as soon as the correlation pattern changes in amplitude and sign between the two populations, the sensitivity analysis can be invalidated.
The plausibility of Assumption~\ref{a:trans-sigma} can be partially assessed through a statistical test on $\Sigma_{obs,obs}$ for example \modifbis{a} Box's M test \citep{box1949boxmtest}, supported with vizualizations \citep{friendly2020covariancematrixviz}. A discussion can be found in the experimental study (Section~\ref{sec:simulation}) and in appendix (Section~\ref{appendix:assumption-8}), \modifbis{showing that this assumption is plausible in many situations}.

\subsubsection{Main result}

\begin{theorem}
\label{lemma:linear-gformula-unbiased}
 Assume that Assumptions~\ref{a:RCT-randomization}, \ref{a:eta_1_bounded}, \ref{a:repres}, \ref{a:cate-indep-s-knowing-X}, \ref{a:pos} (identifiability) hold, along with Model \eqref{eq:linear-causal-model} and Assumption~\ref{a:trans-sigma} (sensitivity model). Let B be the following quantity:
 \begin{align}
        B = - \sum_{j\in mis} \delta_j  \left( \EE[X_j] - \EE[X_j \mid S=1] - \Sigma_{j, o b s}\Sigma_{o b s, o b s}^{-1}(\EE[X_{o b s}]-\EE[X_{o b s}\mid S = 1]) \right),
        \label{eq_bias_definition}
\end{align}
 where $\Sigma_{o b s, o b s}$ is the submatrix of $\Sigma$ \sout{corresponding to observed index rows and columns, } \modifbis{composed of rows and columns corresponding to variables present in both data sets}. \modifbis{Similarly, $\Sigma_{j, o b s}$ is composed of the $j$th row of $\Sigma$ and has columns corresponding to variables present in both data sets.} \sout{  rows and columns corresponding to variables present in both data setsand $\Sigma_{j, o b s}$ is the row $j$ with column corresponding to observed index of $\Sigma$.} %\sout{Then, the two following statements hold. } 
 \modif{Consider a procedure  $\hat \tau_{n,m}$ that estimates $\tau$ with no asymptotic bias (for example the G-formula introduced in Definition~\ref{def:g-formula} under Assumption~\ref{a:consistency-mu}). Let $\hat \tau_{n, m, o b s}$ be the same procedure but trained on observed data only. Then
 \begin{align}
      \lim\limits_{n,m \to \infty} \mathbb{E}[\hat \tau_{n, m, o b s}] - \tau = B.
      \label{eq_th_bias}
     \end{align}

 } 
 %\modif{Then for any estimator $\hat \tau_{n,m}$ converging in $L^1$ norm toward $\tau$,
 %\begin{align}
 %     \lim\limits_{n,m \to \infty} \mathbb{E}[\hat \tau_{n, m, o b s}] - \tau = B.
  %   \end{align}
 %}
 %\bc{ici}

 %\begin{itemize}
     %\item  Granting %Assumption~\ref{a:consistency-mu}, the asymptotic bias of $\hat \tau_{G, n, m, o b s}$ is equal to B, that is
     %\begin{align}
      %\lim\limits_{n,m \to \infty} \mathbb{E}[\hat \tau_{G, n, m, o b s}] - \tau = B.
     %\end{align}
     %\item Similarly, granting %Assumption~\ref{a:extension-consistency-alpha}, the asymptotic bias of $\hat \tau_{\ipsw, n, m, o b s}$ is equal to B, that is
     %\begin{align}
     %\lim\limits_{n,m \to \infty} \mathbb{E}[\hat \tau_{\ipsw, n, m, o b s}] - \tau = B.
%     \end{align}
% \end{itemize}

%\jj{c'est drole du coup, m est fixé?}\bc{oups ! merci de reveler cette erreur encore!}
\end{theorem}
%\es{En fait, on peut effectivement formuler le théorème de façon plus générale en se donnant un estimateur $\hat \tau_{n,m}$ entraîné sur $X_{obs}$ et asymptotiquement non biaisé, c'est-à-dire tq son espérance tende vers $\EE[\tau(X_{obs})]$, écrire la conclusion générique sur le biais et particulariser ça à nos deux estimateurs} %super clear! thank you!

Proof is given in appendix (see Section~\ref{sec:appendix-original-proof}).

\paragraph{Comment on $L^1$-consistency}
\modifbis{Theorem~\ref{lemma:linear-gformula-unbiased} is valid for any $L^1$-consistent generalization estimator. In particular, we provide in appendix the detailed assumptions (similar as Assumption~\ref{a:consistency-mu}) under which two other popular estimators, IPSW and AIPSW, are asymptotically unbiased (see Section~\ref{sec:consistency}).}
Note that most of the existing works on estimating the target population causal
effect focus on identification or establish consistency
for parametric models or oracle estimators which are not bona fide
estimation procedures as they require knowledge of some population
data-generation mechanisms
\citep{cole2010generalizing, stuart2011use, lunceford04stratificationand, buchanan2018generalizing, correa2018selectionbias, dahabreh2019generalizing, egami2021covariate}. 
To our knowledge, no general $L^1$-consistency results for the G-formula, IPSW, and AIPSW procedures are available in a non-parametric case, when either the CATE or the weights are estimated from the data without prior knowledge.
\sout{Such general consistency results are established in this work. We detail such results in the main text only for the G-formula. IPSW and AIPSW results are detailed in appendix (see Section~\ref{sec:consistency}), while the proofs of all $L^1$-consistency results are in appendix (see Section~\ref{sec:proof-consistency-ipsw-gformula}). In Section~\ref{sec:linear-causal-model}, we build on these results to compute the bias associated to a missing covariate.}

\paragraph{What if outcomes are also available in the observational sample?} Who can do more can do less, therefore this outcome covariate could be dropped and the analysis conducted without it. But alternative strategies exist. First, the outcome in the observational data -- even if present in only one of the treatment group --  would allow to test for the presence or absence of a missing treatment effect modifier \citep{degtiar2021reviewgeneralizability} (see their Section 4.2), and therefore its strength. Moreover this would allow to rely on strategies to diminish the variance of the estimates \citep{huang2021leveraging}. Finally, the assumption of a linear CATE could be reconsidered and softened, but we let this question to future work.

\subsection{Sensitivity analysis}

The above theoretical bias $B$ (see equation~\ref{eq_bias_definition}) can be used to translate expert judgments about the strength of  missing covariates, which corresponds to sensitivity analysis. \modifbis{In the rest of our work, we exemplify Theorem~\ref{lemma:linear-gformula-unbiased} in scenarios for which there is a totally unobserved covariate (Section~\ref{subsec:totally-unobserved}), a missing covariate in RCT (Section~\ref{subsec:observed-obs}), or a missing covariate in the observational sample (Section~\ref{sec:nguyen-sensitivity-method}).}\sout{study the case 
Section~\ref{subsec:totally-unobserved} details the case of a totally unobserved covariate, Section~\ref{subsec:observed-obs} the case of a missing covariate in RCT, Section~\ref{sec:nguyen-sensitivity-method} the case of a missing covariate in the observational sample.} Section~\ref{subsec:austen-plots} completes the previous sections presenting an \sout{extension}\modif{adaptation} to \modif{sensitivity maps} \sout{Austen plots}. Finally Section~\ref{subsec:imputation} details the imputation case, and Section~\ref{sec:proxy-variable} the case of a proxy variable.
All these methods rely on different assumptions recalled in Table~\ref{tab:recap-linear-methods}.
%Note that one of the key assumption to use the partially-observed covariate is to suppose that \modifbis{ relations between the observed covariate and the partially missing one are unchanged between the RCT sample and the target population.} \modifbis{In our work, we posit that}\sout{This is why we suppose\sout{d} that} the variance-covariance matrix is the same in RCT sample and in the observational study (Assumption~\ref{a:trans-sigma}).
\begin{table}[!h]
\begin{center}
\begin{tabular}{lll}
\hline
\multicolumn{1}{c}{\textbf{Missing covariate pattern}} & \multicolumn{1}{c}{\textbf{Assumption(s) required}} & \multicolumn{1}{c}{\textbf{Procedure's label}} \\ \hline
Totally unobserved covariate                             & \textit{$X_{m i s} \independent X_{o b s}$} & \ref{algo:totally-unobserved}        \\
Partially observed in observational study                & \textit{Assumption~\ref{a:trans-sigma}}                            & \ref{algo:observed-obs}    \\
Partially observed in RCT                                & \textit{No assumption}                           & \ref{algo:observed-rct}   \\
Proxy variable                                           & \textit{Assumptions~\ref{a:trans-sigma} and \ref{a:proxy}}  & \ref{algo:proxy}  \\                    
\hline
\end{tabular}
\caption{\textbf{Summary of the assumptions and results pointer for all the sensitivity methods} according to the missing covariate pattern when the generative outcome is semi-parametric with a linear CATE \eqref{eq:linear-causal-model}.}
\label{tab:recap-linear-methods}
\end{center}
\end{table}
\vspace{-0.8cm}
\subsubsection{Sensitivity analysis when a key covariate is totally unobserved \label{subsec:totally-unobserved}}
%\bc{Suggestion de Erwan: changer tout le niveau des paragraphes, en mettant 3.3.1 puis 3.3.2, en remettant le paragraphe précédent, et remettre le tableau en 3.8 pour résumer. Comme ça on clarifie tout. Erwan est aussi d'accord pour mettre l'imputation et le proxy en annexe. Et mettre la section 3.8 qui est une conclusion dans la conclusion de l'article. Comme ça on aura plus envie d'aller lire les expériences. Pour proxy et missing variables, on dit bien le résultat dans le corps de texte. Et aussi, un autre argument est de dire que de toute facon on ne sort pas de sensitivity map. Autre idée, monter les sensitivity maps pour donner envie de lire et de donner une motivation. Dire "Voilà ce qu'on va proposer dans la section 3.3. Retenez juste qu'on va donner des cartes."}
%\textit{In this part, $X_{m i s}$ is considered totally unobserved.}\\
%\es{peut-on motiver ici le fait que l'indépendance est l'hypothèse la plus naturelle à faire lorsque la variable est manquante partout? ça permettrait une transition plus facile avec les parties suivantes}
\modifbis{When a covariate is totally unobserved, a common and natural assumption is to assume independence between this covariate and the observed ones \citep{imbens2003sensitivity}. Although strong, this assumption allows us to estimate the identification bias.}
\sout{The approach we propose relies on \cite{imbens2003sensitivity}'s prototypical framework. Suppose we have a totally unobserved key covariate, \sout{from} \modifbis{for} which the association with observed covariates is known and%similar in the trial and the observational study
\footnote{If $S$ was a child of covariates in a causal diagram, this independence assumption could be challenged. Indeed, in such a situation the association of $X_{obs}$ and $X_{mis}$ may be different between the trial sample and the target population due to collider bias when conditioning on sample membership. A nested-trial design would fall in this situation.}:
%\jj{mais aussi shifté?} % corrigé avec le mot clé key covariate
$X_{m i s} \independent X_{o b s}$  and $X_{o b s} \independent X_{m i s} \mid S = 1$.
%(Note that $X_{m i s}$ can also be considered as a remaining composite effect modifier after regression on $X_{o b s}$ \citep{nguyen2018sensitivitybis}). 
We also suppose that the complete parametric model is: }
%\begin{equation}
%\label{eq:sensitivity-model-linear}
%Y = g(X) + A \langle \,\delta, X_{o b s} \rangle + A\, \delta_{m i s} \,X_{m i s} + \varepsilon
%\end{equation}
%Those assumptions help defining a situation where the asymptotic bias can be quantified, as presented in Lemma~\ref{lemma:sensitivity-bias-linear}.

\begin{corollary}[Sensitivity model]
\label{lemma:sensitivity-bias-linear}
Assume that Model \eqref{eq:linear-causal-model} holds, along with Assumptions~\ref{a:RCT-randomization}, \ref{a:eta_1_bounded}, \ref{a:repres}, \ref{a:cate-indep-s-knowing-X}, \ref{a:pos}, and \ref{a:trans-sigma}. \modifbis{Assume also that} $X_{m i s} \independent X_{o b s}$, \sout{ and}  $X_{m i s} \independent X_{o b s} \mid S = 1$.\sout{, along with the sensitivity model \eqref{eq:sensitivity-model-linear}.} %\es{Pourquoi est-ce qu'on suppose Model 5 ET Model 8 ?}\bc{tu as raison, on peut "juste" ajouter l'hypothèse d'indépendance.}
\modif{Consider a procedure  $\hat \tau_{n,m}$ that estimates $\tau$ with no asymptotic bias. Let $\hat \tau_{n, m, o b s}$ be the same procedure but trained on observed data only. Then
$$ \lim\limits_{n,m \to \infty} \mathbb{E}[\hat \tau_{n, m, o b s}] - \tau  = - \delta_{m i s} \, \Delta_m$$}
where $\Delta_m = \mathbb{E}[X_{m i s}] - \mathbb{E}[X_{m i s} \mid S=1]$. 
\end{corollary}
%\bc{ici aussi}
%\es{j'ai changé des mini trucs dans ce paragraphe, tu valides?}\bc{oui ! :)}
Corollary~\ref{lemma:sensitivity-bias-linear} is a direct consequence of  Theorem~\ref{lemma:linear-gformula-unbiased}, particularized for the case where 
$X_{o b s} \independent X_{m i s}$ and $X_{o b s} \independent X_{m i s} \mid S = 1$. In this expression, $\Delta_m$ and $\delta_{m i s}$ are called the sensitivity parameters.
To estimate the bias implied by an unobserved covariate, we have to determine how strongly $X_{m i s}$ is a treatment effect modifier (through $\delta_{m i s}$), and how strongly it is linked to the trial inclusion (through the shift between the trial sample and the target population $\Delta_m = \EE[X_{m i s}] - \EE[X_{m i s} \mid S=1]$). 
%Note that this setting is also proposed by \cite{Andrews2018ValidityBias}, but has small differences that we in Table~\ref{tab:comparison-imbens-sensitivity}. \cite{nguyen2018sensitivitybis} also mention a similar method in their appendix, but do not perform simulation or particular implementation.
Table~\ref{tab:comparison-imbens-sensitivity} summarizes the similarities and differences with \cite{imbens2003sensitivity}, \cite{Andrews2018ValidityBias}'s approaches, and our approach.

\begin{table}[!h]
\begin{center}
\footnotesize
\begin{tabular}{|l|l|l|l|}
\hline
& \multicolumn{1}{c|}{\cite{imbens2003sensitivity}}              & \cite{Andrews2018ValidityBias} &  \multicolumn{1}{c|}{Sensitivity model} \\ \hline
Assumption on covariates  &  $X_{m i s} \independent X_{o b s}$ &  $X_{m i s} \independent X_{o b s}$  & $X_{m i s} \independent X_{o b s}$   \\ \hline
Model on $Y$ & Linear model & Linear model  & Linear CATE \eqref{eq:linear-causal-model} \\ \hline
Other assumption & Model on $A$ (\texttt{logit}) &  Model on $S$  (\texttt{logit})  & - \\ \hline
First sensitivity  parameter & Strength on $Y$, using $\delta_{m i s}$ &  Strength on $Y$, using $\delta_{m i s}$                                &              Strength on $Y$, using $\delta_{m i s}$                          \\ \hline
Second sensitivity parameter &   Strength on $A$ (\texttt{logit}'s
coefficient) & Strength on S (\texttt{logit}'s coefficient) &
$\Delta_m$: shift of $X_{mis}$      \\ \hline
\end{tabular}
\caption{Summary of the differences \sout{in} between \cite{imbens2003sensitivity}'s method, being a prototypical method for sensitivity analysis for observational data and hidden counfounding, \cite{Andrews2018ValidityBias}'s method and our method.}
\label{tab:comparison-imbens-sensitivity}
\end{center}
\end{table}
%\jj{dans le tableau avant model on A ou S, tu peux mettre parametric? ou carrement logistic?, assumption on the sampling mechanism? et assumption on the sampling  mechanism, peut être plus precis? stengt on S?}

%\sout{Our approach has different assumptions than \cite{Andrews2018ValidityBias}'s model,
%with no assumption needed on the trial selection process. In addition,
%the sensitivity parameter accounting for the distributional change in
%between the RCT and the observational study is the shift $\Delta_m$ on
%the missing covariate $X_{mis}$, which is a more interpretable quantity compared to a parametric assumption on the sampling mechanism (for example a logistic regression).}
%\sout{For example, medical experts or economists can postulate an \textit{a priori} on the shift that can be translated on the bias.} 
\modifbis{In the setting of Corollary~\ref{lemma:sensitivity-bias-linear}, sensitivity analysis can be carried out using Procedure~\ref{algo:totally-unobserved} described below}
\sout{In practice in this setting of a completely missing covariate sensitivity analysis can be proceed using Procedure~\ref{algo:totally-unobserved}}.
To represent the \modifbis{bias magnitude} \sout{amplitude of bias} \modifbis{as a function of}\sout{ depending on} the sensitivity parameters \sout{values}, \modifbis{we develop} a graphical aid \sout{derived}\modifbis{adapted} from  \modif{sensitivity map\modifbis{s}} \citep{imbens2003sensitivity, veitch2020sense} \sout{is developed and adapted toward a heatmap}, see Section~\ref{subsec:austen-plots}.

\begin{algorithm}[H]
\label{algo:totally-unobserved}
    \SetAlgorithmName{Procedure}{Procedure}{Totally unobserved covariate} %\todo[inline]{GV: I prefer to avoid "H" placement} % ici le problème est que on a plein de petites procédures qui se mélangent comme les sous parties sont petites
    \SetKwInOut{Input}{input}
    \SetKwInOut{Init}{init}
    \SetKwInOut{Parameter}{param}
    \caption{A totally-unobserved covariate}
    \Init{\quad $\delta_{mis}:= [\dots]$\tcp*{Define \modifbis{a} range for plausible $\delta_{mis}$ values}}
     \Init{\quad $\Delta_{m}:=[\dots]$ \tcp*{Define \modifbis{a} range for plausible $\Delta_{m}$ values}}
     Compute all possible bias $-\delta_{mis} \Delta_{m}$ (see Lemma~\ref{lemma:sensitivity-bias-linear})
     \\
\Return{Sensitivity map}
\end{algorithm}

\smallskip

%\modif{Finally, this set-up could be extended toward a less parametric model, suggesting that the CATE follows,

%\begin{equation*}
%    \tau(x) = g(x_{obs}) + \delta_{mis}x_{mis},
%\end{equation*} where $g(x_{obs})$ can be any function of the set of observed covariate. We leave this for future work.}
%A comment is provided in appendix (see Section~\ref{appendix:possible-extension-nonparam}).}

A partially-observed covariate could always be removed so that this sensitivity analysis could be conducted for every missing data patterns (the variable being missing in the RCT or in the observational data).\sout{But this simple method has drawbacks.} \sout{But the independence assumption ($X_{m i s} \independent X_{o
b s}$) needed here is a strong one, and} \modifbis{However} dropping a partially-observed
covariate \sout{seems inefficient as it discards available information}\modifbis{\textit{(i)} \sout{does not make the most of the available information} is inefficient as it discards available information,} \textit{(ii)} \modifbis{amounts to considering the variable as totally unobserved which, in turn, leads us to assume \sout{puts us in the situation of assuming} independence between observed and unobserved covariates, a very strong hypothesis}.
Therefore, in the following subsections, we propose methods that use the partially-observed covariate -- when available -- to improve the bias estimation\sout{ and to remove this independence assumption}.%\bc{Erwan} \es{j'ai mofifié un peu, est-ce que ça te convient?} \bc{oui :) !} %\es{retravailler la formulation ici} %\jj{c'est pas que pour remove l'indep assumption? je mettrais and non?}.

\subsubsection{Sensitivity analysis when a key covariate is partially \sout{un}observed }\label{subsec:sensitivity-partially-obs}

When partially available, we propose to use $X_{m i s}$ to have a better
estimate of the bias. Unlike the above, this approach does not need 
the partially observed covariate to be independent of all other
covariates, but rather captures the dependencies from the data.
%Still, an assumption on how covariates are linked has to be done to rely on a common structure between the RCT sample and the target population (see Assumption~\ref{a:trans-sigma}).

%This is possible with the cost of an assumption on the relation between the covariates in the two data sets. Therefore, we assume assumption~\ref{a:trans-sigma} holds.

\paragraph{Observed in observational study \label{subsec:observed-obs}} 
%\textit{In this part, $X_{m i s}$ is considered unobserved in the RCT.}\\
%In the case of the g-formula, when estimating the linear model on the set $X_{o b s}$, for example with the classic Ordinary Least Squares (OLS), the coefficients may be badly estimated if the partially observed covariate is not independent with the observed one \citep{buhlmann2020deconfounding, cinelli2020sensitivityextending}. 
%The error on coefficients can be given explicitly if we have the orthogonal projection of the missing covariate on the set of observed covariate, which would help us to understand the amplitude of the bias. 
%Still, we can already perceive that if the partially observed covariate is highly correlated with an observed one then the error on coefficients estimation can be important, but the error on the target ATE estimation won't be so important. We illustrate such a situation in the simulation part, where the coefficients of the regression are badly estimated, but where the generalised ATE has a very small bias (see appendix, Table~\ref{tab:simulation-linear-effects-on-coefficients}). 
Suppose one key covariate $X_{m i s}$ is observed in the observational study, but not in the RCT. 
%Droping this partially-observed covariate would lead to a bias on the estimation of the target population ATE. 
Under Assumption~\ref{a:trans-sigma}, the asymptotic bias of any $L^1$-consistent estimator $\hat \tau_{n,m, o b s}$
is derived in Theorem~\ref{lemma:linear-gformula-unbiased}.
The quantitative bias is informative as it depends only on the regression coefficients $\delta$, and on the shift in expectation between covariates. \modifbis{Indeed, the bias term can be decomposed as follows:} \sout{Furthermore, parts of the bias term can be estimated from the data, helping to reveal sensitivity parameters:}
\begin{equation*}
\label{eq:sensitivity-when-observed-in-obs}
%\tau -  \lim\limits_{n,m \to \infty} \mathbb{E}[\hat \tau_{G, n, m, o b s}]
         B =  - \underbrace{\delta_{m i s}}_{\text{ $X_{m i s}$'s strength}}  \left( \underbrace{\EE[X_{m i s}] - \EE[X_{m i s} \mid S=1]}_{\text{Shift of $X_{m i s}$: } \Delta_m} - \underbrace{\Sigma_{m i s, o b s}\Sigma_{o b s, o b s}^{-1}(\EE[X_{o b s}]-\EE[X_{o b s}\mid S = 1])}_{\text{Can be estimated from the data}} \right).
\end{equation*}

\modifbis{Using the observational study where the necessary covariates are all observed, one can estimate the covariance term $\Sigma_{m i s, o b s}\Sigma_{o b s, o b s}^{-1}$ together with the shift for the observed set of covariates}. \modifbis{Unfortunately, }the remaining parameters $\delta_{m i s}$, corresponding to \modifbis{the coefficient of} the missing covariates in the complete linear model, and $\Delta_m = \mathbb{E}[X_{m i s}] - \mathbb{E}[X_{m i s} \mid S = 1]$ are not identifiable from the observed data. These two parameters correspond respectively to the strength of the treatment effect modifier and the distributional shift \sout{importance} of the missing covariate.
\sout{They can be turned into an estimate of the bias using this explicit formulation.}\modifbis{These two quantities are used as sensitivity parameters to estimate a plausible range of the bias (see Procedure~\ref{algo:observed-obs}).} %\es{je ne comprends pas la phrase}
\sout{In practice, we can take this bias into account to propose a refined sensitivity analysis, described in Procedure~\ref{algo:observed-obs}.} 
\sout{In practice, when facing the case of a variable observed only in the observational study, the procedure to follow to perform a sensitivity analysis is 
%\jj{In practice, when facing the case of a variable observed only in the observational study, the procedure to follow, procédure pour quoi? pour avoir le biais? pour avoir une plage de variation de l'ate} 
%\bc{merci ! et pour expliquer, sensitivity analysis c'est clair ? ou je précise ?}
summarized in procedure~\ref{algo:observed-obs}.}
Simulations illustrate how these sensitivity parameters can be used, along with graphical visualization derived from \sout{Austen plots}\modif{sensitivity maps} (see Section~\ref{sec:simulation}).

\begin{algorithm}[H]
\label{algo:observed-obs}
    \SetAlgorithmName{Procedure}{Procedure}{Observed in observational}
    \SetKwInOut{Input}{input}
    \SetKwInOut{Init}{init}
    \SetKwInOut{Parameter}{param}
    \caption{Observed in observational}
    \Init{\quad $\delta_{mis}:= [\dots]$\tcp*{Define \modifbis{a} range for plausible $\delta_{mis}$ values}}
     \Init{\quad $\Delta_{mis}:=[\dots]$ \tcp*{Define \modifbis{a} range for plausible $\Delta_{mis}$ values}}
     Estimate $\Sigma_{o b s, o b s}$, $\Sigma_{m i s, o b s}$, and $\mathbb{E}[X_{o b s}]$ on the observational dataset;\\
     Estimate $\mathbb{E}[X_{o b s} \mid S = 1]$ on the RCT dataset; \\
     Compute all possible bias\modifbis{es} for \modifbis{the predefined} range\modifbis{s} of $\delta_{mis}$ and $\Delta_{mis}$, according to Theorem~\ref{lemma:linear-gformula-unbiased}.\\
\Return{Sentivity map}
\end{algorithm} 

\paragraph{Data-driven approach to determine sensitivity parameter}\label{para:remark-on-sensitiviy-delta}
Note that \modifbis{guessing a good range} \sout{giving a range} for the shift $\Delta_{mis}$ is probably easier\sout{,} than giving a range \sout{of} \modifbis{for} the coefficients $\delta_{mis}$. 
We propose a \sout{empirical} \modifbis{data-driven} method to \sout{have a practical data-driven estimation of} \modifbis{estimate} $\delta_{mis}$. First, learn a linear model of $X_{mis}$ from observed covariates $X_{obs}$ on the observational data, then impute the missing covariate in the trial, and finally obtain $\hat \delta_{mis}$ with a Robinson procedure on the imputed trial data \citep{robinson1988semiparam, wager2020courses, nie2020quasioracle}. \modifbis{ The Robinson procedure is recalled in Appendix (see Section~\ref{appendix:robinson})} \sout{This can give an idea of a possible value for $\delta_{mis}$.} This method is used in the semi-synthetic simulation (see Section~\ref{sec:STAR}). 

\paragraph{Observed in the RCT \label{sec:nguyen-sensitivity-method}} %\textit{In this part, $X_{m i s}$ is considered unobserved in the observational study.} \\
%\gv{Indique clairement ce qui n'est pas de toi: il y a-t-il quelque chose de nouveau dans le paragraph si dessous?}
%\bc{oui merci !! L'estimation des paramètres par Robinson.}
%\es{which one?}
\modifbis{The} method \modifbis{we propose here was} already developed by \cite{nguyen2017sensitivity, nguyen2018sensitivitybis}, and we briefly recall its principle in this part. Note that we extend this method by considering a semi-parametric model \eqref{eq:linear-causal-model}, while they considered a completely linear model.
For this missing covariate pattern, only one sensitivity parameter is necessary. 
As the RCT is the complete data set, the regression coefficients $\delta$ of \eqref{eq:linear-causal-model} 
can be estimated for all the key covariates, leading to an estimate $\hat \delta_{m i s}$ for the partially unobserved covariate. \cite{nguyen2017sensitivity, nguyen2018sensitivitybis} showed that:
%\jj{tu peux mettre sous la forme de biais aussi? car avant tout est avec un biais}
%\bc{oui, je fais ça à tête reposée demain matin, pour être sure de pas mettre de bêtises}
%\bc{en fait ça n'a pas trop de sens car toute la partie de simga mis obs n'a pas lieu d'être}

\begin{equation}
\label{eq:ngyuen-sensitivity-method}
  \tau=\left\langle\delta_{o b s}, \mathbb{E}\left[X_{o b
s}\right]\right\rangle \,+\, \langle\delta_{m i s},\;
\underbrace{\mathbb{E}\left[X_{m i s}\right]}_{\text {Unknown }}\rangle . 
\end{equation}

In this case, and as the influence of $X_{m i s}$ as a treatment effect modifier can be estimated from the data t\modifbis{h}rough $\hat \delta_{m i s}$, only one sensitivity parameter is needed\modifbis{, namely $\mathbb{E}[X_{m i s}]$. Therefore, we assume to be given a range of plausible values for $\mathbb{E}[X_{m i s}]$, for example according to a domain expert prior.}
\sout{Here, it consists of a range of plausible $\mathbb{E}[X_{m i s}]$, for example according to a domain expert prior, which is also interpretable.}

Note that $\delta_{m i s}$ can be estimated
following a Robinson procedure.\sout{ \citep{robinson1988semiparam, wager2020courses, nie2020quasioracle}, in particular in the case of a semi-parametric model with a linear CATE as in \eqref{eq:linear-causal-model}. The Robinson procedure, also called R-learner, is recalled in Appendix (see Section~\ref{appendix:robinson}).} This allows \sout{to}extend\modifbis{ing} \citep{nguyen2018sensitivitybis}'s work to the semi-parametric case. \modifbis{Softening even more the parametric assumption where only $X_{mis}$ is additive in the CATE is a natural extension, but out of the scope of the present work.}
\sout{Note that in this missing covariate pattern further extensions are possible, in particular softening the parametric assumption where only $X_{mis}$ is additive in the CATE.} %\es{la formulation est un peu étrange: on ne sait pas si on l'a fait, si on y croit ou si c'est une simple piste de recherche}\bc{une extension naturelle du travail est celle la, et nous laissons cela à de futurs travaux}
%\modif{This research work is limited to a linear CATE assuming Model~\ref{eq:Ymodel}, but possible extension with even less parametric assumptions could be proposed (see Appendix~\ref{appendix:possible-extension-nonparam}).}
%We detail such approaches in appendix (see section~\ref{appendix:estimating-coefficients-linear-cate}). 

\begin{algorithm}[H]
\label{algo:observed-rct}
    \SetAlgorithmName{Procedure}{Procedure}{Observed in RCT \citep{nguyen2017sensitivity}}
    \SetKwInOut{Input}{input}
    \SetKwInOut{Init}{init}
    \SetKwInOut{Parameter}{param}
    \caption{Observed in RCT}
     \Init{\quad $\mathbb{E}[X_{m i s}]:=[\dots]$ \tcp*{Define \modifbis{a} range for plausible  values of $\mathbb{E}[X_{m i s}]$}}
     Estimate $\delta$ with the Robinson procedure, that is: \\
     Run a non-parametric regression $Y \sim X$ on the RCT, and denote
$\hat m_n(x) = \mathbb{E}[Y\mid X = x, S = 1]$%\todo[inline]{Should this be $\mid X, S=1]$?} YES, thanks 
 the obtained estimator; \\
     Define the transformed features $\tilde Y = Y - \hat m_n(X)$ and $\tilde Z = (A - e_1(X)) X$. \\
     Estimate $\hat \delta$ run\sout{n}ning the OLS regression on $\tilde Y \sim \tilde Z$; \\
     Estimate $\mathbb{E}[X_{o b s}]$ on the observational dataset; \\
     Compute all possible bias\modifbis{es} for \modifbis{the} range of $\mathbb{E}[X_{m i s}]$ according to \eqref{eq:ngyuen-sensitivity-method}.\\
\Return{Sensitivity map}
\end{algorithm}

\subsubsection{Vizualization: sensitivity maps \label{subsec:austen-plots}}
%\textit{In this part, \sout{the} \modifbis{we propose a new sensitivity map in the case where} \sout{\modif{sensitivity map} proposed concerns the case where} two sensitivity parameters are needed\modifbis{. This corresponds to the case where a}\sout{, which corresponds to the} covariate is either completely missing, or missing in the RCT. When the covariate is missing in the observational data, a simple univariate plot can be drawn.}
%\es{from now on?}
\modifbis{From now} on, each of the sensitivity method suppose to translate
sensitivity parameter(s) and to compute the range of bias associated. A
last step is to communicate or visualize the range of bias\modifbis{es}, which is 
slightly more complicated when there are two sensitivity parameters.
\modif{Sensitivity map} is a way to aid such judgement
\citep{imbens2003sensitivity,veitch2020sense}. It consists in having a two-dimensional plot,
each of the axis representing the sensitivity parameter, and the solid
curve is the set of sensitivity parameters that leads to an estimate that
induces a certain bias' threshold. Here, we adapt this method
to our settings with several changes.
Because coefficients interpretation is hard, a typical practice is to translate a regression coefficient into a partial $R^2$. For example, \cite{imbens2003sensitivity} prototypical example proposes to interpret the two parameters with partial $R^2$. In our case, a close quantity can be used:
\begin{equation}
R^2 \sim \frac{\mathbb{V}[\delta_{m i s}\, X_{m i s}]}{\mathbb{V}[\sum_{j \in obs} \hat \delta_j\, X_j]}
\end{equation}
where the denominator term is obtained when regressing $Y$ on $X_{o b
s}$. If this $R^2$ coefficient is close to 1, then the missing covariate
has a similar influence on $Y$ compared to other covariates. On the
contrary, if $R^2$ is close to 0, then the impact of $X_{m i s}$ on $Y$
as a treatment effect modifier is small compared to other covariates. But
in our case one of the sensitivity parameter is really palpable as it is
the covariate shift $\Delta_{m}$. 
%Beyond ease of interpretation, this parameter allows to get the sign of the bias \jj{ça veut dire que dans imbens classique c'est pas possible?}, which is also of great interest. For example practitioners may be worried if the bias is negative, that is generalization tends to favor a lower treatment effect. But if rather the sensitivity analysis points toward a greater treatment effect, conclusions from experts may totally change. 
We advocate keeping the regression coefficient and shift as sensitivity parameter rather than a $R^2$ to help practitioners as it allows to keep the sign of the bias, than can be in favor of the treatment or not and help interpreting the sensitivity analysis. Furthermore, even if postulating an hypothetical value of a coefficient is tricky, when the covariate is partially observed \modifbis{an imputation procedure can be proposed} to have a grasp of the coefficient true value.

% This figure needs to be put before, as it appears delayed
\begin{figure}[!t]
    \centering
    \includegraphics[width=0.45\textwidth]{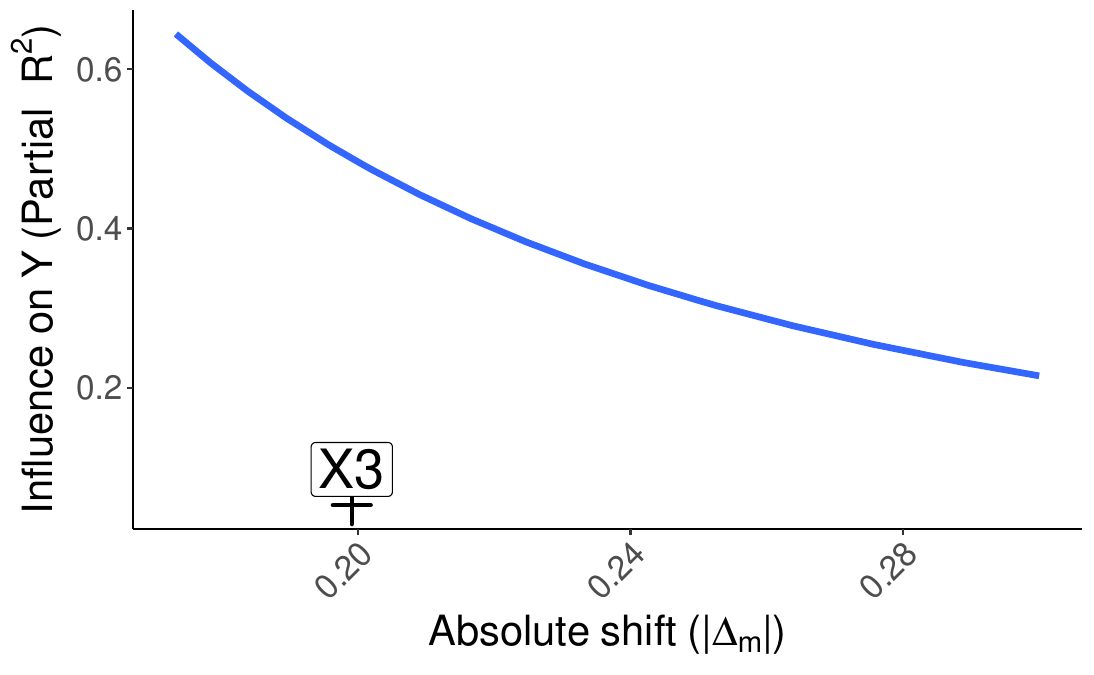} \hspace{1cm}
    \includegraphics[width=0.45\textwidth]{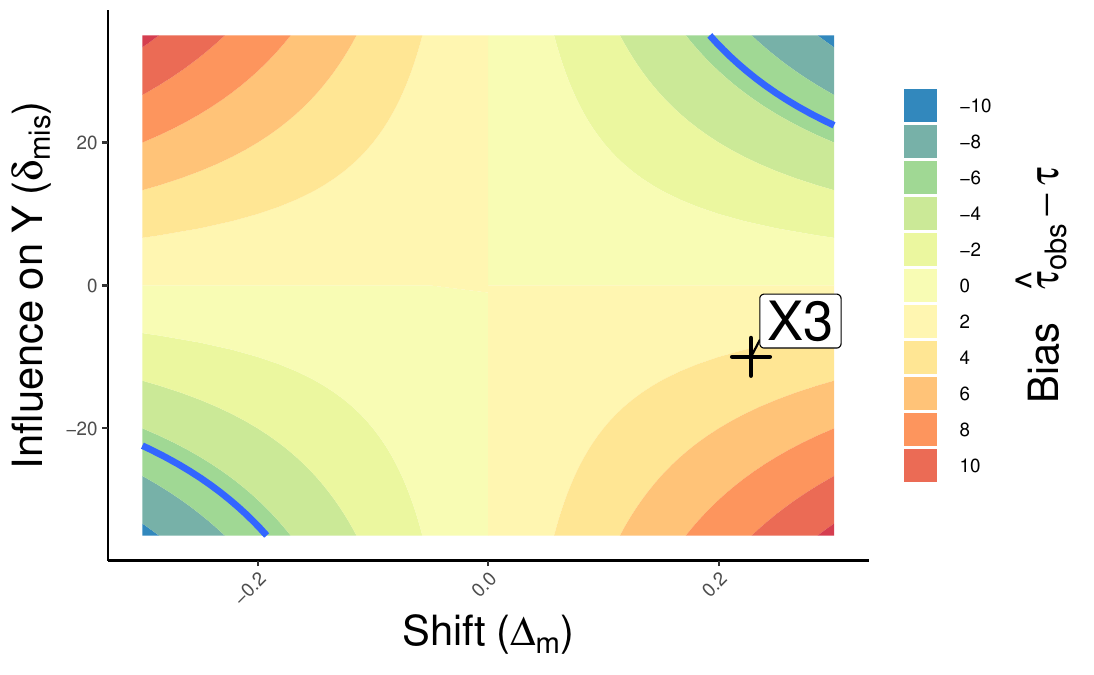}
	\caption{\textbf{\sout{Austen plots}\modif{Sensitivity maps}}: On this figure $X_3$ is supposed to be a missing covariate.
	(Left) Regular \sout{Austen plot}\modif{sensitivity map} showing how strong an key covariate would need to be to induce a bias of $\sim 6$ in function of the two sensitivity parameters $\Delta_m$ and partial $R^2$ when a covariate is totally unobserved. 
	%The labels show the influence strength of observed covariates, given all other covariates. For example an unobserved key covariate with as much influence and shift as $X_1$ might lead to a bias above 6, as it is illustrated with $X_1$ label above the blue solid curve.
	(Right) The exact same simulation data are represented, while using rather $\delta_{m i s}$ than the partial $R^2$, and superimposing the heatmap of the bias which allows to reveal the general landscape along with the sign of the bias.} 
	\label{fig:totally-missing-linear}
\end{figure}

On Figure~\ref{fig:totally-missing-linear} we present a glimpse of the simulation result, to introduce the principle of the \sout{Austen plot}\modif{sensitivity map}, with on the left the representation using $R^2$ and on the right a representation keeping the raw sensitivity parameters.
%Figure~\ref{fig:totally-missing-linear} (right), close to the classical Austen plot (left) to highlights the pros and cons of each representations.
In this plot, we consider the covariate $X_3$ to be missing, so that we represent \textit{what would be the bias if we missed $X_3$?},
%\jj{c'est le biais par rapport à un ATE déjà biaise }. 
The associated sensitivity parameters are represented on each axis.
In other word, the \sout{Austen plot}\modif{sensitivity map} shows how strong an unobserved key covariate would need to be to induce a bias that would force to reconsider the conclusion of the study because the bias is above a certain threshold, that is represented by the blue line. For example in our simulation set-up, $X_3$ is below the threshold as illustrated on Figure~\ref{fig:totally-missing-linear}. The threshold can be proposed by expert, and here we proposed the absolute difference between $\hat \tau_{n,m,obs}$ and the RCT estimate $\hat \tau_1$ as a natural quantity.
In particular, we observe that keeping the sign of the sensitivity parameter allows to be even more confident on the direction of the bias.

\subsubsection{Partially observed covariates: imputation \label{subsec:imputation}}
Another practically appealing solution is to impute the partially-observed covariate, based on the complete data set (whether it is the RCT or the observational one) following Procedure~\ref{algo:imputation}. We analyse theoretically in this section the bias of such procedure in Corollary~\ref{lem:imputation}, and show there is no gain in linearly imputing the partially-observed covariate.\sout{\modif{, that can be understood as a direct consequence of Assumption~\ref{a:trans-sigma} and Theorem~\ref{lemma:linear-gformula-unbiased}}.}
%Intuitively, since the covariate distribution differs between the two samples, learning the imputation of $X_{mis}$ on one data set to apply it on the other one  \jj{là je ne sais pas, tu crée du biais car tu ne créé pas d'information en plus en imputant, comme tu imputes avec de l'existant, pas parceque les distributions changent}. 
%This is quantified precisely in Lemma~\ref{lem:imputation}.

To ease the mathematical analysis, we focus on a G-formula estimator
based on oracles quantities: the best imputation function and the surface
responses are assumed to be known. While these are not available
in practice, they can be approached with
consistent estimates of the imputation functions and the surface
responses. 
The precise formulation\modifbis{s} of our oracle estimates are given in Definition~\ref{def:imputation-procedure-missing-obs} and Definition~\ref{def:imputation-procedure-missing-rct}.

\begin{definition}[Oracle estimator when covariate is missing in the observational data set]\label{def:imputation-procedure-missing-obs} 
Assume that the RCT is complete and that the observational sample contains one missing covariate $X_{mis}$. We assume that we know
\begin{itemize}
    \item[(I)] the true response surface\modifbis{s} $\mu_1$ and $\mu_0$
    
    \item[(II)] the true linear relation between $X_{mis}$ as a function of $X_{obs}$\sout{ and}. %\es{Houston, we have a problem}\bc{Houston, ça doit venir d'une inversion des hypothèses (et d'ailleurs c'est pas logique avec la suite)}
\end{itemize}
Our oracle estimate $\hat \tau_{\text{\tiny G}, \infty, m, imp}$ consists in applying the G-formula with the true response surfaces $\mu_1$ and $\mu_0$ (I) on the observational sample, in which the missing covariate has been imputed by the best (linear) function (II). %\jj{les 1 et 2 sont inversés?}. 
%Suppose that the linear expression linking $X_{obs}$ to $X_{mis}$ in the trial population is known and used to impute $\hat X_{mis}$ in the observational population so that,
%\begin{equation*}
%    \hat X_{mis} := c_o + \sum_{j \in obs} c_j X_j,
%\end{equation*}
%where $c_0, c_1, \dots c_{\# obs}$ are the true linear coefficients. We denote $\hat \tau_{G, \infty, m, imp}$ the oracle G-formula estimator using the imputed observational sample and the infinite RCT sample.
\end{definition}

%\es{Séparer en deux définition suivant le cas RCT complet ou Obs complet} 

%the this framework can be and imputing it  This part present theoretical results with an oracle estimator, so that the focus is on the bias due to the missing covariate and the imputation, and not on the finite sample estimation imprecision. To pursue this task, two oracles estimators are introduced (Definition~\ref{\label{def:imputation-procedure}}), and their bias is presented in Lemma~\ref{lem:imputation}.
\begin{definition}[Oracle estimator when covariate is missing in the RCT data set]\label{def:imputation-procedure-missing-rct} 
Assume that the observational sample is complete and that the RCT contains one missing covariate $X_{mis}$. We assume that we know
\begin{itemize}
    \item[(I)] the true linear relation between $X_{mis}$ as a function of $X_{obs}$, which leads to the optimal imputation $\hat{X}_{mis}$\sout{ and},
    
    \item[(II)] the \sout{true response surfaces}\modifbis{conditional expectations}, $\EE[Y(a) | X_{obs}, \hat{X}_{mis}, S=1]$, for $a \in \{0,1\}$. 
\end{itemize}
Our oracle estimate $\hat \tau_{G, \infty, \infty, imp}$ consists in optimally imputing the missing variable $X_{mis}$ in the RCT (I). Then, the G-formula is applied to the observational sample, with the surface responses that have been perfectly fitted on the completed RCT sample. 
%, using this completed data set to learn the response the applying the G-formula with the true response surfaces $\mu_1$ and $\mu_0$ (I) on the observational sample, in which the missing covariate has been imputed by the best (linear) function (II). 
%Suppose that the linear expression of $X_{obs}$ to $X_{mis}$ in the observational population is known and used to impute $\hat X_{mis}$ in the RCT population so that,
%\begin{equation*}
 %   \hat X_{mis} := c_o + \sum_{j \in obs} c_j X_j,
%\end{equation*}
%where $c_0, c_1, \dots c_{\# obs}$ are the true linear coefficients. We denote $\hat \tau_{G, \infty, \infty, imp}$ the oracle G-formula estimator using the imputed infinite RCT sample and the infinite observational sample.
\end{definition}

\begin{corollary}[Oracle bias of imputation in a Gaussian setting]\label{lem:imputation}
Assume that the CATE is linear  \eqref{eq:linear-causal-model} and that Assumption~\ref{a:trans-sigma} holds. Let B be the following quantity:
 \begin{equation*}
        B = \delta_{mis}  \left( \EE[X_{mis}] - \EE[X_{mis} \mid S=1] - \Sigma_{j, o b s}\Sigma_{o b s, o b s}^{-1}(\EE[X_{o b s}]-\EE[X_{o b s}\mid S = 1]) \right).
\end{equation*}
\begin{itemize}
    \item \textbf{Complete RCT.} Assume that the RCT is complete and that the observational data set contains a missing covariate $X_{mis}$. Consider the oracle estimator $\hat{\tau}_{\text{\tiny G}, \infty, m, imp}$ \sout{defined} in Definition~\ref{def:imputation-procedure-missing-obs}. Then, 
    \begin{align*}
    \tau - \lim\limits_{m \to \infty}  \E[\hat \tau_{\text{\tiny G}, \infty, m, imp}] = B
    \end{align*}
    
    \item \textbf{Complete Observational.} Assume that the observational data set is complete and that the RCT contains a missing covariate $X_{mis}$. Consider the oracle estimator $\hat{\tau}_{\text{\tiny G}, \infty, \infty, imp}$ \sout{defined} in Definition~\ref{def:imputation-procedure-missing-rct}. Then, 
    \begin{align*}
    \tau - \E[\hat{\tau}_{\text{\tiny G}, \infty, \infty, imp}] = B
    \end{align*}
\end{itemize}
%\es{je te laisse compléter avec les formules, et me dire ce que tu en penses. Bon, ça dépasse dans la marge, c'est pas beau du tout. C'est l'apocalypse!!!} % :)
%Consider the orcale G-formula estimator on a linearly imputed data set (Definition~\ref{def:imputation-procedure}), and , then the asymptotic bias of $\hat \tau_{G, \infty, m, imp}$ and the bias of $\tau_{G, \infty, \infty, imp}$ are shown to be equal to the asymptotic bias of $\hat \tau_{G,n,m,obs}$ (the G-formula estimator with the omitted variable).
\end{corollary}

Derivations are detailed in appendix (see Subsection~\ref{proof:imputation-no-bias}). 
Corollary~\ref{lem:imputation}\sout{ proves that, when the data are Gaussian,
learning the imputation on one data set and applying it on the other one
leads to a biased estimate, where the asymptotic value of the bias is quantified} \modif{\sout{provides} \modifbis{highlights} that there is no gain in linearly imputing the missing covariate compared to dropping it.}
Simulations (Section~\ref{appendix:synthetic-simulation-extension}) show that the average bias
of a finite-sample imputation procedure is similar to the bias of $\hat \tau_{\text{\tiny G},\infty,\infty,obs}$.
%This is a direct consequence of Theorem~\ref{lemma:linear-gformula-unbiased}. % More involved techniques would be required to obtain the same results with consistent estimates instead of oracles, and we believe that such results are beyond the scope of this paper.
%Note that this result is true in the context of a multivariate Gaussian distribution.
%\es{ici il faudrait mettre $\hat \tau_{G,\infty,m,obs}$ ou $\hat \tau_{G,\infty, \infty,obs}$ non ?} \bc{Effectivement, si on met asymptotique ça irait ? Pour faire le lien avec le theoreme 4?}. 
%This suggests that considering the oracle quantities instead of (finite-sample) estimators yields the correct value for the bias, which results mostly from the distributional shift between the two data sets and not from an intrinsic bias of the imputation or surface response estimates. 
%\bc{ici je ne dirais pas "correct", car je ne sais pas si on attendait spécialement cela?} \es{le mot n'est pas le plus adapté mais je ne vois pas quoi mettre d'autre}\bc{ok, laissons ainsi}
%Yet, simulations suggest that beyond the oracle bias, imputing the missing \modifbis{covariates} in the observational study reduces estimation variance when using the G-formula or IPSW estimators in a finite sample size (Figure~\ref{fig:simulations-imputation}). 
%This could be expected as we add not specific information\gv{I don't follow the logical link between those two sentences: adding no information leads to expecting a redaction in variance? Maybe this sentence is not at the right place.}, but still interesting to notice for the practitioner.

\begin{algorithm}[H]
\label{algo:imputation}
    \SetAlgorithmName{Procedure}{Procedure}{Linear imputation}%
    %\todo[inline]{Gael: Should this procedure be here? It is referenced only later. Maybe it should be also referenced here}
    \SetKwInOut{Input}{input}
    \SetKwInOut{Init}{init}
    \SetKwInOut{Parameter}{param}
    \caption{Linear imputation}
    Model $X_{m i s}$ a linear combination of $X_{o b s}$ on the complete data set;\\
    Impute the missing covariate with $\hat X_{m i s}$ with the previous fitted model;\\
    Compute $\hat \tau$ with the G-formula using the imputed data set $X_{o b s} \cup \hat X_{m i s}$;\\
\Return{$\hat \tau$}
\end{algorithm}
 
\subsubsection{Using a proxy variable in place of the missing covariate \label{sec:proxy-variable}}
%\textit{In this part, $X_{m i s}$ is considered totally or partially unobserved, but could be replaced with a proxy variable in both datasets.} \\
Another solution is to use a so-called proxy variable.\sout{, as illustrated by
the structural diagram in Figure~\ref{fig:dag_proxy}}\sout{ Such a variable is a variable associated with the missing one, but not a treatment effect modifier. %\jj{il ne faudrait pas dire mais n'est pas un treatment effect modifiers }. 
Could a proxy
covariate, available in both data set, overcome this issue? }
The impact of a proxy in the case of a linear model is documented in
econometrics \citep{chen2005auxiliarydata, Chen2007measureerror,
angrist2008mostlyharmless, wooldridge2016introductory}. 
An example of
a proxy variable is the height of children as a proxy for their age. Note
that in this case, even if the age is present in one of the two datasets,
\sout{then} only the children's height is kept in for this method.
\begin{comment}
\begin{figure}[!htb]
	\begin{minipage}{.50\linewidth}
        \includegraphics[width=0.8\textwidth]{}
    \end{minipage}
    \begin{minipage}{.50\linewidth}
	\caption{\textbf{Selection diagram illustrating the proxy variable case}: Note that there is no arrow from $S$ to the proxy variable, indicating no change in the relation between the proxy and the missing covariate in the trial sample and target population \citep{bareinboim2011formaltransportability}.}
    \label{fig:dag_proxy}
	\end{minipage}
\end{figure}
\end{comment}

Here, we propose a framework to handle a missing key covariate 
with a proxy variable and estimate the bias reduction
accounting for the additional noise brought by the proxy.
%\jj{là on comprend pas pourquoi tu parles de training et d'estimation tu n'en n'a jamais parlé avant? donc tu peux juste t'arréter à substitude par Xprox}. 

\begin{assumption}[Proxy framework]\label{a:proxy} Assume that $X_{m i s} \independent X_{o b s}$, and that there exist\modifbis{s} a proxy variable $X_{p r o x}$ such that,
$$X_{p r o x} =  X_{m i s} + \eta $$
where $\mathbb{E}[\eta] = 0$, $\operatorname{Var}[\eta] = \sigma_{p r o x}^2$, and $\operatorname{Cov}\left(\eta, X_{m i s}\right)=0$. 
In addition we suppose that $\operatorname{Var}[X_{m i s}] = \operatorname{Var}[X_{m i s} \mid S =1 ] = \sigma_{m i s}^2$.
\end{assumption}

\begin{definition}
Let $ \hat \tau_{\text{\tiny G}, n, m, p r o x}$ be the G-formula estimator where $X_{m i s}$ is substituted by $X_{p r o x}$ as detailed in assumption~\ref{a:proxy}.
\end{definition}

\begin{lemma}
\label{lemma:bias-proxy} Assume that the generative linear model \eqref{eq:linear-causal-model} holds, along with Assumption~\ref{a:trans-sigma} and the proxy framework \eqref{a:proxy}. Then the asymptotic bias of $\hat \tau_{\text{\tiny G}, n, m, p r o x}$ is:
$$ \lim\limits_{n,m \to \infty}\mathbb{E}[\hat \tau_{\text{\tiny G}, n, m, p r o x}] -\tau = - \delta_{m i s}\, \Delta_m \left(1- \frac{\sigma_{m i s}^2}{\sigma_{m i s}^2 + \sigma_{p r o x}^2}\right) $$
\end{lemma}

We denote $\hat \delta_{p r o x}$\sout{,} the estimated coefficient for $X_{p r o x}$\modifbis{. Such an estimation can} be obtained using a Robinson procedure when regressing $Y$ on the set $X_{o b s} \cup X_{p r o x}$. 
\begin{corollary}
\label{corol:bias-proxy}
The asymptotic bias in lemma~\ref{lemma:bias-proxy} can be estimated using the following expression:
$$\lim\limits_{n,m \to \infty}\mathbb{E}[\hat \tau_{\text{\tiny G}, n, m, p r o x}]-\tau = -\hat \delta_{p r o x} \bigl(\mathbb{E}[X_{p r o x}] - \mathbb{E}[X_{p r o x} \mid S = 1]\bigr) \frac{\sigma_{p r o x}^2}{\sigma_{m i s}^2} $$
\end{corollary}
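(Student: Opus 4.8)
The plan is to start from the asymptotic bias expression already obtained in Lemma~\ref{lemma:bias-proxy}, namely
$$
\lim_{n,m\to\infty}\mathbb{E}[\hat\tau_{\text{\tiny G},n,m,prox}]-\tau = -\,\delta_{mis}\,\Delta_m\left(1-\frac{\sigma_{mis}^2}{\sigma_{mis}^2+\sigma_{prox}^2}\right) = -\,\delta_{mis}\,\Delta_m\,\frac{\sigma_{prox}^2}{\sigma_{mis}^2+\sigma_{prox}^2},
$$
and to re-express the two unidentified quantities $\delta_{mis}$ and $\Delta_m=\mathbb{E}[X_{mis}]-\mathbb{E}[X_{mis}\mid S=1]$ in terms of quantities attached to the proxy, which is what is actually observed in both data sets. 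There are two identities to establish: one relating $\hat\delta_{prox}$ (the population coefficient of $X_{prox}$ in the CATE regression of $Y$ on $X_{obs}\cup X_{prox}$) to $\delta_{mis}$, and one relating the observable proxy shift $\mathbb{E}[X_{prox}]-\mathbb{E}[X_{prox}\mid S=1]$ to $\Delta_m$.

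First I would handle the shift. Since $X_{prox}=X_{mis}+\eta$ with $\mathbb{E}[\eta]=0$, and since the same relation is assumed to hold in the trial sample (the proxy framework in Assumption~\ref{a:proxy} is meant to transport, i.e.\ $\mathbb{E}[\eta\mid S=1]=0$ as well), taking expectations gives $\mathbb{E}[X_{prox}]=\mathbb{E}[X_{mis}]$ and $\mathbb{E}[X_{prox}\mid S=1]=\mathbb{E}[X_{mis}\mid S=1]$, hence $\mathbb{E}[X_{prox}]-\mathbb{E}[X_{prox}\mid S=1]=\Delta_m$ exactly. Second, for the coefficient: because $X_{mis}\independent X_{obs}$ and (by Assumption~\ref{a:proxy}) $\eta$ is uncorrelated with $X_{mis}$, the proxy $X_{prox}$ is also uncorrelated with $X_{obs}$, so in the linear CATE regression the coefficient on $X_{prox}$ decouples from the $X_{obs}$ block and reduces to the simple regression coefficient of $\tau(X)=\langle\delta_{obs},X_{obs}\rangle+\delta_{mis}X_{mis}$ on $X_{prox}$. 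That univariate coefficient is $\delta_{mis}\,\mathrm{Cov}(X_{mis},X_{prox})/\mathrm{Var}(X_{prox}) = \delta_{mis}\,\sigma_{mis}^2/(\sigma_{mis}^2+\sigma_{prox}^2)$, using $\mathrm{Cov}(\eta,X_{mis})=0$ and $\mathrm{Var}[\eta]=\sigma_{prox}^2$. Thus $\hat\delta_{prox}=\delta_{mis}\,\sigma_{mis}^2/(\sigma_{mis}^2+\sigma_{prox}^2)$, equivalently $\delta_{mis}\,\sigma_{prox}^2/(\sigma_{mis}^2+\sigma_{prox}^2)=\hat\delta_{prox}\,\sigma_{prox}^2/\sigma_{mis}^2$. (Here one should note the R-learner / T-learner consistency so that the estimated $\hat\delta_{prox}$ converges to this population coefficient; this is the point where the semiparametric linear CATE model and $L^1$-consistency of the nuisance estimates are invoked, exactly as in the other sensitivity sections.)

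Combining the two substitutions into the Lemma~\ref{lemma:bias-proxy} bias,
$$
-\,\delta_{mis}\,\Delta_m\,\frac{\sigma_{prox}^2}{\sigma_{mis}^2+\sigma_{prox}^2} = -\left(\delta_{mis}\,\frac{\sigma_{prox}^2}{\sigma_{mis}^2+\sigma_{prox}^2}\right)\Delta_m = -\,\hat\delta_{prox}\,\frac{\sigma_{prox}^2}{\sigma_{mis}^2}\,\bigl(\mathbb{E}[X_{prox}]-\mathbb{E}[X_{prox}\mid S=1]\bigr),
$$
which is exactly the claimed formula. The only genuinely delicate step is justifying that the population value of the proxy's CATE coefficient equals the decoupled univariate coefficient $\delta_{mis}\sigma_{mis}^2/(\sigma_{mis}^2+\sigma_{prox}^2)$ — i.e.\ that introducing the noisy proxy in place of $X_{mis}$ leaves the $X_{obs}$ coefficients untouched and attenuates the relevant coefficient by the classical errors-in-variables factor; everything else is bookkeeping on means and covariances under Assumption~\ref{a:proxy} and Assumption~\ref{a:trans-sigma}.
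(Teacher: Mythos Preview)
Your proposal is correct and follows essentially the same approach as the paper: start from the bias in Lemma~\ref{lemma:bias-proxy}, use $\mathbb{E}[X_{prox}]-\mathbb{E}[X_{prox}\mid S=1]=\Delta_m$, and replace $\delta_{mis}$ via the attenuation identity $\hat\delta_{prox}=\delta_{mis}\,\sigma_{mis}^2/(\sigma_{mis}^2+\sigma_{prox}^2)$, then substitute. The only difference is that the paper simply cites \cite{wooldridge2016introductory} for the attenuation-bias relation, whereas you derive it explicitly from the covariance/variance computation and the decoupling of $X_{prox}$ from $X_{obs}$; your version is more self-contained but otherwise identical in content.
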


Proof\modifbis{s} of Lemma~\ref{lemma:bias-proxy} and Corollary~\ref{corol:bias-proxy} \sout{is}\modifbis{are} detailed in Appendix (Proof~\ref{proof:bias-proxy}).
Note that, as expected, the average bias reduction strongly depends on
the quality of the proxy. 
%In other words, whenever using $X_{p r o x}$ instead of $X_{m i s}$, the bias can be lowered\gv{I don't understand
%this sentence (and I am not sure that it currently means what you want to say)}. % i removed this sentence.
In the limit case, if $\sigma_{p r o x} \sim 0$ so that the correlation between the proxy and the missing variable is one, then the bias is null.
In general, if $\sigma_{p r o x} \gg \sigma_{m i s}$ then the proxy variable does not diminish the bias. 
Finally, \modifbis{we propose a practical approach}\sout{the practical approach is detailed} in Procedure~\ref{algo:proxy}. Note that it requires to have a range of possible $\sigma_{p r o x}$ values. We recommend to use the data set on which the proxy along with the partially-unobserved covariate are present, and to obtain an estimation of this quantity on this subset. %\jj{assez obscure des valeurs plausibles pour sigmaprox?}
%\es{mettre des "a range" partout}

\begin{algorithm}[H]
\label{algo:proxy}
    \SetAlgorithmName{Procedure}{Procedure}{Proxy variable}
    \SetKwInOut{Input}{input}
    \SetKwInOut{Init}{init}
    \SetKwInOut{Parameter}{param}
    \caption{Proxy variable}
     \Init{\quad $\sigma_{p r o x}:=[\dots]$ \tcp*{Define \modifbis{a} range for plausible $\sigma_{p r o x}$ values}}
     \If{$X_{m i s}$ is in RCT}{
      \Init{\quad $\Delta_{m i s}:=[\dots]$ \tcp*{Define \modifbis{a} range for plausible $\Delta_{m i s}$ values}}
      Estimate $\delta_{m i s}$ with the Robinson
procedure (see Procedure~\ref{algo:observed-rct} for details);\\
      Compute all possible bias\modifbis{es} for \modifbis{the} range of $\sigma_{p r o x}$ according to Lemma~\ref{proof:bias-proxy}.\\
      }\Else{
      Estimate $\delta_{p r o x}$ with the Robinson procedure (see Procedure~\ref{algo:observed-rct} for details);\\
      Estimate $\mathbb{E}[X_{p r o x}]$ and $\mathbb{E}[X_{p r o x} \mid S = 1]$;\\
      Compute all possible bias for range of $\sigma_{p r o x}$ according to Corollary~\ref{corol:bias-proxy}.\\
                           }  
\Return{Bias\modifbis{es}'s range}
\end{algorithm}
%\es{refaire un check des bias/biases dans tout l'article}
\sout{In the above, we extended the work of \cite{nguyen2017sensitivity,
nguyen2018sensitivitybis, Andrews2018ValidityBias}, by studying (i)
a method for each missing covariate pattern, (ii) a strategy\sout{that
proposes} to impute the partially-observed covariate, (iii)
replacing a missing variable with a noisy proxy.}

% https://www.lpsm.paris/pageperso/lecomte/integ_gauss.pdf
%\begin{equation*}
%    \tau^g = \int_{\mathcal{X}_{o b s}} \left( \int_{\mathcal{X}_{m i s}} \tau(x) \,f(x_{m i s} \mid x_{o b s}) \right) \,f(x_{o b s})
%\end{equation*}
%What we can estimate from the data is $f(x \mid S = 1)$ and $f(x_{o b s} \mid S = 1)$ and $f(x_{m i s} \mid x_{o b s}, S = 1)$ and $f(x_{o b s})$.
%Can we create a sensitivity parameter that would be the distance between $f(x_{m i s} \mid x_{o b s}, S = 1)$ and $f(x_{m i s} \mid x_{o b s})$ ?

\section{Synthetic and semi-synthetic simulations \label{sec:simulation}}

%In this section, we illustrate  \sout{each result\sout{s} and}\modifbis{the methods we developed in} \sout{ method\sout{s} from} Section~\ref{sec:linear-causal-model}. 
\modifbis{More information on simulation settings can be found in Appendix see Section~\ref{appendix:synthetic-simulation-extension}} \sout{In particular the synthetic simulations results are summarized, while a detailed presentation of the method along with comments is available in appendix for the interested reader (see Section~\ref{appendix:synthetic-simulation-extension}).}

\subsection{Synthetic simulations}

\begin{wrapfigure}{r}{0.45\textwidth}
  \begin{center}
    \includegraphics[width=0.42\textwidth]{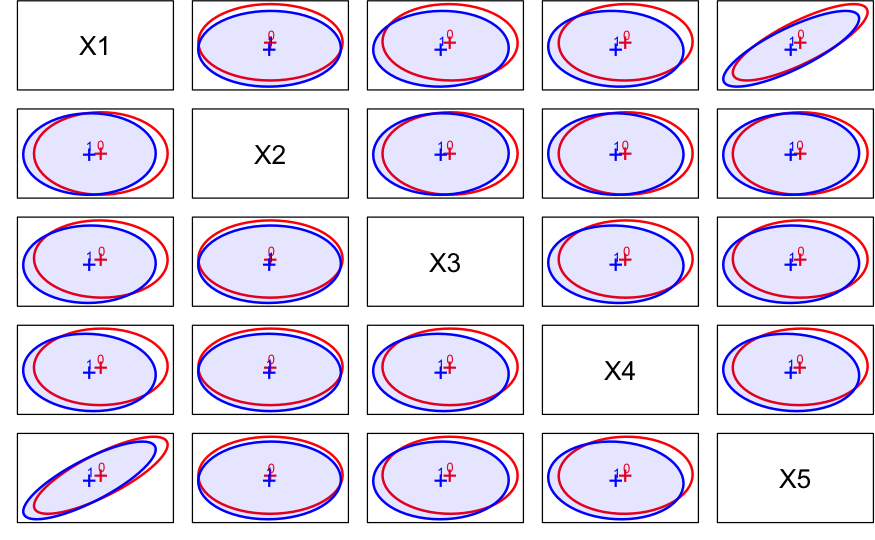}
  \end{center}
  \caption{\textbf{Variance-covariance} preservation in the simulation set-up highlighted with pairwise covariance ellipses for one realization of the simulation (package \texttt{heplots}).}
  \label{fig:sigma-simulation}
\end{wrapfigure}

While results \modifbis{presented in} \sout{from} Section~\ref{sec:linear-causal-model} apply to any function $g$ \modifbis{(see  \eqref{eq:linear-causal-model})}, \sout{here we illustrate the case of a linear function} \modifbis{we choose $g$ as a linear function to illustrate our findings}. All \sout{the}simulations are available on \href{https://github.com/BenedicteColnet/unobserved-covariate}{github}\footnote{\texttt{BenedicteColnet/unobserved-covariate}}, and include non-linear forms for $g$\sout{ where results hold}.

\paragraph{Simulations parameters}

We use a similar simulation framework as in \cite{dong2020integrative} and \cite{colnet2020causal},
where $5$ covariates are generated \sout{--} independently, except for $X_1$ and $X_5$ \sout{which} \modifbis{whose} correlation is set at 0.8, except when explicitly mentioned\modifbis{. We simulate marginals as} \sout{ -- with} $X_{j} \sim \mathcal{N}(1,1)$ for \sout{each} \modifbis{all} $j=1, \ldots, 5$. 
The trial selection process is defined using a logistic regression model, such that:  
\begin{equation}
\label{eq:rctmodel}
    \operatorname{logit}\left\{P(S=1\mid X)\right\}=\beta_{s,0}+\beta_{s,1}\,X_{1} 
    + \dots + \beta_{s,5}\,X_{5} .
\end{equation}
This selection process implies that the variance-covariance matrix in the RCT sample and in the target population \textit{may be different} \modif{depending on the (absolute) value of the coefficients $\beta_s$.}\sout{, so that Assumption~\ref{a:trans-sigma} does not hold all along this
simulation scheme.}\sout{This is interesting to illustrate the
resilience to small violations on Assumption~\ref{a:trans-sigma}} \modif{In \sout{the standard} \modifbis{our} simulation set-up, the overall variance-covariance structure is kept identical as visualized on Figure~\ref{fig:sigma-simulation}.} The outcome is generated according to a linear model, following \modifbis{Model~\ref{eq:linear-causal-model}, that is} \sout{
\eqref{eq:linear-causal-model}:}

\begin{equation}
\label{eq:Ymodel}
    Y(a) = \beta_0 + \beta_1 X_1 + \dots + \beta_5 X_5 + a(\delta_1 X_1 + \dots + \delta_5 X_5)+\varepsilon \mbox{ with } \varepsilon \sim \mathcal{N}(0,1).
\end{equation}

In this simulation, \modifbis{we set} $\beta = (5,5,5,5,5)$, and other parameters \modifbis{as described in} \sout{are further detailed in}  Table~\ref{tab:params-simulations-linear}.

\begin{table}[!h]
\begin{center}
\begin{tabular}{l|ccccc}
\hline
 \textbf{Covariates} &  $X_1$ & $X_2$ & $X_3$  & $X_4$ & $X_5$  \\ \hline 
 Treatment effect modifier & Yes & Yes & Yes & No & No  \\ 
 Linked to trial inclusion & Yes & No & Yes & Yes & No  \\ \hline
 $\delta$ & $\delta_1 = 30$ & $\delta_2 = 30$ & $\delta_3  = -10 $ & $\delta_4 = 0$ & $\delta_5 = 0$ \\
$\beta_s$ & $\beta_{s,1} = -0.4$ & $\beta_{s,2} = 0$ & $\beta_{s,3} = -0.3$ & $\beta_{s,4} = -0.3$ & $\beta_{s,5} = 0$ \\
 $. \independent X_1$ & -  & $X_2 \independent X_1$  & $X_3 \independent X_1$ & $X_4 \independent X_1$ & $X_5 \not\independent X_1$ \\ \hline
\end{tabular}
\caption{\textbf{Simulations parameters.}}
\label{tab:params-simulations-linear}
\end{center}
\end{table}

First a sample of size $10,000$ is drawn from the covariate distribution. 
From this sample, the selection model \eqref{eq:rctmodel} is applied which leads to an RCT sample of size $n \sim 2800$. Then, the treatment is generated according to a Bernoulli distribution with probability equal to $e_1 = 0.5$. 
Finally, the outcome is generated according to \eqref{eq:Ymodel}. The observational sample is obtained by drawing a new sample of size $m=10,000$ from the \modifbis{covariate distribution}\sout{ distribution of the covariates}. \modifbis{In this setting, the} \sout{
These parameters imply a target population} ATE \modifbis{equals} \sout{of} $\tau =  \sum_{j=1}^5 \delta_j \mathbb{E}[X_j] = \sum_{j=1}^5 \delta_j =50$.
\sout{Also,} \modifbis{Besides,} the sample selection ($S=1$) in \eqref{eq:rctmodel} \sout{with these parameters} is biased toward lower values of $X_1$ (and indirectly $X_5$), and higher values of $X_3$. This situation illustrate\modifbis{s} a case where $\tau_1 \neq \tau$. Empirically, we obtain\sout{ed} $\tau_1 \sim 44$.

\paragraph{Illustration of Theorem~\ref{lemma:linear-gformula-unbiased}}

Figure~\ref{fig:simulations-all-pattern} presents results of a simulation with $100$ repetitions with \sout{the full set of} \modifbis{no missing} covariates (on the Figure see \texttt{none}), and the impact of missing covariate(s) when using the G-formula or the IPSW to generalize. The theoretical bias from Theorem~\ref{lemma:linear-gformula-unbiased} \modifbis{is}\sout{are} also represented.
%and matches the empirically obtained one even in a rather small sample and with a poorly validated Assumption~\ref{a:trans-sigma}.

\begin{figure}[!h]
    \begin{minipage}{.27\linewidth}
	\caption{\textbf{Illustration of Theorem~\ref{lemma:linear-gformula-unbiased}}: Simulation results for the linear model with \sout{one, or more,} missing covariate(s) when generalizing the treatment effect using the G-formula (Definition~\ref{def:g-formula}) or IPSW (see Definition~\ref{def:ipsw} in appendix) estimator\modifbis{s} on the set of observed covariates. \sout{The}Missing covariate\sout{(s)} are indicated \sout{on the bottom of the plots} \modifbis{on the x-axis}.\sout{The continue blue line refers to $\tau_1$, the ATE on the RCT population, and the red dashed line to the ATE in the target population.} The theoretical bias (orange dot) is obtained from Theorem~\ref{lemma:linear-gformula-unbiased}. Simulations are repeated 100 times.}
	\label{fig:simulations-all-pattern}
    \end{minipage}%
    \hfill%
    \begin{minipage}{.72\linewidth}
    \includegraphics[width=\textwidth]{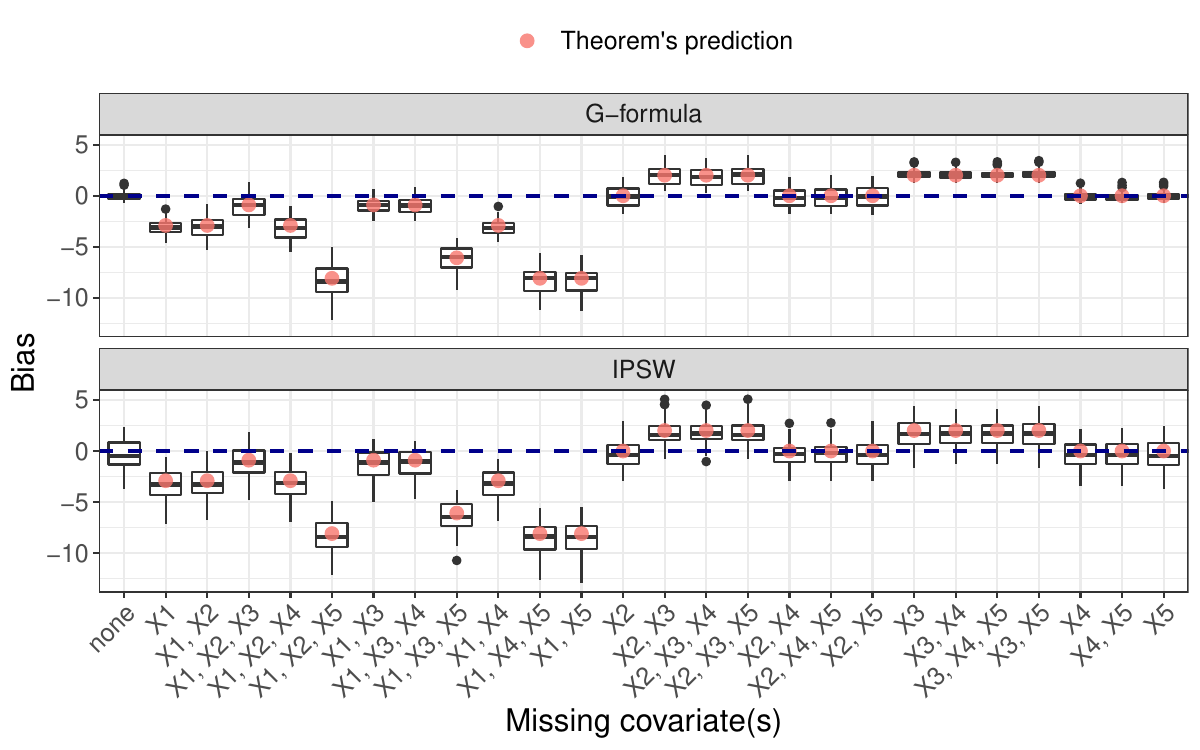}
    \end{minipage}
\end{figure}

\sout{Figure~\ref{fig:simulations-all-pattern} illustrates that Theorem~\ref{lemma:linear-gformula-unbiased} describes well the bias.}
The absence of covariates $X_2, X_4$ and/or $X_5$ does \sout{do} not affect \sout{the} ATE generalization because \modifbis{these covariates} \sout{they} are not simultaneously treatment effect modifiers and shifted (between the RCT sample and the target population). In addition, the sign\modifbis{s} of the bias\modifbis{es} depend\sout{s} on the sign\modifbis{s} of the coefficient\modifbis{s associated to the missing variables,} as highlighted by \modifbis{settings for which} $X_1$ and $X_3$ \modifbis{are missing}\sout{covariate}. 
\modifbis{As shown in Theorem~\ref{lemma:linear-gformula-unbiased}, variables acting on $Y$ without being treatment effect modifiers and linked to trial inclusion can help to reduce the bias, if correlated to a (partially-) unobserved key covariate. This is stressed out in our experiment by comparing the settings for which $X_1, X_5$ are missing and the one where only $X_1$ is missing.}
\sout{We also added the case where the two correlated variables $X_1$ and $X_5$ are missing, we observe that the bias is higher than when only $X_1$ is missing. This result is also predicted by \sout{the} Theorem~\ref{lemma:linear-gformula-unbiased} and illustrates a case where variables acting on $Y$ without being treatment effect modifiers and linked to trial inclusion can be of interest when a key covariate is unobserved or partially-unobserved, through correlation, to diminish the bias.}

\paragraph{\sout{Sensitivity analysis for} \modifbis{A} totally-unobserved covariate (from Section~\ref{subsec:totally-unobserved})}

To illustrate this case, the missing covariate has to be supposed independent of all the others. For this paragraph we consider $X_3$.
Then, according to Lemma~\ref{lemma:sensitivity-bias-linear}, the two sensitivity parameters $\delta_{m i s}$ and the shift $\Delta_m$ can be used to produce a sensitivity map for the bias on the transported ATE. 
The procedure~\ref{algo:totally-unobserved} summarizes the different steps, and the \sout{Austen plot}\modif{sensitivity map}'s output result was presented in Figure~\ref{fig:totally-missing-linear}.
%\es{renvoyer à la section correspondante}

\paragraph{\sout{Sensitivity analysis when \sout{missing} a covariate \modifbis{is missing} in the RCT} \modifbis{A missing covariate in the RCT} (from Section~\ref{subsec:observed-obs})}
In this case, \modifbis{we need to specify ranges of values for the two sensitivity parameters \modifbis{$\delta_{m i s}$ and $\Delta_m$}}. 
\modifbis{The experimental protocol is designed such that} \sout{Here we present a situation where} all covariates are successively partially missing in the RCT.
\sout{The sensitivity analysis is represented similarly as in Figure~\ref{fig:totally-missing-linear}.}
Because each missing variable implies a different landscape due to the dependence relation to other covariates (as stated in Theorem~\ref{lemma:linear-gformula-unbiased}), each variable requires a different heatmap (except if covariates are all independent). 
\modifbis{Results are depicted in Figure \ref{fig:sensitivity_missing_rct}}\sout{ presents 
results from \sout{the} \modifbis{this} simulation}. 
Figure~\ref{fig:sensitivity_missing_rct} illustrates the benefit of \modifbis{Protocol~\ref{algo:observed-obs}} %\es{which one?}
accounting for other correlated covariates\modifbis{, and compared to a protocol assuming independent covariates}. 
%If $X_1$ was considered as a totally independent variable with similar $\delta_{m i s}$ and $\Delta_m$ as $X_2$ (which is an independent covariate), it would lead to conclude on a very strong bias. To observe this phenomenon, one can compare shift of $X_1$ to the shift of $X_2$ on the Austen plot of Figure~\ref{fig:sensitivity_missing_rct}.
%\sout{Here, similar sensitivity parameters would not invalidate the study, as the bias for $X_1$ is below a certain threshold (here the absolute difference between $\hat \tau_{n,m,obs}$ and the RCT estimate $\hat \tau_1$).}
%\gv{Proposition de réécriture. Check that it is meaningful} %Yes thank you!! Much better!!
\modifbis{Indeed, $X_1$ and $X_2$ are strong treatment effect modifiers (see Table~\ref{tab:params-simulations-linear}, where $\delta_1=\delta_2$), but $X_1$ is correlated to other completely observed covariates, which allows to "lower" the bias if $X_1$ is completely removed from the analysis compared to a similar covariate that would be independent of all other covariates. This is highlighted with a non-symetric bias landscape for $X_1$ on Figure~\ref{fig:sensitivity_missing_rct}. As a consequence, for a same value of $\delta_{mis}$ value, a guessed shift of $\Delta_{mis} = 0.25$ allows to conclude on a lower bias on the map for $X_1$, while it would not be the case for covariate $X_2$ (which is completely independent).}
\sout{Sensitivity parameters that invalidate the study are larger when
$X_1$ is missing than when $X_2$ is missing, as $X_2$ is independent of
the other covariates and $X_1$ is correlated to $X_5$ that is included in the study.}
%\es{Découper la phrase + quel est le message ?} 
 %\jj{tu réutilise confounder dans le titre?}
 %\jj{dire ce que c'est la croix?}
\begin{figure}[!h]
    \centering
    \includegraphics[width=.75\textwidth]{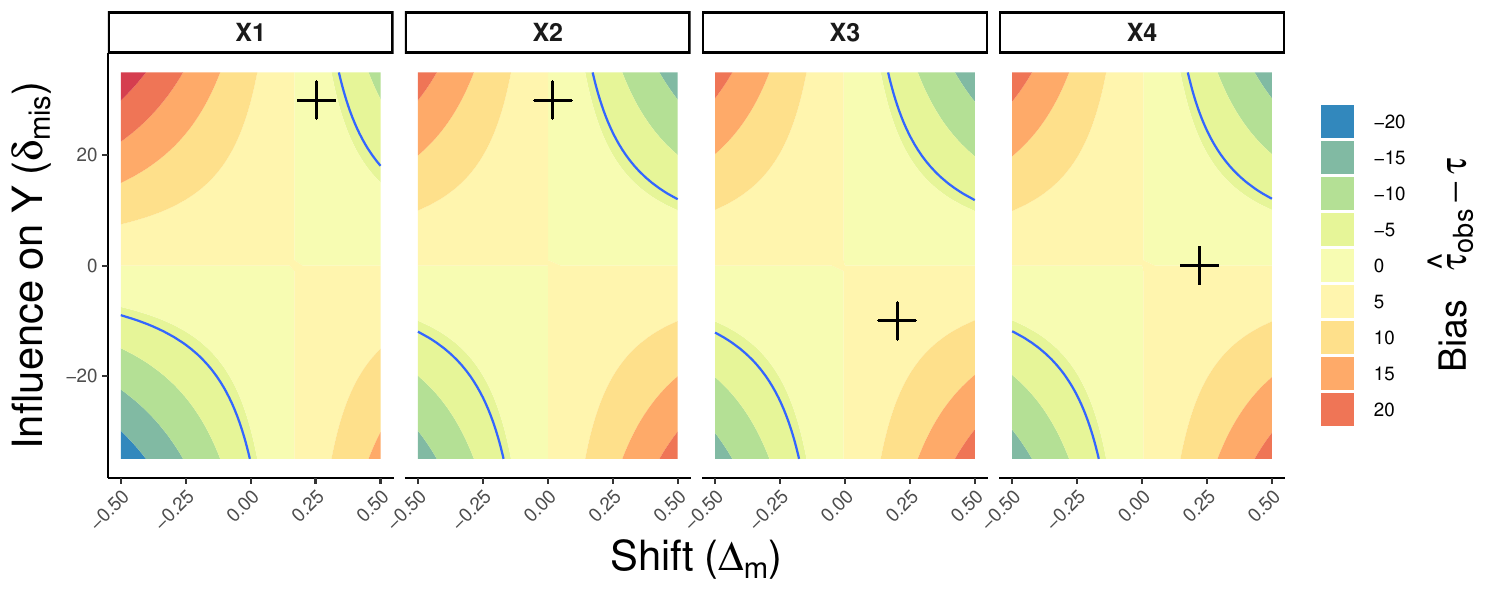}
    \caption{\textbf{Simulations results when applying
procedure~\ref{algo:observed-obs}}: Heatmaps with a blue curve showing
how strong an unobserved key covariate would need to be to induce a bias of
$\tau_1-\tau \sim -6$ in function of the two sensitivity parameters
$\Delta_m$ and $\delta_{m i s}$ when a covariate is totally unobserved.
Each heatmap illustrates a case where the covariate would be missing
(indicated on the top of the map), given all other covariates. The cross indicate the coordinate of true sensitivity parameters, in adequation with the bias empirically observed in Figure~\ref{fig:simulations-all-pattern}. The bias
landscape depends on the dependence of the covariate with other observed
covariates, as illustrated with an asymmetric heatmap when $X_1$ is
partially observed, due to the presence of $X_5$.}
    \label{fig:sensitivity_missing_rct}
\end{figure}

\paragraph{\sout{Sensitivity analysis when missing a covariate in the observational data} \modifbis{A missing covariate in the observational data} (from Section~\ref{sec:nguyen-sensitivity-method})}

\modifbis{In this case, we need to specify a range for the values of only one sensitivity parameter, namely} \sout{ only one sensitivity parameter is needed, being} $\mathbb{E}[X_{m i s}]$  \sout{as recalled in} \modifbis{(see \eqref{eq:ngyuen-sensitivity-method})}. \modifbis{In our experimental protocol, we assume that $X_1$ is missing and apply Procedure~\ref{algo:observed-rct}} \sout{This simulation is used, considering that $X_1$ is missing, and the procedure~\ref{algo:observed-rct} is applied}. Results are presented in Table~\ref{tab:simulation-nguyen}. 

\begin{table}[!h]
\begin{center}
\begin{tabular}{|l|l|l|l|l|l|}
\hline
Sensitivity parameter $\mathbb{E}[X_{m i s}]$ & 0.8 & 0.9 & 1.0 & 1.1 & 1.2 \\ \hline
Empirical average $\hat \tau_{\text{\tiny G}, n, m, o b s}$  & 44 & 47 & 50 & 53 & 56  \\ \hline
Empirical standard deviation $\hat \tau_{G, n, m, o b s}$ & 0.4 & 0.4 & 0.3 & 0.3 & 0.4 \\ \hline
\end{tabular}
\end{center}
\caption{\textbf{Simulations results when applying procedure~\ref{algo:observed-rct}}: Results of the simulation considering $X_1$ being partially observed in the RCT, and using the sensitivity method of \cite{nguyen2017sensitivity}, but with a Robinson procedure to handle semi-parametric generative functions. When varying the sensitivity parameters, the estimated ATE is close to the true ATE ($\tau = 50$) when the sensitivity parameter is closer to the true one ($\mathbb{E}[X_{m i s}] = 1$). The results are presented for 100 repetitions.}
\label{tab:simulation-nguyen}
\end{table}

\begin{figure}[!htb]
    \centering
    \begin{minipage}{.25\textwidth}
\begin{center}

\begin{tabular}{cc}
\hline
$\beta_{s,1}$ & Averaged p-value \\ \hline\hline
0     & 0.44                          \\
-0.2  & 0.37                          \\
-0.4  & 0.31                          \\
-0.6  & 0.14                          \\
-0.8  & 0.04                          \\
-1    & 0.012                         \\
-1.2  & 0.0001                        \\
-1.4  & $1\cdot10^{-9}$  \\
-1.6  & $1\cdot10^{-10}$ \\
-1.8  & $3\cdot10^{-15}$ \\
-2    & $1.4\cdot10^{-23}$ \\ \hline
\end{tabular}
    
\end{center}
\caption{\textbf{Empirical link between the logistic regression coefficient for sampling bias $\beta_{s,1}$ and the p-value of a Box-M test}. The average p-value is computed \sout{when}\modifbis{by} repeating $50$ times the simulation. We recall that in Figure~\ref{fig:simulations-all-pattern}, $\beta_{s,1}:=-0.4$.}
\label{tab:simulation-coeff-pvalue}
    \end{minipage}%
    \hspace{0.3cm}
    \begin{minipage}{0.70\textwidth}
        \centering
        \includegraphics[width=0.95\textwidth]{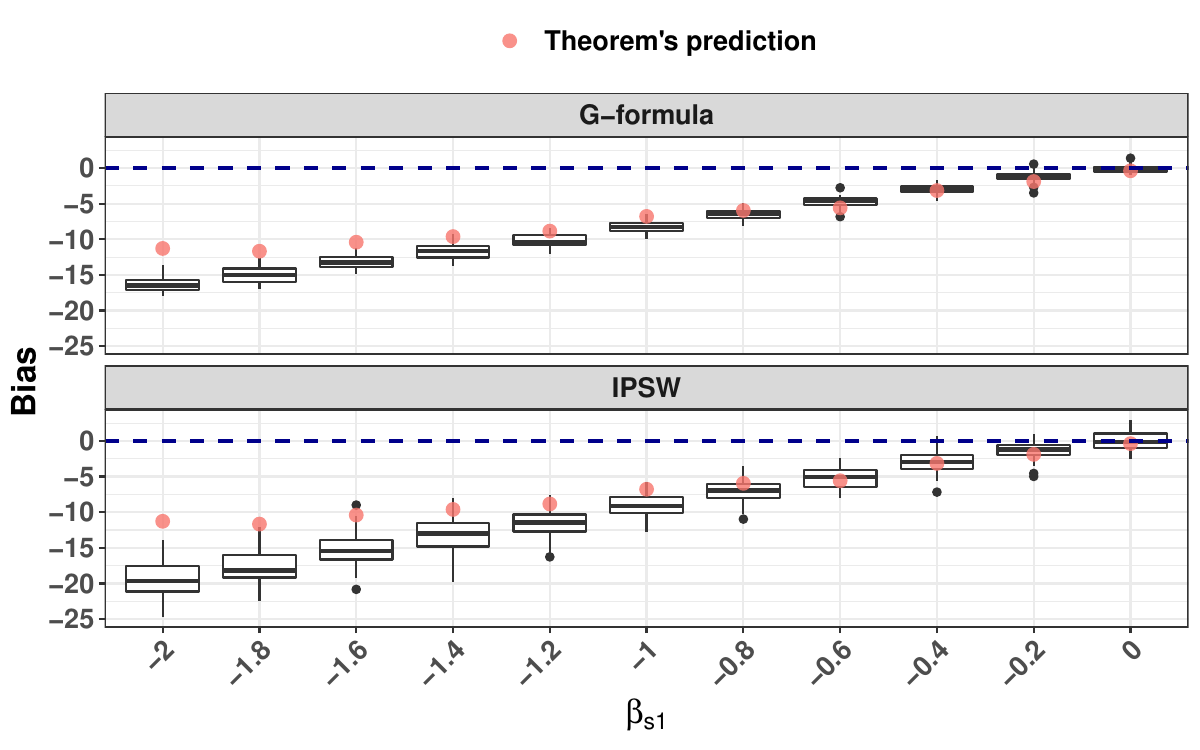}
        \caption{\textbf{Impact of poor\sout{-} transportability of the variance-covariance matrix} which is simulated with a decreasing coefficient $\beta_{s,1}$\modifbis{,} responsible of the covariate shift between the RCT sample and the observational sample. The lower $\beta_{s,1}$, the higher the absolute empirical bias (boxplots), and the higher the difference between the prediction\modifbis{s} given by Theorem~\ref{lemma:linear-gformula-unbiased} (orange dots) compared to the effective empirical bias\modifbis{es} (boxplots).}
        \label{fig:simulation-effect-of-strong-shift}
    \end{minipage}
\end{figure}

Simulations illustrating imputation (Corollary~\ref{lem:imputation}) and usage of a proxy (Lemma~\ref{lemma:bias-proxy}) are available in appendix, in Section~\ref{appendix:synthetic-simulation-extension}.

\paragraph{\modifbis{Violation of} \sout{Impact of a poorly verified} Assumption~\ref{a:trans-sigma}} To assess the impact of a lack of transportability of the variance\modifbis{-}covariance matrix (Assumption~\ref{a:trans-sigma}) we propose to observe the effect of an increasing (in absolute value) coefficient involved in the sampling process (Equation~\ref{eq:rctmodel}). We observe that the bigger the coefficient, \modifbis{the bigger the deviations from the theory, as expected.} %the bigger the coefficient, the bigger the distorsion of the variance-covariance matrix, and therefore the bigger the difference \sout{in} between the true effective bias and the one \modifbis{we estimate based on} \sout{deduced from}  Theorem~\ref{lemma:linear-gformula-unbiased}. %\es{à discuter}
\modifbis{To illustrate this phenomenon, we associate the logistic regression coefficient (the further away from the zero, the more Assumption~\ref{a:trans-sigma} is unvalidated) to the p-value of a Box-M test assessing if the variance covariance matrix from the two sources are different.} Empirically, the bias is still well estimated \modifbis{by procedures described in Section~\ref{sec:linear-causal-model}} even if the p-value is lower than $0.05$. Results are available on Figures~\ref{tab:simulation-coeff-pvalue} and \ref{fig:simulation-effect-of-strong-shift}.

\sout{Simulations illustrate Theorem~\ref{lemma:linear-gformula-unbiased} and the sensitivity analysis patterns. Additional simulations also illustrate the case of imputation and the proxy in appendix (see Section~\ref{appendix:synthetic-simulation-extension}). Interestingly the simulation set up can violate the Assumption~\ref{a:trans-sigma} if the coefficients in the logistic regression are big enough in the sampling model \eqref{eq:rctmodel}. This empirically supports that deviations from the framework of Assumption~\ref{a:trans-sigma} can be supported up to some extent.} %\es{Est-ce qu'on ne peut pas greffer ça au paragraphe "Violation of Assumption...", bouger ce paragraphe en dernier, et supprimer la conclusion?}

%\gv{Ne serait-il pas bien de rajouter ici un tout petit paragraphe qui nous résume ce qu'on a appris de ces simulation: malgrès une petite violation des hypotèhèse, la théorie s'applique, et ...}

\subsection{\modifbis{A} semi-synthetic simulation: the STAR experiment \label{sec:STAR}}
%This semi-synthetic benchmark illustrates the sensitivity analysis in case of a partially-observed covariate.
\modifbis{The semi-synthetic experiment is a mean to evaluate the methods on (semi) real data where neither the data generation
process nor the distribution of the covariates are under control.}%\es{à déplacer plus loin}\bc{pour le coup j'ai mis plus haut, car je trouve intéressant que ça soit lu... ?}

\subsubsection{Simulation details}

%Validating causal-inference methods is complicated because we almost neither have access to true generative model, nor to the counterfactual. Still, to provide a real case application 
\sout{To illustrate the previous methods on data for which neither the data generation
process nor the distribution of the covariates are under our control.} %\es{à déplacer plus loin}
We use the data from a randomized controlled trial, the Tennessee
Student/Teacher Achievement Ratio (STAR) study. This RCT is a pioneering
randomized study from the domain of education
\citep{angrist2008mostlyharmless}, started in 1985, and designed to
estimate the effects of smaller classes in primary school, on the \modifbis{children's grades}\sout{results}. This experiment showed a strong payoff to smaller classes
\citep{finn1990star}. In addition, the effect has been shown to be
heterogeneous \citep{krueger1999star}, where class size\modifbis{s} \sout{has}\modifbis{have} a larger
effect for minority students and those on subsidized lunch. For our purposes, we focus on the same subgroup of children, same treatment (small versus regular classes), and same outcome (average of all grades at the end) as in \cite{kallus2018removing}.

4\,218 \sout{students}\modifbis{children} are concerned by the treatment randomization,
with treatment assignment at first grade only. On the whole data, we
estimated an average treatment effect of 12.80 additional points on the
grades (95\% CI [10.41-15.2]) with the difference-in-means estimator. We consider this estimate as the ground truth $\tau$ as it is
the global RCT. Then, we generate a random sample of 500 children to
serve as the observational study. From the rest of the data, we sample a biased
RCT according to a logistic regression that defines
probability for each class to be selected in the RCT, and using only the
variable \texttt{g1surban} informing on the neighborhood of the school\modifbis{, which can be considered as a proxy for the socioeconomic status}.
The final selection is performed using a Bernoulli procedure, which
leads to 563 children in the RCT. The resulting RCT is such that $\hat \tau_1$ is
4.85 (95\% CI [-2.07-11.78]) which is underestimated. This is due to the
fact that that the selection is performed toward children that benefit
less from the class size reduction according to previous studies \citep{finn1990star, krueger1999star, kallus2018removing}. When generalizing the ATE with the G-formula on the full set of covariates, estimating the nuisance components with a linear model, and estimating the confidence intervals with a stratified bootstrap (1000 repetitions), the target population ATE is recovered with an estimate of 13.05 (95\% CI [5.07-22.11]) %\jj{ils sont gros les IC? Tu as regardé les IC asymptotic aussi} 
\sout{When n}\modifbis{N}ot including the covariate on which the selection is performed (\texttt{g1surban}) leads to a biased generalized ATE of 5.87 (95\% CI [-1.47-12.82]). %\es{peut-être ajotuer une microphrase pour expliquer en quoi la variable $\texttt{g1surban}$ est lié aux enfants qui bénéficient le moins de la politique?}
%While the assumption of a Gaussian distribution of the covariates clearly does not hold, we still used the result from Theorem~\ref{lemma:linear-gformula-unbiased} to have a point estimate of the theoretical expected bias would an oracle be aware of all the parameters.
%This empiric result is interesting has the assumptions of Theorem~\ref{lemma:linear-gformula-unbiased} may not necesseraley hold, but still the bias estimate is in the same magnitude order to what is expected.
These results are represented on Figure~\ref{fig:simulation-star-bilan}, along with AIPSW estimates. The IPSW is not represented due to a too large variance.
\begin{figure}[tbh!]
 \begin{minipage}{.48\linewidth}
  \caption{\textbf{Simulated STAR data}: True target population ATE estimation using all the STAR's RCT data is represented (difference-in-means). This is highlighted with a red dashed line to represent the ground truth. The ATE estimate of a biased RCT (difference-in-means) is also represented showing a lower treatment effect due to a covariate shift along the covariate \texttt{g1surban}. 
  Two estimators are used for the generalization, the G-formula (Definition~\ref{def:g-formula}) and the AIPSW (Definition~\ref{def:aipsw}); both relying on linear or logistic models for the nuisance components.
  The generalized ATE is either estimated with all covariates (\textcolor{cyan}{\textbf{blue}}) or with all covariates \emph{except \texttt{g1surban}} (\textcolor{orange}{\textbf{orange}}).  The confidence intervals are estimated with a stratified bootstrap (1000 repetitions).  Similar results are obtained when nuisance components are estimated with random forest.}
%\todo[inline]{Gael: cela serait super si tu pouvais rajouter un texte en rouge et vertical à coté de la ligne pointillée qui dit "ground truth"}
     \label{fig:simulation-star-bilan}
    \end{minipage}%
    \hfill%
    \begin{minipage}{.5\linewidth}
    \begin{center}
           \includegraphics[width=0.9\textwidth]{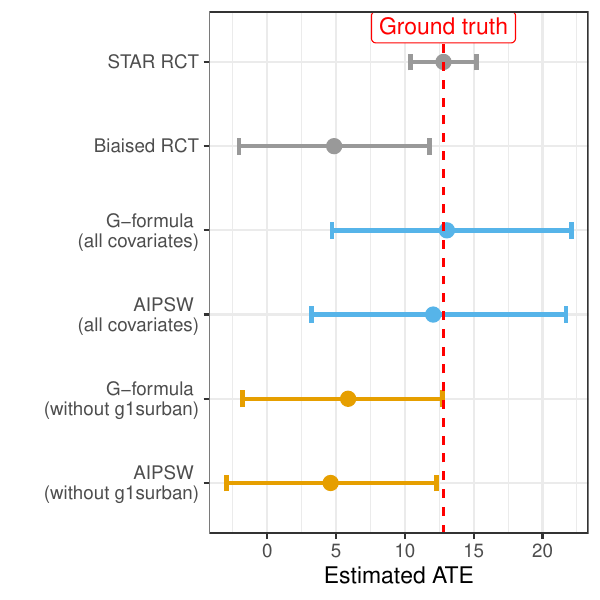}
     \end{center}
    \end{minipage}
\end{figure}

\subsubsection{Application of the sensitivity methods}

We now successively consider two different missing covariate patterns to apply the methods \modifbis{from Section~\ref{subsec:sensitivity-partially-obs}}.

\paragraph{Considering \texttt{g1surban} is missing in \modifbis{the} observational study}

\sout{Then } \cite{nguyen2017sensitivity}'s method \modifbis{(recalled in Section~\ref{sec:nguyen-sensitivity-method})} can be applied, \modifbis{if we are given a set of} \sout{and when
considering a range of} plausible \modifbis{values for} $\mathbb{E}[\text{g1surban}]$\sout{, one can
estimate the target population ATE}. \sout{Applying such a method and}
Specifying the following range $ \left]2.1, 2.7\right[\ $
(containing the true value for $\mathbb{E}[\text{g1surban}]$) leads to a
range for the generalized ATE \sout{is}\modifbis{of} $ \left]9.5, 16.7\right[.$ 
Recalling that the \textit{ground truth} is 12.80 (95\%
CI[10.41-15.2]), the estimated range \sout{is in}\modifbis{has a} good overlap with the ground
truth\sout{ value}. In other words, with this specification of the range, a user 
would correctly conclude that without this key variable, the generalized ATE is
probably underestimated.
%\gv{Maybe recall the value 12.80 for the ATE estimated on the complete data, and discuss the agreement}.

\paragraph{Considering \texttt{g1surban} is missing in the RCT}

Figure~\ref{fig:simulation-star-heatmap} 
illustrates the method when the missing covariate is in the RCT data set
(see Procedure~\ref{algo:observed-obs}).
This method relies on Assumption~\ref{a:trans-sigma}, which we test with
a Box M-test on
$\Sigma$ (though in practice such a test could only be performed on
$\Sigma_{obs,obs}$). Including only numerical covariates would reject the
null hypothesis ($p-value = 0.034$). Note that beyond violating
Assumption~\ref{a:trans-sigma}, 
some variables  are categorical (\emph{eg} \texttt{race} and \texttt{gender}).
%But the test seems to reject the null hypothesis mainly due heterogeneity with the covariate \texttt{race}, as other covariates show less or no heterogeneity between the two data sources
%\gv{Rephrase and discuss with me}. 
%\jj{Yes}
Further discussions about violation of this assumption are available in appendix (Section~\ref{appendix:assumption-8}).

In this application, applying recommendations from Section~\ref{para:remark-on-sensitiviy-delta} (see paragraph entitled \textit{Data-driven approach to determine sensitivity parameter}) allow us to get $\delta_{\texttt{g1surban}} \sim 11$. We consider that the shift is correctly given by domain expert, and so the true shift is taken with uncertainty corresponding to the 95\% confidence interval of a difference in mean.
Finally,  Figure~\ref{fig:simulation-star-heatmap} allows to conclude on a \sout{positive}\modif{negative} bias, that is $\mathbb{E}[\hat \tau_{n, m, obs}] \leq \tau$. Note that our method underestimate a bit the true bias, with an estimated bias of $-6.4$ when the true bias is $-7.08$, delimited with the continue red curve on the top right. 
%This is due to the fact that the variance-covariance matrix is not the same in both data sample. A section in appendix gives further details (see Section~\ref{appendix:assumption-8}) about the statistical tests and vizualisation.  
%\jj{Oui, il faut reprendre ce paragraphe}
%\jj{tu peux commenter un peu plus la figure 11, ça donne quoi comme interprétation pour les valeurs possibles de l'ATE}
%Note that the expert can use the range of other regression coefficients obtained on the regression on the observed covariates \jj{je comprends pas vraiment}, while giving a range for the shift is probably easier (here for example the selection toward lower values of \texttt{g1surban} in the RCT allows to restrict to a plausible range for the shift above 0).

\begin{figure}[tbh!]
 \begin{minipage}{.48\linewidth}
  \caption{\textbf{Sensitivity analysis of STAR data}: considering the covariate \texttt{g1surban} is missing in the RCT. The black cross indicates the point estimate value for the bias would an expert have the true sensitivity values ($-6.4$) and the true bias value is represented with the red line ($-7.08$). \modif{Dashed lines corresponds to the 95\% confidence intervals around the estimated sensitivity parameters.}}
  %, and the uncertainty is represented with the dashed lines. In practice uncertainty can be given by domain experts. Here we adopt a data-driven approach, representing the 95\% confidence intervals when estimating the empirical difference in means between the RCT and the observational data ($\Delta_{mis}$), and using bootstrap for the coefficient ($\delta_{mis}$).}
     \label{fig:simulation-star-heatmap}
    \end{minipage}%
    \hfill%
    \begin{minipage}{.5\linewidth}
    \begin{center}
           \includegraphics[width=0.9\textwidth]{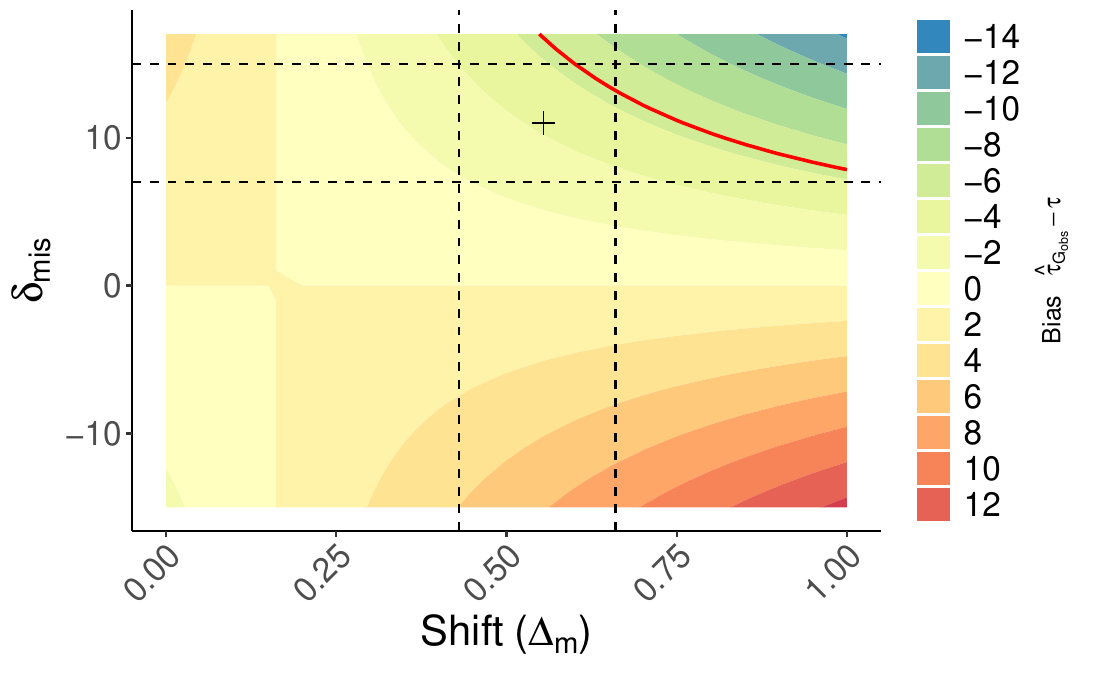}
     \end{center}
    \end{minipage}
\end{figure}

\section{Application on critical care data \label{sec:traumabase}}

%\subsection{Context and existing results}

A motivating application \modifbis{of our}\sout{for this} work is the generalization to
a French target population -- represented by the
\href{http://www.traumabase.eu/fr_FR}{Traumabase} registry --
of the CRASH-3 trial \citep{crash32019},
evaluating Tranexamic Acide (TXA) to prevent death from Traumatic Brain Injury (TBI). %\modif{In this application\modifbis{,} we consider \sout{a surrogate outcome being a score,} the Disability Rating Scale (DRS) \modifbis{as the surrogate outcome}, to be compliant with the continuous outcome framework of our sensitivity analysis\modifbis{. In fact, the primary indicator of the CRASH-3 trial being binary (death after 28 days),} \sout{ as the primary indicator is binary. Therefore} this application is only illustrative} %\es{on est sûr qu'on veut mettre l'accent sur le coté illustratif?} %Here, we focus on a secondary outcome being the Disability Rating Scale (DRS), the primary outcome beeing binary so not in the scope of the present paper.
%The general context is largely presented in \cite{colnet2020causal}, but here we focus on another outcome than head-injury-related death, being the Disability Rating Scale (DRS).

\paragraph{CRASH-3}
A total of 175 hospitals in 29 different countries participated to the randomized and placebo-controlled trial, called CRASH-3 \citep{crash3protocol}, where adults with TBI suffering from intracranial bleeding were randomly administrated TXA \citep{crash32019}. 
The inclusion criteria of the trial are patients with a Glasgow Coma
Scale (GCS)\footnote{The Glasgow Coma Scale (GCS) is a neurological scale
which aims to assess a person's consciousness. The lower the score, the
higher the gravity of the trauma.} score of 12 or lower or any
intracranial bleeding on CT scan, and no major extracranial bleeding\sout{,
leading to 9\,202 patients}. The outcome \modif{we consider in this application}\sout{studied} is the Disability Rating Scale (DRS) after 28 days of injury in patients treated within 3 hours of injury. Such an index is a composite ordinal indicator ranging from 0 to 29, the higher the value, the stronger the disability. \modifbis{This outcome can be considered as a secondary outcome.}%The relative wide range of this indicator, compared to the small amplitude of the treatment effect, allows to consider such a scale continuous in our analysis. \jj{j'enleverais la phrase}
%Note that the DRS is not the primary outcome of CRASH-3, but we use this surrogate to apply the sensitivity analysis on a continuous outcome. 
This outcome has some drawbacks in the sense that TXA diminishes the probability to die from TBI, and therefore may increase the number of high DRS values \citep{brenner2018outcome}.
Therefore, to avoid a censoring or truncation due to death,\sout{we
consider two approaches: either only individuals with
a mild or moderate TBI (according to \cite{crash32019} this corresponds to individuals with a Glasgow score above 8) and who survived are kept \textit{or} either}\modif{ we keep all individuals \modifbis{and set the DRS score of deceased ones to 30}\sout{with deceased individuals imputed a DRS score of 30}.}\sout{Both approaches provide similar results. The first approach is
presented in this section, and the second approach in appendix (see
Section~\ref{appendix:traumabase-restricted-patient}).}
%Note that TXA is expected to have a stronger effect on less injured patient.
%This rating gives insight into the cognitive impairment of the individual who suffered from the TBI.
The difference-in-means estimator\sout{s} gives an ATE of -0.29 with [95\% CI -0.80  0.21]), therefore not giving a significant
evidence of an effect of TXA on DRS.

\paragraph{Traumabase}

To improve decisions and patient care in emergency departments, the
\href{http://www.traumabase.eu/fr_FR}{Traumabase} group, comprising
23 French Trauma centers, collects detailed clinical data from the scene
of the accident to the release from the hospital. The resulting database,
called the Traumabase, comprises 23,000 trauma admissions to date, and is
continually updated. \modif{In this application, we consider only the patients suffering from TBI, along with considering an imputed database. The Traumabase comprises a large number of missing values, this is why we used a multiple imputation via chained equations (MICE) \citep{vanbuuren_2018} prior to \modifbis{applying our methodology}\sout{the application}.} %The data are of unique granularity and size in
%Europe. However, they are highly heterogeneous, with both categorical --
%sex, type of illness, ...-- and quantitative -- blood pressure,
%hemoglobin level, ...-- features, multiple sources, and many missing data
%(in fact 98\% of the individuals have missing values). The cause of
%missing information is also coded, such as technical hurdles with the
%measurement, or impossibility due to the severity of the patient's
%state.\gv{Is this last sentence useful in the context of this paper?} 
\sout{The Traumabase currently comprises around 8,270 patients suffering from TBI.}

\paragraph{Predicting the treatment effect on the Traumabase data}

%The treatment effect demonstrated on the CRASH-3 study is null, but heterogeneous. Indeed, certain patient strata showed a significant treatment effect toward less TBI related deaths.
We want to generalize the treatment effect to the French patients - represented by the Traumabase data base. % - to assess whether or not the average treatment effect could be significant on this target population. 
Six covariates are present at baseline, with age, sex, time since injury, systolic blood pressure, Glasgow Coma Scale score (GCS), and pupil reaction. %Note that such surrogate outcome forces to consider patient who survived, analyzing the severity diminution of the TBI. 
\modif{Sex is not considered in the final sensitivity analysis as a non-continuous covariate, and pupil reaction is considered as continuous ranging from 0 to 2.}
However an important treatment effect modifier is missing, that is the time between treatment and the trauma. 
For example, \cite{crash32020timetotreatment} reveal a 10\% reduction in treatment effectiveness for every 20-min increase in time to treatment (TTT).
In addition TTT is probably shifted between the two populations.
Therefore this covariate breaks assumption~\ref{a:cate-indep-s-knowing-X}
(ignorability on trial participation), and we propose to apply the methods developed in Section~\ref{sec:linear-causal-model}.

\paragraph{Sensitivity analysis}

The concatenated data set with the RCT and observational data contains
12\,496 observations (with $n=8\,977$ and $m=7\,743$). 
Considering a totally\modifbis{-}missing covariate, we apply Procedure~\ref{algo:totally-unobserved}.
We assume that time-to-treatment (TTT) is independent of all other variables, for example the ones related to the patient baseline characteristics (e.g. age) or to the severity of the trauma (e.g. the Glasgow score). 
Clinicians support this assumption as the time to receive the treatment depends on the time for the rescuers to come to the accident area, and not on the other patient characteristics. 
%To help the interpretation of experts, we also consider other variables missing so that it helps general judgment on the missing covariate impact compared to other observed covariates \jj{je ne sais pas si je garderais ça}. 
We first estimate\sout{d} the target population treatment effect with the set of
observed variables and the G-formula estimator\sout{(Definition~\ref{def:g-formula})}, leading to an estimated \sout{target
population} ATE \sout{of} $\hat \tau_{n,m, obs}$ \modifbis{of} -0.08 (95\% CI [-0.50  0.44]). The nuisance parameters are estimated using random forests, and the confidence interval with non-parametric stratified bootstrap.
As the omission of the TTT variable could affect this conclusion, the
sensitivity analysis gives insights on the potential bias.

\modifbis{We apply the method relative to a completely missing covariate (Section~\ref{subsec:totally-unobserved}). 
A common practice in sensitivity analysis is to use observed covariates as benchmark to guess the impact of an unobserved covariates. 
For example, the Glasgow score is also suspected to be a treatment effect modifier and is shifted between the two populations. We place it on a sensitivity map (Figure~\ref{fig:austen-plot-traumabase-glasgow}) along with the true corresponding values for $\delta_{\text{\tiny glasgow}}$ and $\Delta_{\text{\tiny glasgow}}$.}
\modifbis{As the Traumabase contain more individuals with a higher Glasgow score, a positive shift is reported. In addition, the higher the Glasgow score the higher the effect (low DRS), so that $\delta_{\text{\tiny glasgow}} < 0$. Finally, removing the Glasgow score from the analysis would lead to $\hat \tau_{obs,n,m} > \tau$. The sensitivity map does not allow to conclude that this bias is big enough compared to the confidence intervals previously mentioned for $\hat \tau_{obs,n,m}$.}
\modifbis{\textit{Is the TTT a stronger or more shifted covariate than the Glasgow score?} %If yes, this would endanger any conclusion due to this missing covariate. 
Previous publications have suggested a huge impact of TTT, and therefore one could expect a bigger impact on the bias.
On Figure~\ref{fig:austen-plot-traumabase-ttt} we represent a sensitivity map for TTT that could be drawn by domain experts. 
Here, sensitivity parameters are guessed. 
For example, one can suspect that treatment is given on average 20 minutes earlier in the Traumabase (for example interviewing nurses and doctors in Trauma centers), and the coefficient $\delta_{\text{\tiny TTT}}$ is inferred from a previous work on TXA.}
\modifbis{On Figure~\ref{fig:austen-plot-traumabase-ttt}, one can see that not observing TTT has a bigger impact on the bias than not observing the Glasgow score (almost 10 times bigger), suggesting another conclusion: a positive and significant effect of TXA on the Traumabase population, if the sensitivity parameters are correctly guessed.
Also, as soon as there is a risk of a treatment given later than in the CRASH3 trial, this sensitivity map would help raising an alarm on a negative effect on the Traumabase population.}

\begin{figure}[!htb]
    \centering
    \begin{minipage}{.45\textwidth}
        \centering
        \includegraphics[width=0.95\linewidth]{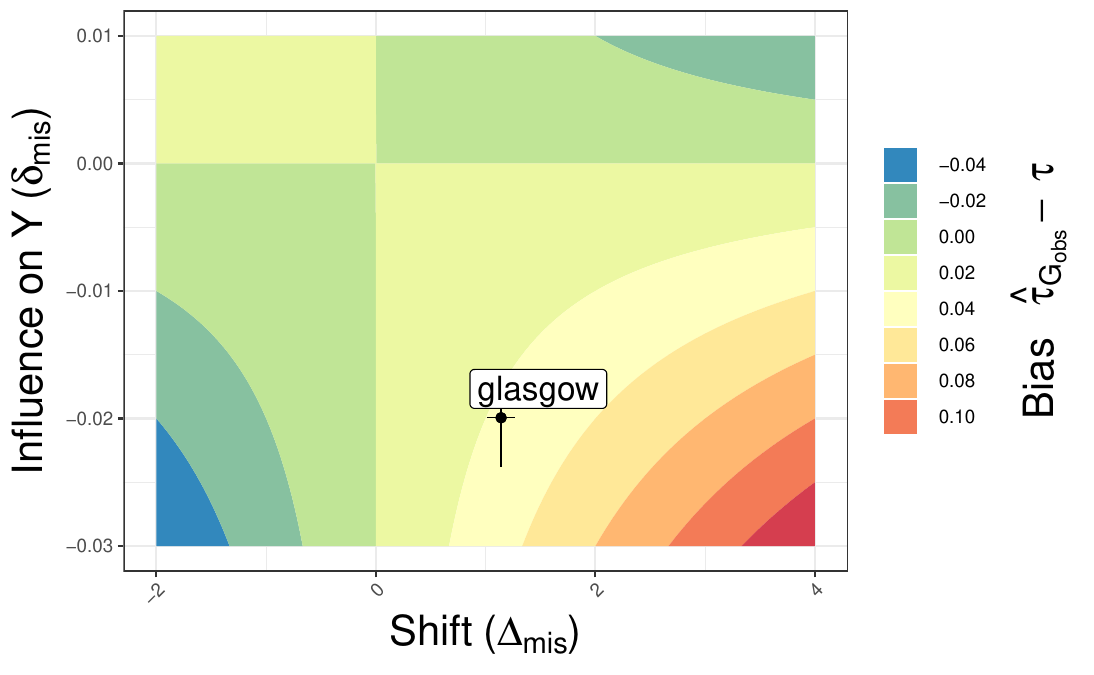}
        \caption{\textbf{Sensitivity map if the Glasgow score covariate was missing}: the true corresponding values for $\delta_{\text{\tiny glasgow}}$ and $\Delta_{\text{\tiny glasgow}}$ are computed with respectively a Robinson proceadure and a mean difference. Intervals correspond to 95\% confidence intervals.}
        \label{fig:austen-plot-traumabase-glasgow}
    \end{minipage}%
    \hspace{0.2cm}
    \begin{minipage}{0.45\textwidth}
        \centering
        \includegraphics[width=0.95\linewidth]{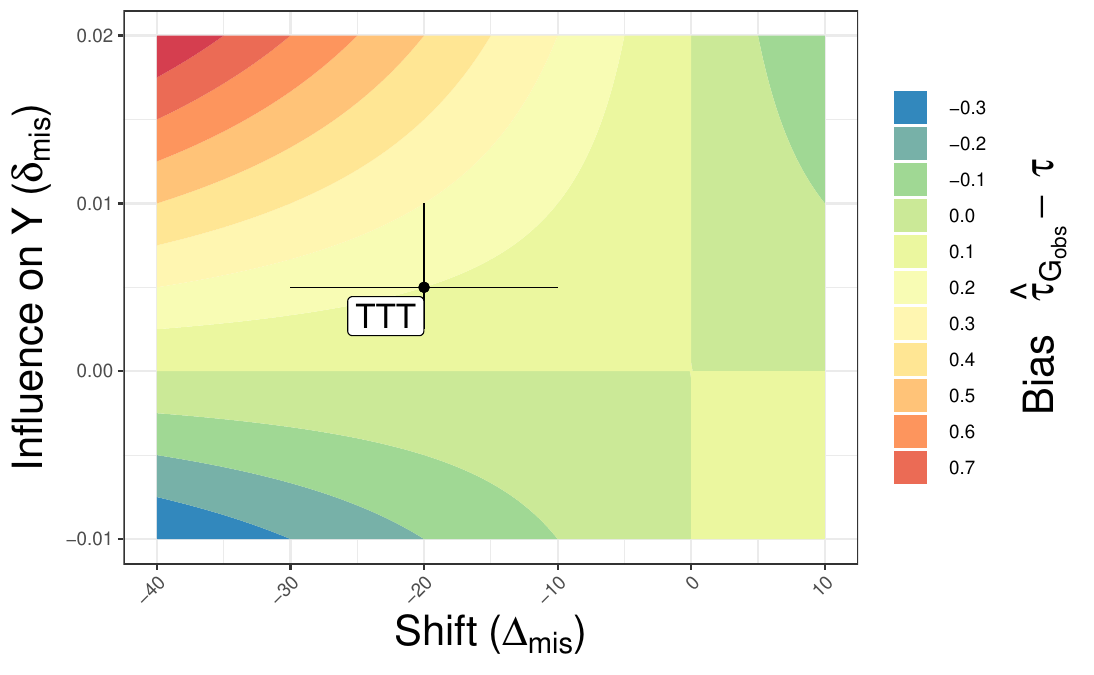}
        \caption{\textbf{Sensitivity map for TTT}: Intervals represent \emph{plausible} parameters range, with a treatment given on average 10 to 30 minutes earlier in the Traumabase, and an heterogeneous coefficient inspired from \cite{crash32020timetotreatment}.}
        \label{fig:austen-plot-traumabase-ttt}
    \end{minipage}
\end{figure}

%Figure~\ref{fig:austen-plot-traumabase} represents the \sout{Austen plot}\modif{sensitivity map} of the sensitivity analysis. 
%\modifbis{To ease visualizations, all covariates are centered and scaled.} \sout{To compare\es{compare?} the treatment effect modifier \modifbis{strength} \sout{influence} and shifts, all covariates are centered and scaled.}
%Therefore the x-axis of the \modif{sensitivity map} \sout{now} corresponds to the shift in standard deviation compared to the RCT sample. The more a covariate is on the right, the higher the expected value of this quantity is in the observational data set \sout{, here the} (Traumabase). 
%
%To give a concrete example, the \texttt{Glasgow.initial} score has a higher expected value in the Traumabase. In particular here the difference is of 0.28 point higher in the Traumabase, corresponds to $\frac{\Delta_{miss}(X)}{\operatorname{Var}(X \mid S = 1)}$, that is the Traumabase contains individuals with higher glasgow score, but this shift is not so big compared to the dispersion in the trial.% \es{à discuter}
%
\sout{The y-axis represents the coefficient of a normalized linear regression. Bootstrap is used to compute the 95\% confidence interval of $\hat \delta_{mis}$ and $\hat \Delta_{mis}$.}
%( \jj{il faut dire en utilisant le package R si tu veux, sinon on ne sait pas de quoi tu parles} \texttt{ranger}). 
\sout{First, one can observe that the bias\modifbis{es} induced are small all along the map. More particularly for the TTT, a shift of $-15$ minutes, corresponding to that is treatment\modifbis{s} \modifbis{that are} is given on average $15$ minutes earlier in the Traumabase, would lead to a normalized shift of $-0.75$ assuming that $\operatorname{Var}(X_{\text{TTT}}) \sim 20$ (minutes). This limit is represented as a dashed line on Figure~\ref{fig:austen-plot-traumabase}. The lower the TTT, the higher the effect, so the coefficient $\delta_{TTT}$ is expected to be positive. As the TTT variable is not}
%supposed to be a stronger treatment effect modifier than the age, should the shift be of $-15$ minutes this would lead to a bias of 0.15 on the treatment effect. This bias would not change the conclusion of no significant effect on the Traumabase.} }%\es{à relire ensemble}
%Furthermore, the sensitivity analysis is in accordance with medical observations: the lower the time-to-treatment, the higher the treatment effect.

%\begin{figure}[H]
% \begin{minipage}{.35\linewidth}
%  \caption{\textbf{CRASH-3 and Traumabase}: Sensitivity analysis performed with the joint data CRASH-3 and Traumabase, applying procedure~\ref{algo:totally-unobserved} with all the observed covariates. Variables have been scaled to interpret coefficient, and centered on the RCT's expected mean so that the shift $\Delta_{mis}$ corresponds to the number of standard deviation(s) from CRASH-3 expected values for each covariate. \modif{The dashed line indicates a plausible shift on TTT for Traumabase individuals.}}
%     \label{fig:austen-plot-traumabase}
%    \end{minipage}%
%    \hfill%
%    \begin{minipage}{.7\linewidth}
%    \begin{center}
  %         \includegraphics[width=0.99\textwidth]{}
 %    \end{center}
  %  \end{minipage}
%\end{figure}

\section*{Conclusion}

\modifbis{In this work,} we have studied sensitivity analyses for causal-effect generalization \modifbis{to assess the impact of a partially-unobserved confounder (either in the RCT or in the observational data set) on the ATE estimation}. \sout{:
the impact of an unobserved confounder, a missing covariate required for identifiability of the
causal effect.}  \sout{Generalization settings deal with two datasets,
one RCT and on observational data.  Hence,} \sout{We propose procedures suited
for when the covariate is missing in one or the other data, or both. We also investigate solving the issue with imputation or a proxy variable.}
In particular:
\begin{enumerate}
\item To go beyond the common requirement that the unobserved confounder
is independent from the observed covariates, we instead assume that their
covariance is transported (Assumption~\ref{a:trans-sigma}). Our
simulation study (\ref{sec:simulation}) shows that even with a slightly
deformed covariance, the proposed sensitivity analysis procedure give\modifbis{s}
useful estimates of the bias. 

\item Leveraging the high interpretability of our sensitivity parameter,
our framework concludes on the sign of the estimated bias. This sign is
important as accepting a treatment effect highly depends on the direction
of the generalization shift.
%\gv{On a decider d'utiliser "external validity shift" ou "generalization shift". Les deux me vont, mais il faut qu'on soit systématique} % oui ! j'ai mis generalization
We integrate the above methods into the existing \sout{Austen plot}\modif{sensitivity map}
visualization, using a heatmap to represent the sign of the estimated bias.
\item Our procedures use a sensitivity parameter with a direct interpretation: \modifbis{the shift in expectation $\Delta_m$  of the missing covariate between the RCT and the observational data.}
We hope that this will \modifbis{ease practical applications of} \sout{facilitate specifying the} sensitivity analyses by domain experts. 
\end{enumerate}

Our proposal inherits limitations from the more standard sensitivity analysis methods with observational data,
namely \modifbis{the} semi-parametric assumption \modifbis{of}\sout{ on} the outcome model along with an \modifbis{hypothesis on covariate structures (Gaussian inputs)}\sout{ Gaussian covariate distribution}.
Therefore, future extensions of this work could explore ways to relax either the parametric assumption or the distributional assumption to support more robust sensitivity analyses. 
Another possible extension to a missing binary covariate could be deduced from this work, \modifbis{in the case where }\sout{but this would require that} this covariate is independent of the others in both population\modifbis{s}. %\modif{Methods from} Section~\ref{subsec:totally-unobserved} can be applied.} \es{la dernière phrase arrive un peu à l'improviste}
%jj{Result, pas si claire, estimation du biais, formule d'analyse de sensibilité?}
%\bc{effectivement ! Entre temps on a des procédures, autant s'en servir !!}

\section*{Acknowledgments} 
We would like to acknowledge helpful discussions with Drs Marine \textsc{Le Morvan}, Daniel \textsc{Malinsky}, and Shu \textsc{Yang}.
We also would like to acknowledge the insights, discussions, and medical expertise from the Traumabase group and physicians, in particular Drs Fran\c{c}ois-Xavier \textsc{Ageron}, and Tobias \textsc{Gauss}. In addition, none of the data analysis part could have been done without the help of Dr. Ian Roberts and the CRASH-3 group, who shared with us the clinical trial data.
\modif{Finally, we thank the reviewers for their careful reading allowing to deeply improve this research work.}

\bibliographystyle{chicago}
\bibliography{bibliography.bib}

\newpage

\appendix

\begin{center}
    {\Large \textbf{Supplementary information}}
\end{center}

\section{Estimators of the target population ATE \label{sec:consistency}}
%\es{modif here}

In this section, we grant assumptions presented in Section~\ref{subsec:notations} and study the asymptotic behavior -- and in particular the $L^1$-consistency -- of three estimators: the G-formula, the IPSW, and the AIPSW.

\subsection{G-formula}
The G-formula procedure and its consistency assumption are detailed in the core text, see Section~\ref{sec:linear-causal-model}, and in particular Definition~\ref{def:g-formula} and Assumption~\ref{a:consistency-mu}. Here, we present the Theorem for consistency.

\begin{theorem}[G-formula consistency]
\label{lem:Consistency_Gformula}
Consider the G-formula estimator in Definition~\ref{def:g-formula} along with Assumptions~\ref{a:RCT-randomization}, \ref{a:eta_1_bounded}, \ref{a:repres}, \ref{a:cate-indep-s-knowing-X}, \ref{a:pos}  (identifiability), and Assumption~\ref{a:consistency-mu} (consistency), then the G-formula estimator converges toward $\tau$ in $L^1$ norm,

\begin{equation*}
    \hat{\tau}_{\text{{\tiny G}}, n, m} \underset{n,m \to \infty}{\overset{L^1}{\longrightarrow}} \tau.
\end{equation*}

\end{theorem}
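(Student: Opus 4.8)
The plan is to show that $\E\bigl[\,\bigl|\hat\tau_{\text{\tiny G},n,m}-\tau\bigr|\,\bigr]\to 0$ by inserting the oracle G-formula quantity $\frac1m\sum_{i\in\mathcal O}\tau(X_i)$ as an intermediate term and applying the triangle inequality:
\begin{equation*}
\E\Bigl[\bigl|\hat\tau_{\text{\tiny G},n,m}-\tau\bigr|\Bigr]
\;\le\;
\E\Bigl[\Bigl|\tfrac1m\textstyle\sum_{i\in\mathcal O}\bigl(\hat\mu_{1,n}(X_i)-\hat\mu_{0,n}(X_i)\bigr)-\tfrac1m\sum_{i\in\mathcal O}\tau(X_i)\Bigr|\Bigr]
\;+\;
\E\Bigl[\Bigl|\tfrac1m\textstyle\sum_{i\in\mathcal O}\tau(X_i)-\tau\Bigr|\Bigr].
\end{equation*}
The second term is handled first and is the easy one: by Assumption~\ref{a:repres} the $X_i$ for $i\in\mathcal O$ are i.i.d.\ draws from the target distribution $\mathcal P(X)$, so $\frac1m\sum_{i\in\mathcal O}\tau(X_i)$ is an empirical mean of i.i.d.\ copies of $\tau(X)$ with expectation $\tau=\E[\tau(X)]$; since $\tau(X)$ is integrable (the potential outcomes are integrable), the weak law of large numbers gives convergence in probability, and because this sequence is bounded in $L^2$ (using that $\mu_a$ are square integrable, which follows from combining Assumption~\ref{a:consistency-mu}(H2-G) in the limit with (H1-G), or is assumed directly), it is uniformly integrable, hence the convergence also holds in $L^1$.

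For the first term I would condition on the RCT sample $\mathcal D_n$, which fixes the fitted functions $\hat\mu_{0,n},\hat\mu_{1,n}$. Writing the summand difference as $\hat\mu_{1,n}(X_i)-\mu_1(X_i)-\bigl(\hat\mu_{0,n}(X_i)-\mu_0(X_i)\bigr)$ and using that, conditionally on $\mathcal D_n$, the $X_i$ ($i\in\mathcal O$) are i.i.d.\ and independent of $\mathcal D_n$, the triangle inequality and Jensen give
\begin{equation*}
\E\Bigl[\Bigl|\tfrac1m\textstyle\sum_{i\in\mathcal O}\bigl(\hat\mu_{1,n}-\hat\mu_{0,n}-\tau\bigr)(X_i)\Bigr|\;\Big|\;\mathcal D_n\Bigr]
\;\le\;
\sum_{a\in\{0,1\}}\E\bigl[\,|\hat\mu_{a,n}(X)-\mu_a(X)|\;\big|\;\mathcal D_n\bigr].
\end{equation*}
The right-hand side is exactly the quantity that Assumption~\ref{a:consistency-mu}(H1-G) forces to $0$ in probability as $n\to\infty$. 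To upgrade this conditional bound to an unconditional $L^1$ statement, I would note the left-hand random variable is bounded (in $\mathcal D_n$, for $n\ge N_1$, in $L^2$) via (H2-G) together with square-integrability of $\mu_a$ — giving a uniform $L^2$ bound on $\E[\,|\hat\mu_{a,n}(X)-\mu_a(X)|\mid\mathcal D_n]$ — so this sequence of conditional expectations is uniformly integrable; a bounded-in-$L^2$ sequence converging to $0$ in probability converges to $0$ in $L^1$, hence taking a further expectation over $\mathcal D_n$ kills the first term.

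The main obstacle is the bookkeeping around the double limit $n,m\to\infty$ and the interchange of conditional and unconditional expectations: one must be careful that $m$ does not appear in the bound for the first term after conditioning (it does not, because the per-sample $L^1$ error does not depend on $m$), and that the uniform-integrability argument converting "convergence in probability of $\E[\cdot\mid\mathcal D_n]$" into "convergence in $L^1$" is applied correctly — this is where Assumption~\ref{a:consistency-mu}(H2-G) does its real work and where I would spend the most care. Everything else is routine: two applications of the triangle inequality, Jensen's inequality, the weak law of large numbers, and the standard fact that $L^2$-boundedness plus convergence in probability implies $L^1$-convergence.
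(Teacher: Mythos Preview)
Your proposal is correct and follows essentially the same route as the paper: insert the oracle $\frac{1}{m}\sum_{i\in\mathcal O}\bigl(\mu_1(X_i)-\mu_0(X_i)\bigr)$, control the first difference by conditioning on $\mathcal D_n$ and using (H1-G) together with the $L^2$-bound (H2-G) to pass from convergence in probability of the conditional error to $L^1$-convergence, and handle the second term by the weak law of large numbers. The paper phrases the upgrade step as ``a generalization of the dominated convergence theorem'' while you phrase it as uniform integrability via $L^2$-boundedness, but these are the same mechanism.
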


\subsection{IPSW}

Another approach, called Inverse Propensity Weighting Score (IPSW), consists in weighting the RCT sample so that is ressembles the target population distribution.

\begin{definition}[Inverse Propensity Weighting Score - IPSW - \cite{stuart2011use, buchanan2018generalizing}]\label{def:ipsw} The IPSW estimator is denoted $\hat \tau_{\ipsw, n, m}$, and defined as
\begin{equation}
\hat \tau_{\ipsw, n, m}= \frac{1}{n} \sum_{i=1}^{n} \frac{n}{m} \frac{Y_i}{\hat \alpha_{n,m}(X_i)} \left( \frac{A_i}{e_{1}(X_i)} - \frac{1-A_i}{1- e_{1}(X_i)} \right)\,,
\end{equation}
where $\hat \alpha_{n, m}$ is an estimate of the odd ratio of the indicatrix of being in the RCT:

$$\alpha(x) 
= \frac{\pr(i\in\mathcal{R}\mid \exists i\in\mathcal{R}\cup\mathcal{O}, X_i=x)}{\pr(i\in\mathcal{O}\mid \exists i\in\mathcal{R}\cup\mathcal{O}, X_i=x)}.$$
This intermediary quantity to estimate, $\alpha(.)$, is called a nuisance component.
\end{definition}

\modifbis{Similarly to the G-formula, we introduce here an assumption on the behavior of the nuisance component $\alpha$ to carry out the mathematical analysis of the IPSW.} \sout{ assume the  consistency of the nuisance components of the IPSW, so that the consistency of the IPSW can be proven.}

\begin{assumption}[Consistency assumptions - IPSW]\label{a:consistency-alpha}
Denoting by $\frac{n}{m  \hat \alpha_{n,m}(x)}$ the estimated weights on the set of observed covariates $X$, 
%\sout{and considering $Y$,} 
the following conditions hold,
\begin{itemize}
    \item (H1-IPSW) $ \sup _{x \in \mathcal{X}}|\frac{n}{m  \hat \alpha_{n,m}(x)} - \frac{f_{X}(x)}{f_{X \mid S = 1}(x)} |=\epsilon_{n,m} \stackrel{a . s .}{\longrightarrow} 0 \text{ , when } n, m \rightarrow \infty$,
    \item (H2-IPSW) for all $n,m$ large enough $\mathbb{E}[\varepsilon_{n,m}^2]$ exists and $\mathbb{E}[\varepsilon_{n,m}^2] \stackrel{a . s .}{\longrightarrow} 0 \text{ , when } n, m \rightarrow \infty$,
    \item (H3-IPSW) $Y$ is square integrable.
\end{itemize}
\end{assumption}

\begin{theorem}[IPSW consistency]
\label{lem:Consistency_IPSW}
Consider the IPSW estimator in Definition~\ref{def:ipsw} along with Assumptions~\ref{a:RCT-randomization}, \ref{a:eta_1_bounded}, \ref{a:repres}, \ref{a:cate-indep-s-knowing-X}, \ref{a:pos} (identifiability), and Assumption~\ref{a:consistency-alpha} (consistency).
Then, \modif{$\hat \tau_{\text{\tiny IPSW}, n, m}$ converges toward $\tau$ in $L^1$ norm,}
\begin{equation*}
     \hat{\tau}_{\text{{\tiny IPSW}}, n, m} \underset{n,m \to \infty}{\overset{L^1}{\longrightarrow}} \tau.
\end{equation*}

\begin{comment}
\modif{Providing that the potential outcomes are square integrable,
%\es{Rajouter la consistance $L_1$ ici pour en déduire que l'estimateur est asymptotiquement non biaisé}

\begin{equation*}
        \sqrt{n}\left(\hat \tau_{\text{\tiny IPSW}, n, m} - \tau \right)  \stackrel{d}{\rightarrow} \mathcal{N}\left( 0,V_{\text{\tiny IPSW}} \right),
\end{equation*}
where \begin{equation*}
V_{\text {\tiny IPSW }} = \frac{1}{n} \left( \mathbb{E}\left[\left(  \frac{f_X(x)}{f_{X \mid S = 1}(x)}  \right)^2  \left( \frac{\left( Y^{(0)} \right)^2}{1-e(X)} + \frac{\left( Y^{(1)} \right)^2 }{e(X)} \right) \mid S=1 \right]  - \tau^2 \right).  
\end{equation*}}

\end{comment}
\end{theorem}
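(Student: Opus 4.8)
Here is the plan I would follow.

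\textbf{Overall strategy.} Write $\hat w_{n,m}(x):=n/\big(m\,\hat\alpha_{n,m}(x)\big)$ for the estimated weight, so that by Definition~\ref{def:ipsw} the estimator is $\hat\tau_{\ipsw,n,m}=\frac1n\sum_{i\in\mathcal{R}}\hat w_{n,m}(X_i)\,Y_i\big(\tfrac{A_i}{e_1(X_i)}-\tfrac{1-A_i}{1-e_1(X_i)}\big)$, and let $w(x):=f_X(x)/f_{X\mid S=1}(x)$ be the true density ratio that $\hat w_{n,m}$ approximates by (H1-IPSW). The idea is to compare $\hat\tau_{\ipsw,n,m}$ to its oracle version built from $w$ and to control the difference using the uniform consistency of the nuisance. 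Decompose
\[
\hat\tau_{\ipsw,n,m} \;=\; T_n \;+\; R_{n,m}, \qquad
T_n := \frac1n\sum_{i\in\mathcal{R}} w(X_i)\,Y_i\Big(\tfrac{A_i}{e_1(X_i)}-\tfrac{1-A_i}{1-e_1(X_i)}\Big),
\]
with $R_{n,m}$ the analogous sum where $w(X_i)$ is replaced by $\hat w_{n,m}(X_i)-w(X_i)$. I would then prove $T_n\to\tau$ in $L^1$ and $R_{n,m}\to 0$ in $L^1$.

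\textbf{Step 1 (the oracle term).} By Bayes' rule, $w(x)=\pr(S=1)/\pr(S=1\mid X=x)$, which is bounded by $1/c$ thanks to Assumption~\ref{a:pos}. Since in addition $\big|\tfrac{A_i}{e_1(X_i)}-\tfrac{1-A_i}{1-e_1(X_i)}\big|\le 1/\eta_1$ (exactly one of $A_i$, $1-A_i$ is nonzero, and Assumption~\ref{a:eta_1_bounded}) and $Y$ is square integrable given $S=1$ by (H3-IPSW), the summands of $T_n$ are i.i.d.\ with finite second moment, so the $L^2$ law of large numbers gives $T_n \to \E\left[w(X)\,Y\big(\tfrac{A}{e_1(X)}-\tfrac{1-A}{1-e_1(X)}\big)\mid S=1\right]$ in $L^1$. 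Taking the conditional expectation given $(X,S=1)$ inside, and using $Y=AY(1)+(1-A)Y(0)$ together with treatment randomization within the RCT (Assumption~\ref{a:RCT-randomization}, which gives $\E[AY(1)\mid X,S=1]=e_1(X)\,\E[Y(1)\mid X,S=1]$ and likewise for the control arm), the inner conditional expectation equals $\E[Y(1)-Y(0)\mid X,S=1]=\tau_1(X)=\tau(X)$, the last identity by Assumption~\ref{a:cate-indep-s-knowing-X}. A change of measure, together with Assumption~\ref{a:repres}, then yields
\[
\E\left[w(X)\,\tau(X)\mid S=1\right]=\int w(x)\,\tau(x)\, f_{X\mid S=1}(x)\,\de x = \int \tau(x)\, f_X(x)\,\de x = \E\left[\tau(X)\right]=\tau ,
\]
where boundedness of $w$ and integrability of $\tau(X)$ make the manipulation legitimate. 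Hence $T_n\to\tau$ in $L^1$.

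\textbf{Step 2 (the remainder) and main obstacle.} Bounding the propensity factor by $1/\eta_1$ as above and $|\hat w_{n,m}(X_i)-w(X_i)|$ by $\epsilon_{n,m}:=\sup_x|\hat w_{n,m}(x)-w(x)|$ from (H1-IPSW), one gets $|R_{n,m}|\le \eta_1^{-1}\,\epsilon_{n,m}\,\tfrac1n\sum_{i\in\mathcal{R}}|Y_i|$, and Cauchy--Schwarz gives
\[
\E\left[\,|R_{n,m}|\,\right]\;\le\; \eta_1^{-1}\sqrt{\E\left[\epsilon_{n,m}^2\right]}\;\sqrt{\E\left[\Big(\tfrac1n\sum_{i\in\mathcal{R}}|Y_i|\Big)^{2}\right]}\;\le\; \eta_1^{-1}\sqrt{\E\left[\epsilon_{n,m}^2\right]}\;\sqrt{\E\left[Y^2\mid S=1\right]} ,
\]
using Jensen for the empirical average. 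The right-hand side vanishes by (H2-IPSW) and (H3-IPSW), so $R_{n,m}\to0$ in $L^1$; combined with Step~1 this proves $\hat\tau_{\ipsw,n,m}\to\tau$ in $L^1$. The delicate point — and the reason both (H1-IPSW) and its $L^2$ version (H2-IPSW) are assumed — is that $\hat w_{n,m}$ is constructed from the data (in particular from the RCT sample) and is therefore \emph{not} independent of the $Y_i$ appearing in $R_{n,m}$, which precludes a clean conditioning argument; the uniform sup-norm control of the weight error, pulled out of the sum, is exactly what lets Cauchy--Schwarz close the bound without any independence. A minor additional care is needed to ensure $\hat w_{n,m}$ is well defined for $n,m$ large (i.e.\ $\hat\alpha_{n,m}$ bounded away from $0$), which is implicit in (H1-IPSW).
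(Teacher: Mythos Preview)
Your proposal is correct and follows essentially the same route as the paper: decompose into an oracle IPSW built with the true density ratio $w(x)=f_X(x)/f_{X\mid S=1}(x)$ plus a remainder, show the oracle converges to $\tau$ by a law of large numbers, and kill the remainder by pulling out the sup-norm error $\epsilon_{n,m}$ and applying Cauchy--Schwarz with (H2-IPSW) and (H3-IPSW). Your Step~1 is in fact slightly more explicit than the paper's (you spell out the boundedness of $w$ via Assumption~\ref{a:pos} and the identification of the limit via Assumptions~\ref{a:RCT-randomization} and~\ref{a:cate-indep-s-knowing-X}, whereas the paper just asserts the oracle converges to $\tau$), but the skeleton and the key technical device---using uniform control of the weight error to sidestep the dependence between $\hat w_{n,m}$ and the $Y_i$---are identical.
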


%\bc{@Erwan: est-ce que dans ce cas on ne mettrait pas "que" le second résultat, ou c'est bien de garder la consistence et l'asymptoticité?}
%\es{Il vaut mieux laisser les deux car la convergence ps ne découle pas de la convergence ne loi.}

Theorem~\ref{lem:Consistency_IPSW} establish\modif{es} the consistency of IPSW in a more general framework than that of \cite{cole2010generalizing, stuart2011use, bareinboim2013general, buchanan2018generalizing, egami2021covariate}, assuming neither oracle estimator, nor parametric assumptions on $\alpha(.)$. 
%\sout{Proof is available in appendix (see Section~\ref{sec:proof-consistency-ipsw-gformula}).}

\subsection{AIPSW}
The model for the expectation of the outcomes among randomized individuals
(used in the G-estimator in Definition~\ref{def:g-formula})
and the model for the probability of trial
participation (used in IPSW estimator in Definition~\ref{def:ipsw}) can be combined to form
an Augmented IPSW estimator (AIPSW) that has a doubly robust statistical property. 
%\sout{Doubly robust estimators allow to have two chances to specify the nuisance models correctly, either the model for the probability of participation or the expectation of the outcome.}

\begin{definition}[Augmented IPSW - AIPSW - \cite{dahabreh2019generalizing}]\label{def:aipsw}  The AIPSW estimator is denoted $\hat \tau_{\text{\tiny AIPSW},n,m}$, and defined as
\begin{equation*}
\begin{split}
\hat \tau_{\text{\tiny AIPSW},n,m}= \frac{1}{n} \sum_{i=1}^{n}  \frac{n}{m \, \hat \alpha_{n,m}(X_i)}\left[\frac{A_{i}\left (Y_{i}-\hat\mu_{1,n}(X_{i})\right ) }{e_{1}(X_i)}-\frac{(1-A_{i})\left ( Y_{i}-\hat\mu_{0,n}(X_{i})\right ) }{1-e_{1}(X_i)}\right]
\nonumber \\
+\frac{1}{m}\sum_{i=n+1}^{m+n}\left( \hat\mu_{1,n}(X_{i})-\hat\mu_{0,n}(X_{i})\right ).
\end{split}
\end{equation*}
\end{definition}

\modifbis{Recently, it has been shown that the AIPSW estimator can be derived from the influence function of the parameter $\tau$  \citep[see][]{dahabreh2019generalizing}. Under additional conditions on the rate of convergence of the nuisance parameters, it is possible to obtain asymptotic normality results\footnote{A primer for semiparametric theory can be found in \cite{kennedy2016semiparametric}.}. As in this work we only require $L^1$-consistency for the sensitivity analysis to hold, we therefore do not detail asymptotic normality conditions.}

To \sout{ensure} \modifbis{prove} AIPSW consistency, \modifbis{we make the following assumptions on the nuisance parameters.} 
\sout{additional assumptions are required on the procedure and the estimated nuisance parameters.}

\begin{assumption}[Consistency assumptions - AIPSW]\label{a:consistency-aipsw}
The nuisance parameters are bounded, and more particularly

\begin{itemize}
      \item (H1-AIPSW) There exists a function $\alpha_0$ bounded from
above and below (from zero), satisfying $$ \lim\limits_{m,n \to \infty}
\sup_{x \in \mathcal{X}}\biggr| \frac{n}{m\hat \alpha_{n,m}(x)} -
\frac{1}{\alpha_0(x)} \biggl| = 0,$$
    % \item (H2-AIPSW) $\exists \mu_M > 0, \forall a \in \{0, 1\}, \sup_{x \in \mathcal{X}}|\hat \mu_{a,n}| = \mu_M$,
    %\item (H2-AIPSW) The estimated surface responses $\hat \mu_{a,n}(.)$ where $a \in \{0, 1\}$ are obtained following a cross-fitting estimation,
    \item (H2-AIPSW) %The estimated surface responses $\hat \mu_{a,n}(.)$ where $a \in \{0, 1\}$ converge towards functions, even if these functions are not the true surface responses. More precisely, 
    There exist two bounded functions $\xi_1, \xi_0: \mathcal{X} \to \R$, such that $\forall a \in \{0, 1\},$ 
    $$\lim\limits_{n \rightarrow +\infty} \sup_{x \in \mathcal{X}}|\xi_{a}(x) - \hat \mu_{a,n}(x)| = 0.$$
    %and, for all $ i \in \{1, \dots, n\}$,
    %$$\lim\limits_{n \rightarrow +\infty} \sup_{x \in \mathcal{X}}|\xi_{a}(x) - \hat \mu_{a,n}(x)| = 0.$$
\end{itemize}

%\jj{super, si tu comprends bien pourquoi on a besoin de cross fitting, il faudra nous le dire à tous, imke, moi, paul et judith} \bc{Dans la démo on voit que c'est pour ne pas overfitter, mais je ne sais pas si ça explique le cas "classique"}
\end{assumption}

\begin{theorem}[AIPSW consistency]
\label{lem:Consistency_AIPSW}
Consider the AIPSW estimator in Definition~\ref{def:aipsw}, along with Assumptions~\ref{a:RCT-randomization}, \ref{a:eta_1_bounded}, \ref{a:repres}, \ref{a:cate-indep-s-knowing-X}, \ref{a:pos} hold (identifiability), and Assumption~\ref{a:consistency-aipsw} (consistency). Considering that estimated surface responses $\hat \mu_{a,n}(.)$ where $a \in \{0, 1\}$ are obtained following a cross-fitting estimation, then if Assumption~\ref{a:consistency-mu} \textbf{or} Assumption~\ref{a:consistency-alpha} also holds then, \modif{$\hat \tau_{\text{\tiny AIPSW}, n, m}$ converges toward $\tau$ in $L^1$ norm,}
\begin{equation*}
     \hat{\tau}_{\text{{\tiny AIPSW}}, n, m} \underset{n,m \to \infty}{\overset{L^1}{\longrightarrow}} \tau.
\end{equation*}
\end{theorem}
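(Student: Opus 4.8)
The plan is to decompose the AIPSW estimator into three pieces and control each one: the part that looks like the G-formula applied to the observational sample, the part that looks like a corrected IPSW term on the RCT, and a cross term. Concretely, write $\hat\tau_{\text{\tiny AIPSW},n,m} = \hat\tau_{\text{\tiny G},n,m} + \hat R_{n,m}$, where $\hat R_{n,m}$ is the RCT weighted-residual sum involving $Y_i - \hat\mu_{A_i,n}(X_i)$. Because the claim is $L^1$-convergence and we get to assume \emph{either} Assumption~\ref{a:consistency-mu} \emph{or} Assumption~\ref{a:consistency-alpha}, I would treat the two regimes separately, as is standard for doubly robust estimators.

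\emph{Case 1: the outcome models are consistent (Assumption~\ref{a:consistency-mu}).} Then Theorem~\ref{lem:Consistency_Gformula} already gives $\hat\tau_{\text{\tiny G},n,m} \to \tau$ in $L^1$. It remains to show the residual term $\hat R_{n,m} \to 0$ in $L^1$. Here I would exploit cross-fitting: conditional on the fold used to fit $\hat\mu_{a,n}$, the weights (bounded above by (H1-AIPSW)) and the propensity $e_1$ (bounded away from $0$ and $1$ by Assumption~\ref{a:eta_1_bounded}) are fixed, and by treatment randomization in the RCT (Assumption~\ref{a:RCT-randomization}) the conditional expectation of $A_i(Y_i-\hat\mu_{1,n}(X_i))/e_1(X_i) - (1-A_i)(Y_i-\hat\mu_{0,n}(X_i))/(1-e_1(X_i))$ given $X_i$ equals $\mu_1(X_i)-\hat\mu_{1,n}(X_i) - (\mu_0(X_i)-\hat\mu_{0,n}(X_i))$. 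Taking absolute values, multiplying by the bounded weight, and using (H1-G) to send $\mathbb{E}[|\hat\mu_{a,n}(X)-\mu_a(X)| \mid \mathcal{D}_n] \to 0$, together with (H2-G) and (H3-IPSW)-type square-integrability to control a variance/remainder term of order $1/\sqrt{n}$, yields $\mathbb{E}|\hat R_{n,m}| \to 0$. A Cauchy--Schwarz bound plus dominated convergence (the squared-integrability bounds $C_1$ give uniform integrability) turns the in-probability statements into $L^1$ statements.

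\emph{Case 2: the weight model is consistent (Assumption~\ref{a:consistency-alpha}).} Now I would instead group terms as a genuine IPSW estimator plus an augmentation that vanishes. Using the limiting weight $f_X/f_{X\mid S=1}$ and (H1-IPSW), the leading $Y_i$-term reproduces $\hat\tau_{\text{\tiny IPSW},n,m}$, which converges to $\tau$ in $L^1$ by Theorem~\ref{lem:Consistency_IPSW}; the augmentation collects $-\frac{n}{m\hat\alpha_{n,m}(X_i)}[A_i\hat\mu_{1,n}(X_i)/e_1(X_i) - (1-A_i)\hat\mu_{0,n}(X_i)/(1-e_1(X_i))]$ on the RCT plus $\frac{1}{m}\sum(\hat\mu_{1,n}-\hat\mu_{0,n})$ on the observational sample. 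By (H2-AIPSW) the $\hat\mu_{a,n}$ converge uniformly to bounded limits $\xi_a$ (not necessarily $\mu_a$), and by randomization the RCT average of $A_i\hat\mu_{1,n}(X_i)/e_1(X_i)$ conditionally on the fitting fold concentrates around $\mathbb{E}[\hat\mu_{1,n}(X)\mid S=1]$ times the limiting weight, i.e. around $\mathbb{E}[\xi_1(X)]$ once reweighted toward the target law; the same limit appears with opposite sign from the observational term, so the augmentation cancels in the limit. Again Cauchy--Schwarz plus the boundedness in (H1-AIPSW)--(H2-AIPSW) upgrades the convergence to $L^1$.

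The main obstacle I anticipate is the bookkeeping around cross-fitting and the interchange of limits: one must be careful that the ``conditional on the fitting fold'' arguments are applied to the correct subsample, that the bounded-weight and bounded-nuisance hypotheses genuinely give uniform integrability of each summand (so that $\stackrel{p}{\to} 0$ plus a uniform $L^2$ bound yields $L^1$ convergence), and that the two cases are stitched together without implicitly using both sets of assumptions at once. The double-robustness cancellation in Case 2 — showing the augmentation term's RCT and observational pieces have the same probability limit even when $\xi_a \neq \mu_a$ — is the delicate step, since it relies on Assumption~\ref{a:repres} to identify $\frac{1}{m}\sum_{i\in\mathcal{O}}\hat\mu_{a,n}(X_i)$ with an expectation under the target law $\mathcal{P}(X)$, and on (H1-IPSW)/(H1-AIPSW) to identify the reweighted RCT average with the same expectation.
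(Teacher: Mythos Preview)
Your approach is essentially the same as the paper's: split into the two cases, and in each case write $\hat\tau_{\text{\tiny AIPSW},n,m}$ as (consistent piece by the relevant assumption) $+$ (remainder to be shown $L^1$-small). In Case~1 the paper also writes $\hat\tau_{\text{\tiny AIPSW},n,m}=A_{n,m}-B_{n,m}+C_{n,m}$ with $C_{n,m}=\hat\tau_{\text{\tiny G},n,m}$, and in Case~2 it extracts the IPSW term plus an augmentation whose RCT and observational parts both converge to $\mathbb{E}[\xi_1(X)]-\mathbb{E}[\xi_0(X)]$ and cancel, exactly the double-robustness mechanism you describe.

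One refinement the paper makes that you gloss over: in Case~1 you say ``the weights (bounded above by (H1-AIPSW)) \ldots are fixed'' when conditioning on the fitting fold. But $\hat\alpha_{n,m}$ is built from \emph{all} data, including the RCT point $X_i$ at which you evaluate the residual, so it is not fixed under that conditioning. The paper sidesteps this cleanly by further splitting the residual $A_{n,m}=A_{n,m,1}+A_{n,m,2}$, replacing $\tfrac{n}{m\hat\alpha_{n,m}}$ by the deterministic limit $1/\alpha_0$: the difference $A_{n,m,1}$ is controlled by the uniform convergence in (H1-AIPSW) together with integrability of $Y$ and boundedness of $\xi_a$, while $A_{n,m,2}$ now has a \emph{deterministic} bounded weight and cross-fitting makes your conditioning argument rigorous. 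The same trick is used in Case~2 to peel off $E_{n,m},F_{n,m}$ (weight-error pieces) before showing the oracle-weighted $G_n$ cancels with $C_{n,m}$. Your sketch is correct in spirit; this extra add-and-subtract of the limiting weight is what turns it into a clean proof and removes the need for the ``$1/\sqrt{n}$ variance remainder'' you mention, which the paper does not use.
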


\sout{Proof is available in appendix (see Section~\ref{sec:proof-consistency-ipsw-gformula}).}
%The doubly-robustness to partial mispecification is illustrated with simulations in \cite{dahabreh2020extending, dong2020integrative, colnet2020causal}.
\sout{Note that the AIPSW estimator can be derived from the influence function of the parameter $\tau$  \citep[see][]{dahabreh2019generalizing}. Under additional conditions on the rate of convergence of the nuisance parameters, it is possible to obtain asymptotic normality results. A primer for semiparametric theory can be found in \cite{kennedy2016semiparametric}. In this work we only require $L^1$-consistency for the sensitivity analysis to hold, and therefore we do not detail asymptotic normality conditions.}

\section{\sout{Consistency}\modif{$L^1$-convergence} of G-formula, IPSW, and AIPSW \label{sec:proof-consistency-ipsw-gformula}}

This appendix contains the proofs of theorems given in  Section~\ref{sec:consistency}. \modif{We recall that this work completes and details existing theoretical work performed by \cite{buchanan2018generalizing} on IPSW (but focused on a so called nested-trial design and assuming parametric model for the weights) and from \cite{dahabreh2020extending} developing results within the semi-parametric theory.}

\subsection{\sout{Consistency}\modif{$L^1$-convergence} of G-formula}

This section contains the proof of Theorem~\ref{lem:Consistency_Gformula}, which assumes Assumption~\ref{a:consistency-mu}. For the state of clarity, we recall here Assumption~\ref{a:consistency-mu}.
Denoting $\hat \mu_{0,n}(.)$ and $\hat \mu_{1,n}(.)$ estimators of $\mu_{0}(.)$ and $\mu_{1}(.)$ respectively, and $\mathcal{D}_n$ the RCT sample, so that 
\begin{itemize}
    \item (H1-G) For $a \in \{0,1\}$, $\mathbb{E}\left[ | \hat \mu_{a,n}(X) -  \mu_{a}(X) | \mid \mathcal{D}_n\right] \stackrel{p}{\rightarrow}  0$ when $n \rightarrow \infty$, %\jj{c'est bien E(ua(x)) et pas ua(x)?} \bc{oui :) Avec Erwan on a pu alléger tout ça}
    %\es{il faut ajouter des valeurs absolues dans l'hypothèse}
    \item (H2-G) For $a \in \{0,1\}$, there exist $C_1, N_1$ so that for all $ n \geqslant N_{1}$, almost surely, $\mathbb{E}[\hat{\mu}_{a,n}^2(X) \mid \mathcal{D}_n] \leqslant C_1$.
\end{itemize}

\begin{proof}[Proof of Theorem~\ref{lem:Consistency_Gformula}]

In this proof, we largely rely on a oracle estimator $\hat{\tau}_{\text{\tiny G}, \infty, m}^*$ (built with the true response surfaces), defined as

\begin{equation*}
    \hat \tau_{\text{\tiny G}, \infty, m}^* = \frac{1}{m} \sum_{i=n+1}^{n+m} \mu_1(X_i) - \mu_0(X_i).
\end{equation*}

The central idea of this proof is to compare the actual G-formula $\hat{\tau}_{\text{\tiny G}, n, m}$ - on which the nuisance parameters are estimated on the RCT data - with the oracle. 

\begin{comment}
\vspace{0.2cm}
\textbf{Asymptotic normality of $\hat \tau_{\text{\tiny G}, \infty, m}^*$}\\

Assuming the potential outcomes $Y(0)$ and $Y(1)$ are square integrable, and using the Central Limit Theorem, one can conclude that this oracle estimator converges in law toward $\tau$, that is
\begin{equation*}
    \sqrt{m}(\hat \tau_{\text{\tiny G}, \infty, m}^* - \tau) \stackrel{d}{\rightarrow} \mathcal{N}\left(0, \operatorname{Var}\left( \tau\left(X\right)\right) \right).
\end{equation*}

\end{comment}
\vspace{0.2cm}
\textbf{$L^1$-convergence of the surface responses}\\

For the proof, we will require that the estimated surface responses $\hat \mu_{1,n}(.)$ and $\hat \mu_{0,n}(.)$ converge toward the true ones in $L^1$. This is implied by assumptions (H1-G) and (H2-G). Indeed, for all $n>0$ and all $a \in \{0, 1\}$, thanks to the triangle inequality and linearity of expectation, we have
\begin{align*}
 \mathbb{E}\left[ \left| \hat \mu_{a,n}(X) - \mu_{a}(X) \right| \mid \mathcal{D}_n \right] &\le     \mathbb{E}\left[ \left| \hat \mu_{a,n}(X)\right| \mid \mathcal{D}_n \right] +\mathbb{E}\left[ \left| \mu_{a}(X)\right| \mid \mathcal{D}_n \right] \\
 &=    \underbrace{\mathbb{E}\left[ \left| \hat \mu_{a,n}(X)\right| \mid \mathcal{D}_n \right]}_\textrm{(*)} + \underbrace{\mathbb{E}\left[ \left| \mu_{a}(X)\right| \right]}_\textrm{(**)}.
\end{align*}

First, note that the quantity (*) is upper bounded thanks to assumption (H2-G), using Jensen's inequality.
Also note that the quantity (**) is upper bounded because the potential outcomes are integrables, that is $\mathbb{E}[|Y(1)|]$ and $\mathbb{E}[|Y(0)|]$ exist (see Section~\ref{subsec:notations}).
%\es{Ici, il faudrait plutôt utiliser le fait que $\mathbb{E}[|Y(1)|]$ existe, je n'ai pas vu où c'était écrit explicitement dans les hypothèses de la section notation}

Therefore $\mathbb{E}\left[ \left| \hat \mu_{a,n}(X) - \mu_{a}(X) \right| \mid \mathcal{D}_n \right] $ is upper bounded. Consequently, using (H2-G) and a generalization of the dominated convergence theorem, one has

\begin{equation*}
     \mathbb{E}\left[  \left| \hat \mu_{a,n}(X) - \mu_{a}(X) \right| \right] = \mathbb{E}\left[  \mathbb{E}\left[ \left| \hat \mu_{a,n}(X) - \mu_{a}(X) \right| \mid \mathcal{D}_n \right]  \right] \underset{n \to \infty}{\longrightarrow} 0,
\end{equation*}
%\es{uniformiser les $\mu_{a,n}$ et $\mu_{n,a}$}
%\es{attention, on a besoin normalement de la convergence ps pour le théorème de convergence dominée. Mais si on cite la généralisation du th de convergence dominée, c'est bon \url{https://math.stackexchange.com/questions/206851/generalisation-of-dominated-convergence-theorem} }
which implies 

\begin{equation*}
     \forall a \in \{0, 1\}, \, \hat{\mu}_{a,n}(X) \underset{n \to \infty}{\overset{L^1}{\longrightarrow}} \mu_{a}(X).
\end{equation*}

\vspace{0.2cm}
\textbf{$L^1$-convergence of $\hat{\tau}_{\text{\tiny G}, n, m}$ toward $\tau$}\\

For all $m,n >0$, 
\begin{align*}
    \hat{\tau}_{\text{\tiny G}, n, m} - \hat \tau_{\text{\tiny G}, \infty, m}^* &= \frac{1}{m} \sum_{i=n+1}^{n+m} \left( \hat \mu_{1,n}(X_i) -  \hat \mu_{0,n}(X_i)\right) - \left(  \mu_1(X_i) - \mu_0(X_i)\right) \\
    &= \frac{1}{m} \sum_{i=n+1}^{n+m} \left( \hat \mu_{1,n}(X_i) - \mu_1(X_i) \right) - \left( \hat \mu_{0,n}(X_i) - \mu_0(X_i) \right).
\end{align*}

Therefore, taking the expectation of the absolute value on both side, and using the triangle inequality and the fact that observations are iid,
\begin{align*}
    \mathbb{E}\left[ \left|  \hat{\tau}_{\text{\tiny G}, n, m} - \hat \tau_{\text{\tiny G}, \infty, m}^* \right| \right]&=  \mathbb{E}\left[ \left|  \frac{1}{m} \sum_{i=n+1}^{n+m} \left( \hat \mu_{1,n}(X_i) - \mu_1(X_i) \right) - \left( \hat \mu_{0,n}(X_i) - \mu_0(X_i) \right)   \right| \right] \\ 
   & \le \mathbb{E}\left[\left| \hat \mu_{1,n}(X) - \mu_1(X) \right| \right] + \mathbb{E}\left[\left| \hat \mu_{0,n}(X) - \mu_0(X)   \right| \right] .
\end{align*}

Note that this last inequality can be obtained because different observations are used to $(i)$ build the estimated surface responses $\hat{\mu}_{a,n}$ (for $a \in \{0,1\}$) and $(ii)$ to evaluate these estimators. %Therefore all observations on which the expectation is taken are independent and identically distributed
Indeed, the proof would be much more complex if the sum was taken over the $n$ observations used to fit the models.
Due to the $L^1$-convergence of each of the surface response  when $n \rightarrow \infty$ (see the first part of the proof), we have 
%this ensures the convergence of the quantity on the right toward 0, so that the quantity on the left is also convergent toward 0, that is,

\begin{equation*}\lim_{n\to\infty}  \mathbb{E}\left[ \left|  \hat{\tau}_{\text{\tiny G}, n, m} - \hat \tau_{\text{\tiny G}, \infty, m}^* \right| \right] = 0. \end{equation*}

In other words,

\begin{equation}\label{eq:L1-convergence-oracle}
    \forall m, \, \hat{\tau}_{\text{{\tiny G}}, n, m} \underset{n \to \infty}{\overset{L^1}{\longrightarrow}} \hat \tau_{\text{\tiny G}, \infty, m}^*.
\end{equation}

This equality is true for any $m$, and intuitively can be understood as the fitted response surfaces $\hat \mu_{a,n}(.)$ can be very close to the true ones as soon as $n$ is large enough. Then, the G-formula estimator, no matter the size of the observational data set, is close to the oracle one in $L^1$. Hence one can deduce a result on the difference between $\tau$ and the G-formula,
\begin{align*}
     \mathbb{E}\left[ \left|  \hat{\tau}_{\text{\tiny G}, n, m} - \tau \right|\right] %&= \mathbb{E}\left[ \left|  \hat{\tau}_{\text{\tiny G}, n, m} - \hat \tau_{\text{\tiny G}, \infty, m}^* + \hat \tau_{\text{\tiny G}, \infty, m}^*- \tau \right|\right] \\
     &\le  \mathbb{E}\left[ \left| \hat{\tau}_{\text{\tiny G}, n, m} - \hat \tau_{\text{\tiny G}, \infty, m}^* \right|\right] + \mathbb{E}\left[ \left|  \hat \tau_{\text{\tiny G}, \infty, m}^*- \tau \right|\right].
\end{align*}
According to the weak law of large number, we have 
\begin{align*}
 \hat \tau_{\text{\tiny G}, \infty, m}^* \underset{m \to \infty}{\overset{L^1}{\longrightarrow}} \tau
%\lim\limits_{m \to \infty} \mathbb{E}\left[ \left| - \tau \right|\right] = 0.
\end{align*}
Combining this result with equation~\eqref{eq:L1-convergence-oracle}, we have 
%implies the convergence of the second right-hand side term $\mathbb{E}\left[ \left|  \hat \tau_{\text{\tiny G}, \infty, m}^*- \tau \right|\right]$, as when $m \to \infty$ it states that $\hat \tau_{\text{\tiny G}, \infty, m}^*$ converges in probability toward $\tau$. And as $\hat \tau_{\text{\tiny G}, \infty, m}^*$ is unbiased, this provides the convergence in $L^2$ norm, which implies the $L^1$ convergence of $\hat \tau_{\text{\tiny G}, \infty, m}^*$ toward $\tau$. 
%\bc{@Erwan: à valider car on a dit que c'était pas une application naive de la loi des grands nombres?}
%Using \eqref{eq:L1-convergence-oracle} (when $n \to \infty$) the first term on the left converges toward 0 almost surely. 
%This allows to conclude on the $L^1$ convergence of the G-formula toward $\tau$, 
\begin{equation*}
    \hat{\tau}_{\text{{\tiny G}}, n, m} \underset{n, m \to \infty}{\overset{L^1}{\longrightarrow}} \tau,
\end{equation*}
which concludes the proof.

\end{proof}

\subsection{\sout{Consistency}\modif{$L^1$-convergence} of IPSW}

\modif{This section provides the proof of Theorem~\ref{lem:Consistency_IPSW}, and for the sake of clarity, we recall Assumption~\ref{a:consistency-alpha}.
Denoting $\frac{n}{m  \hat \alpha_{n,m}(x)}$, the estimated weights on the set of covariates %\es{on ne parle pas d'observed covariate dans cette partie ?} 
$X$, \sout{and considering $Y$,} the following conditions hold,
\begin{itemize}
    \item  (H1-IPSW) $ \sup _{x \in \mathcal{X}}|\frac{n}{m  \hat \alpha_{n,m}(x)} - \frac{f_{X}(x)}{f_{X \mid S = 1}(x)} |=\varepsilon_{n,m} \stackrel{a . s .}{\longrightarrow} 0 \text{ , when } n, m \rightarrow \infty$,
    \item (H2-IPSW) we have for all $n,m$ large enough $\mathbb{E}[\varepsilon_{n,m}^2]$ exists and $\mathbb{E}[\varepsilon_{n,m}^2] \stackrel{a . s .}{\longrightarrow} 0 \text{ , when } n, m \rightarrow \infty$,
    \item (H3-IPSW) $Y$ is square integrable.
\end{itemize}}

\begin{proof}[Proof of Theorem~\ref{lem:Consistency_IPSW}]

First, we consider an oracle estimator $\hat \tau_{\text{\tiny IPSW}, n}^*$ that is based on the true ratio $\frac{f_{X}(x)}{f_{X \mid S = 1}(x)}$, that is \begin{align*}
\hat \tau_{\text{\tiny IPSW}, n}^* =  \frac{1}{n} \sum_{i=1}^{n} Y_{i} \frac{f_X(X_i)}{f_{X \mid S = 1}(X_i)} \left(\frac{A_{i}}{e_{1}(X_i)}-\frac{1-A_{i}}{1-e_{1}(X_i)}\right).
\end{align*}
%\es{mettre des $\hat \tau_{\text{\tiny IPSW}, n}^*$ ici}
Note that \cite{egami2021covariate} also consider such an estimator and document its consistency (see their appendix). Indeed, assuming the finite variance of $Y$, the strong law of large numbers (also called Kolmogorov's law) allows us to state that: 
%\es{toujours allows something ou allows someone to} 
\begin{align}
\hat \tau_{\text{\tiny IPSW}, n}^* \stackrel{a.  s.}{\longrightarrow} \mathbb{E}\Big[Y \frac{f_X(X)}{f_{X \mid S = 1}(X)} \Big(\frac{A}{e_1(X)} - \frac{1-A}{1-e_1(X)} \Big) \mid S=1 \Big] = \tau, \quad \textrm{as}~n \to \infty.
\label{eq:proof_ipsw}
\end{align}
%\es{Il faut remplacer $x$ par $X$}

Now, we need to prove that this result also holds for the estimate $\hat \tau_{\text{\tiny IPSW},n,m}$ \modif{where the weights are estimated from the data.} To this aim, we first use the triangle inequality \modif{comparing $\hat \tau_{\text{\tiny IPSW},n,m}$ with the oracle IPSW}:

\modif{

\begin{align*}
\Big| \hat \tau_{\text{\tiny IPSW},n,m}-\hat \tau_{\text{\tiny IPSW}, n}^* \Big|  &= \Big| \frac{1}{n} \sum_{i=1}^n \left( \frac{A_i Y_i}{e_1(X_i)} - \frac{(1-A_i)Y_i}{1-e_1(X_i)} \right) \left( \frac{n}{\hat \alpha_{n,m}(X_i)m} -\frac{f_X(X_i)}{f_{X \mid S = 1}(X_i)}\right)  \Big|\\
& \le  \frac{1}{n} \sum_{i=1}^n \Big| \left( \frac{A_i Y_i}{e_1(X_i)} - \frac{(1-A_i)Y_i}{1-e_1(X_i)} \right) \left( \frac{n}{\hat \alpha_{n,m}(X_i)m} -\frac{f_X(X_i)}{f_{X \mid S = 1}(X_i)} \right) \Big| && \text{Triangular inequality}\\
&=  \frac{1}{n} \sum_{i=1}^n \Big| \frac{A_i Y_i}{e_1(X_i)} - \frac{(1-A_i)Y_i}{1-e_1(X_i)}  \Big| \Big|\frac{n}{\hat \alpha_{n,m}(X_i)m} -\frac{f_X(X_i)}{f_{X \mid S = 1}(X_i)}  \Big| \\
& \le  \frac{\epsilon_{n,m}}{n} \sum_{i=1}^n \Big| \frac{A_i Y_i}{e_1(X_i)} - \frac{(1-A_i)Y_i}{1-e_1(X_i)}  \Big|   && \text{Assumption~\ref{a:consistency-alpha} (H1-IPSW)} \\
\end{align*}
}

\modif{Taking the expectation on the previous inequality gives,
\begin{align*}
    \mathbb{E}\left[ \Big| \hat \tau_{\text{\tiny IPSW},n,m}-\hat \tau_{\text{\tiny IPSW}, n}^* \Big|  \right] &\le  \mathbb{E}\left[ \varepsilon_{n,m} \frac{1}{n}  \sum_{i=1}^n  \Big|\frac{A_i Y_i}{e_1(X_i)} - \frac{(1-A_i)Y_i}{1-e_1(X_i)}  \Big| \right] \\
    &\le  \sqrt{\mathbb{E}\left[\varepsilon_{n,m}^2 \right]}\sqrt{\mathbb{E}\left[\left(\frac{1}{n} \sum_{i=1}^n  \Big|\frac{A_i Y_i}{e_1(X_i)} - \frac{(1-A_i)Y_i}{1-e_1(X_i)}  \Big|  \right)^2\right] } && \text{C.S., square integ, and H3-IPSW} \\
    &\le \sqrt{\mathbb{E}\left[\varepsilon_{n,m}^2 \right]}\sqrt{\mathbb{E}\left[\left(\ \frac{1}{n} \sum_{i=1}^n  \frac{2 |Y|}{\operatorname{min}(\eta_1, 1-\eta_1)}  \right)^2\right] } && \text{Assumption~\ref{a:eta_1_bounded} and triangular inequality} \\
    &\le \sqrt{\mathbb{E}\left[\varepsilon_{n,m}^2 \right]} \sqrt{\mathbb{E}\left[ \frac{1}{n}\sum_{i=1}^n  \frac{4 |Y|^2}{\operatorname{min}(\eta_1^2, (1-\eta_1)^2)}  \right] }   && \text{Jensen}\\
    &=  \sqrt{\mathbb{E}\left[\varepsilon_{n,m}^2 \right]} \frac{2 \sqrt{\mathbb{E}\left[ Y^2 \right]}}{\operatorname{min}(\eta_1, (1-\eta_1))}     &&\text{iid}
\end{align*}

%We recall that the square integrability of $Y$ is assumed in the Notations section.
%\bc{ici c'est nouveau. Est-ce correct?}
%\es{Tu ne peux pas appliquer Jensen comme cela car racine carrée est concave. Il faut rentrer le 1/n dans l'espérance et appliquer Jensen sur la moyenne pour obtenir une moyenne de carrés}
Therefore, using (H2-IPSW), 
\begin{align} \label{eq:proofIPSW1}
    \mathbb{E}\left[ \Big| \hat \tau_{\text{\tiny IPSW},n,m}-\hat \tau_{\text{\tiny IPSW}, n}^* \Big|  \right] \to 0, \quad \textrm{as}~n,m \to \infty.
\end{align}
%
%the quantity $\mathbb{E}\left[ \Big| \hat \tau_{\text{\tiny IPSW},n,m}-\hat \tau_{\text{\tiny IPSW}, n}^* \Big|  \right]$ converges toward 0 
%almost surely \es{sans le almost surely, puisque la quantité n'est plus aléatoire} 
%as $n,m \rightarrow \infty$. 

Finally, note that
\begin{align*}
    \mathbb{E}\left[ \Big| \hat \tau_{\text{\tiny IPSW},n,m}-\tau \Big|  \right] &\le \mathbb{E}\left[ \Big| \hat \tau_{\text{\tiny IPSW},n,m}-\hat \tau_{\text{\tiny IPSW}, n}^* \Big|  \right] + \mathbb{E}\left[ \Big| \hat \tau_{\text{\tiny IPSW}, n}^*-\tau \Big|  \right].
\end{align*}
The second right-hand side term tends to zero by the weak law of large numbers (same reasoning as for the G-formula) and the first term tends to zero using \eqref{eq:proofIPSW1}, which leads to  
% and using \eqref{eq:proofIPSW1} and the 
%and that $\mathbb{E}\left[ \Big| \hat \tau_{\text{\tiny IPSW}, n}^*-\tau \Big|  \right]$ converges toward 0 as $n\rightarrow\infty$ due to the weak law of large number (similar reasoning as in the G-formula proof) when proving convergence of $\mathbb{E}\left[ | \hat \tau_{G, \infty, m}^* - \tau | \right]$%\bc{same reasoning as for the G-formula, check with Erwan} \es{Yes, bien justifier l'argument dans la G-formula et dire qu'on utilise le même ici}
\begin{equation*}
\hat{\tau}_{\text{{\tiny IPSW}}, n, m} \underset{n,m \to \infty}{\overset{L^1}{\longrightarrow}} \tau.    
\end{equation*}

}
\end{proof}

\subsection{\sout{Consistency}\modif{$L^1$ convergence} of AIPSW}

The proof of Theorem~\ref{lem:Consistency_AIPSW} is based on  Assumption~\ref{a:consistency-aipsw} and either Assumption~\ref{a:consistency-mu} or Assumption~\ref{a:consistency-alpha}. Therefore the proof contains two parts. for clarity, we recall here Assumption~\ref{a:consistency-aipsw}:

%The nuisance parameters are bounded, so that

\begin{itemize}
    \item (H1-AIPSW) There exists a function $\alpha_0$ bounded from above and below (from zero), satisfying $$ \lim\limits_{m,n \to \infty} \sup_{x \in \mathcal{X}}| \frac{n}{m\hat \alpha_{n,m}(x)} - \frac{1}{\alpha_0(x)}| = 0,$$
    % \item (H2-AIPSW) $\exists \mu_M > 0, \forall a \in \{0, 1\}, \sup_{x \in \mathcal{X}}|\hat \mu_{a,n}| = \mu_M$,
    %\item (H2-AIPSW) The estimated surface responses $\hat \mu_{a,n}(.)$ where $a \in \{0, 1\}$ are obtained following a cross-fitting estimation,
    \item (H2-AIPSW) %The estimated surface responses $\hat \mu_{a,n}(.)$ where $a \in \{0, 1\}$ converge towards functions, even if these functions are not the true surface responses. More precisely, 
    There exist two bounded functions $\xi_1, \xi_0: \mathcal{X} \to \R$, such that $\forall a \in \{0, 1\},$ 
    $$\lim\limits_{n \rightarrow +\infty} \sup_{x \in \mathcal{X}}|\xi_{a}(x) - \hat \mu_{a,n}(x)| = 0,$$
    and, for all $ i \in \{1, \dots, n\}$,
    $$\lim\limits_{n \rightarrow +\infty} \sup_{x \in \mathcal{X}}|\xi_{a}(x) - \hat \mu^{-k(i)}_{a,n}(x)| = 0.$$
\end{itemize}

\begin{proof}[Proof of Theorem~\ref{lem:Consistency_AIPSW}] 

%Note that because of (H2-AIPSW), the estimated response surfaces are obtained with cross-fitting. 
Note that the cross-fitting procedure supposes to divide the data into $K$ evenly sized folds, where $K$ is typically set to 5 or 10 \modif{(for example see \cite{chernozhukov2017doubledebiased})}. Let $k(.)$ be a mapping from the sample indices $i = 1, \dots, n$ to the $K$ evenly sized data folds, and fit $\hat \mu_{0,n}(.)$ and $\hat \mu_{1,n}(.)$ with cross-fitting over the $K$ folds using methods tuned for optimal predictive accuracy. For $i \in \{1,\dots, n\}$, $\hat \mu_{0,n}^{-k(i)}(.)$ and $\hat \mu_{1,n}^{-k(i)}(.)$ denote response surfaces fitted on all folds except the $k(i)$-th. %\es{ok avec la formulation?}
%using the data fold that the $i^{th}$ training example belongs to. 
Let us also denote by $\hat \mu_{0,n}(.)$ and $\hat \mu_{1,n}(.)$, the surface responses estimated using the whole data set.\\

\textbf{First case - Assumption~\ref{a:consistency-mu}.}\\

Grant Assumption~\ref{a:consistency-mu}. In this part, we show that, due to this assumption, surface responses are consistently estimated. Recall that the AIPSW estimator $\hat{\tau}_{\aipsw,n,m}$ is defined as
\begin{align*}
    \hat \tau_{\text{\tiny AIPSW},n,m} &= \frac{1}{n} \sum_{i=1}^{n}  \frac{n}{m \hat \alpha_{n,m}(X_i)}\frac{A_{i}\left( Y_{i}- \hat \mu_{1,n}^{-k(i)}(X_{i})\right) }{e_{1}(X_i)} && \text{$A_{n,m}$} \\
    & \qquad - \frac{1}{n} \sum_{i=1}^{n}  \frac{n}{m \hat\alpha_{n,m}(X_i)}\frac{(1-A_{i})\left( Y_{i}- \hat \mu_{0,n}^{-k(i)}(X_{i})\right) }{1-e_{1}(X_i)}  && \text{$B_{n,m}$} \\
    & \qquad  + \frac{1}{m}\sum_{i=n+1}^{m+n}\left(\hat \mu_{1,n}(X_{i})- \hat \mu_{0,n}(X_{i})\right)  && \text{$C_{n,m}$} 
\end{align*}
%\es{Vérifier l'ordre des indices dans les quantités de la forme $A_{n,m}$}
%\es{il y a une parenthèse au numérateur dans le terme B?}
%\es{il faudrait rajouter des indices $m,n$ à A, B et C sinon on pense que ce sont des constantes}
Note that $\hat \tau_{\text{\tiny AIPSW},n,m}$ is composed of three terms, where the last $C_{m,n}$ corresponds to the G-formula $\hat \tau_{G,n,m}$.
%To facilitate the rest of the derivations, each term of $\hat \tau_{\aipsw,n,m} $ is denoted with a letter $A_{m,n}$, $B_{m,n}$, or $C_{m,n}:=\hat \tau_{G,n,m}$.
\modif{Now, considering $\mathbb{E}\left[ | \hat \tau_{\text{\tiny AIPSW},n,m} - \tau| \right]$, and using the triangle inequality and linearity of the expectation,

\begin{equation}\label{eq:aipsw-key-ineq-1}
    \mathbb{E}\left[ | \hat \tau_{\text{\tiny AIPSW},n,m} - \tau| \right] \leq \mathbb{E}\left[ | A_{n,m}|\right] + \mathbb{E}\left[ | B_{n,m}|\right] + \mathbb{E}\left[ |\hat \tau_{\text{\tiny G},n,m} - \tau |\right].
\end{equation}

}

Because Assumption~\ref{a:consistency-mu} holds and according to Theorem~\ref{lem:Consistency_Gformula}, we have
\begin{align}
\label{eq:proofAIPSW_gformula}
   \mathbb{E}\left[ |\hat \tau_{\text{\tiny G},n,m} - \tau |\right]{\longrightarrow} 0 \text{, when } n, m \rightarrow \infty.
\end{align}

Now, consider the term $A_{n,m}$, so that,
\begin{align*}
    A_{n,m} & = \frac{1}{n} \sum_{i=1}^{n}  \left( \frac{n}{m \hat \alpha_{n,m}(X_i)} - \frac{1}{\alpha_0(X_i)} \right) \frac{A_{i}\left( Y_{i}- \hat \mu_{1,n}^{-k(i)}(X_{i})\right) }{e_{1}(X_i)} && A_{n,m,1}\\
    & \quad + \frac{1}{n} \sum_{i=1}^{n}  \frac{1}{\alpha_0(X_i)}\frac{A_{i}\left( Y_{i}- \hat \mu_{1,n}^{-k(i)}(X_{i})\right) }{e_{1}(X_i)} && A_{n,m,2}.
\end{align*}
Regarding $A_{n,m,1}$, we have
\begin{align*}
 \mathbb{E}[|A_{n,m,1}|] & \leq  \frac{1}{\eta_1}  \sup_{x \in \mathcal{X}} \left| \frac{n}{m\hat \alpha_{n,m}(x)} - \frac{1}{\alpha_0(x)} \right| \left( \mathbb{E}[A_i|Y_i - \hat \mu_{1,n}^{-k(i)}(X_i)|]\right)  \\
 & \leq \frac{1}{\eta_1}  \sup_{x \in \mathcal{X}} \left| \frac{n}{m\hat \alpha_{n,m}(x)} - \frac{1}{\alpha_0(x)} \right| \left( \mathbb{E}[|Y(1)|] + \mathbb{E}[|\xi_{1}(X)|] + \varepsilon \right),
\end{align*}
which tends to zero according to (H1-AIPSW). Regarding $A_{n,m,2}$, by the weak law of large numbers,
\begin{align*}
\frac{1}{n} \sum_{i=1}^{n}  \frac{1}{\alpha_0(X_i)}\frac{A_{i}\left( Y_{i}- \hat \mu_{1,n}^{-k(i)}(X_{i})\right) }{e_{1}(X_i)}   \underset{n \to \infty}{\overset{L^1}{\longrightarrow}} & \mathbb{E}\left[ \frac{1}{\alpha_0(X_i)}\frac{A_{i}\left( Y_{i}- \hat \mu_{1,n}^{-k(i)}(X_{i})\right) }{e_{1}(X_i)}\right] \\
%& =  \mathbb{E}\left[ \frac{1}{\alpha_0(X_i)}\frac{A_{i}\left( Y_{i}- \hat \mu_{1,n}^{-k(i)}(X_{i})\right) }{e_{1}(X_i)}\right] \\
& =  \mathbb{E}\left[ \frac{1}{\alpha_0(X_i)} \mathbb{E}\left[ \left( Y_{i}- \hat \mu_{1,n}^{-k(i)}(X_{i})\right) | X_i,  \mathcal{D}_n^{-k(i)} \right] \right] \\
& =  \mathbb{E}\left[ \frac{1}{\alpha_0(X_i)} \mathbb{E} \left[ \mu_{1}(X) - \hat \mu_{1,n}^{-k(i)}(X)  |  \mathcal{D}_n^{-k(i)} \right] \right],
\end{align*}
where
\begin{align*}
\left| \mathbb{E}\left[ \frac{1}{\alpha_0(X_i)} \mathbb{E} \left[ \mu_{1}(X) - \hat \mu_{1,n}^{-k(i)}(X)  |  \mathcal{D}_n^{-k(i)} \right] \right] \right| \leq \sup_{x \in \mathcal{X}} \left( \frac{1}{\alpha_0(x)} \right) \mathbb{E}\left[ \mathbb{E} \left[ |\mu_{1}(X) - \hat \mu_{1,n}^{-k(i)}(X)|  |  \mathcal{D}_n^{-k(i)} \right] \right],
\end{align*}
which tends to zero according to Assumption~\ref{a:consistency-mu}. Therefore 
\begin{align}
\label{eq:proofAIPSW1}
A_{n,m} \underset{n \to \infty}{\overset{L^1}{\longrightarrow}} 0. 
\end{align}
%The same reasoning holds for $B_{n,m}$. 
%
%\begin{align*}
%    |A_{n,m}| &\leq  \frac{1}{n} \sqrt{\sum_{i=1}^{n} \left( \frac{n}{m \hat \alpha_{n,m}(X_i)}\frac{1}{e_1(X_i)}\right)^2} \sqrt{\sum_{i=1}^{n} \left( A_i Y_{i}- \hat \mu_{1,n}^{-k(i)}(X_{i})\right)^2} && \text{Cauchy-Schwarz} \\
%    &\leq  \frac{1}{n} \frac{1}{ \eta_1}\sqrt{\sum_{i=1}^{n} \left( \frac{n}{m\hat \alpha_{n,m}(X_i)}\right)^2} \sqrt{\sum_{i=1}^{n} \left( A_i Y_{i}- \hat \mu_{1,n}^{-k(i)}(X_{i})\right)^2} && \text{Assumption~\ref{a:eta_1_bounded}} \\
%    & \leq \frac{1}{\eta_1} \frac{1}{\alpha_0}\sqrt{\sum_{i=1}^{n} \frac{1}{n} \left(  A_i Y_{i}- \hat \mu_{1,n}^{-k(i)}(X_{i})\right)^2} && \text{H1-AIPSW} \\
%    & \leq \frac{1}{\eta_1} \frac{1}{\alpha_0} \sqrt{\widehat{RMSE}(\hat \mu_{1,n})} && \text{Definition of RMSE} \\
%    & \rightarrow 0 \text{, when } n, m \to \infty. &&\text{Assumption~\ref{a:consistency-mu} \modif{and (H2-AIPSW)}}
%\end{align*}%\bc{@Erwan: ici il faut square integrable?}
%\es{Il faut réécrire l'hypothèse H1-AIPSW avec $m \hat{\alpha}/n $ minoré et adapter cette partie de la preuve. Il faut aussi garder la variable A dans le terme de droite pour que la RMSE tende vraiment vers 0 non ? }
%The last step assumes the dominated convergence theorem.
%The reasoning is the same for the term $B_{n,m}$. 
%Therefore, 
%$$\mathbb{E}\left[ | A_{n,m}|\right]\text{, and } \mathbb{E}\left[ | B_{n,m}|\right]{\rightarrow} 0 \text{, when } n, m \rightarrow \infty.$$
%\jj{un peu rapide pour aller de h1AIPSW à assumption 5} BC: j'ai ajouté un commentaire
Using equations~\eqref{eq:proofAIPSW_gformula} and \eqref{eq:proofAIPSW1} in \eqref{eq:aipsw-key-ineq-1} along with the $L^1$-convergence of the G-formula toward $\tau$ allows us to conclude that  
\begin{equation*}
    \hat{\tau}_{\text{{\tiny AIPSW}}, n, m} \underset{n, m \to \infty}{\overset{L^1}{\longrightarrow}} \tau.
\end{equation*}\\

\textbf{Second case - Assumption~\ref{a:consistency-alpha}.}\\

Grant Assumption~\ref{a:consistency-alpha}. In this part, we show that, due to this assumption, weights are consistently estimated. Note that the AIPSW estimate can be rewritten as 
\begin{align*}
    \hat \tau_{\text{\tiny AIPSW},n,m} &= \frac{1}{n} \sum_{i=1}^{n}  \frac{n}{m \hat \alpha_{n,m}(X_i)} \left( \frac{A_i Y_i}{e_1(X_i)} - \frac{(1-A_i) Y_i}{1-e_1(X_i)} \right)&& \text{$D_{n,m}$} \\
    & \qquad - \frac{1}{n} \sum_{i=1}^{n} \left(\frac{n}{m \hat\alpha_{n,m}(X_i)}  - \frac{f_X(X_i)}{f_{X \mid S = 1}(X_i)}\right) \left( \frac{A_i\hat \mu_{1,n}^{-k(i)}(X_i)}{e_1(X_i)} \right)  && \text{$E_{n,m}$} \\
    & \qquad + \frac{1}{n} \sum_{i=1}^{n} \left(\frac{n}{m \hat\alpha_{n,m}(X_i)}  - \frac{f_X(X_i)}{f_{X \mid S = 1}(X_i)}\right) \left( \frac{(1-A_i) \hat \mu_{0,n}^{-k(i)}(X_i)}{1-e_1(X_i)} \right)  && \text{$F_{n,m}$} \\
    & \qquad  - \frac{1}{n} \sum_{i=1}^{n} \frac{f_X(X_i)}{f_{X \mid S = 1}(X_i)} \left( \frac{A_i\hat \mu_{1,n}^{-k(i)}(X_i)}{e_1(X_i)} - \frac{(1-A_i) \hat \mu_{0,n}^{-k(i)}(X_i)}{1-e_1(X_i)} \right)  && \text{$G_{n}$} \\
      & \qquad  + \frac{1}{m}\sum_{i=n+1}^{m+n}\left(\hat \mu_{1,n}(X_{i})- \hat \mu_{0,n}(X_{i})\right).  && \text{$C_{n,m}$}
\end{align*}

Again, using the expectation and the triangle inequality, one has,

\begin{equation}\label{eq:key-ineq-aipsw}
    \mathbb{E}\left[ |  \hat \tau_{\text{\tiny AIPSW},n,m} - \tau |\right] \leq  \mathbb{E}\left[ |  D_{n,m} - \tau |\right] + \mathbb{E}\left[ | E_{n,m}|\right] + \mathbb{E}\left[ | F_{n,m} |\right]+ \mathbb{E}\left[ | G_{n} + C_{n,m} |\right]
\end{equation}
Note that the term $D_{n,m}$ corresponds to the IPSW estimator (Definition~\ref{def:ipsw}). According to Assumption~\ref{a:consistency-alpha} \modif{and Theorem~\ref{lem:Consistency_IPSW}}, $\mathbb{E}\left[ |  D_{n,m} - \tau |\right]$ converges to 0 as $n,m \rightarrow \infty$. Now, we study the convergence of each of the remaining terms in equation \eqref{eq:key-ineq-aipsw}.\\

\textbf{Considering $E_{n,m}$ and $F_{n,m}$} \\

Let us now consider the term $E_{n,m}$. First, note that, according to   Assumption~\ref{a:consistency-aipsw} (H2-AIPSW), the estimated surface responses are uniformly bounded for $n$ large enough, that is, there exists $\mu_M > 0$ such that, for all $a \in \{0, 1\}$, for all $n$ large enough, 
$$\sup_{x \in \mathcal{X}}|\hat \mu_{a,n}(x)| \leq \mu_M.$$

It follows that, for all $n$ large enough, 
\begin{align*}
|E_{n,m}| &\leq \frac{1}{n} \sqrt{\sum_{i=1}^{n} \left(\frac{n}{m \hat\alpha_{n,m}(X_i)}  - \frac{f_X(X_i)}{f_{X \mid S = 1}(X_i)}\right)^2} \sqrt{\sum_{i=1}^{n} \left( \frac{A_i\hat \mu_{1,n}^{-k(i)}(X_i)}{e_1(X_i)} \right)^2}  && \text{Cauchy-Schwarz}   \\
& \leq \frac{1}{n} \sqrt{\sum_{i=1}^{n} \left(\frac{n}{m \hat\alpha_{n,m}(X_i)}  - \frac{f_X(X_i)}{f_{X \mid S = 1}(X_i)}\right)^2} \frac{1}{\eta_1} \sqrt{\sum_{i=1}^{n} \left(\hat \mu_{1,n}^{-k(i)}(X_i)\right)^2} && \text{Assumption~\ref{a:eta_1_bounded}}   \\
& \leq \frac{1}{\sqrt{n}} \sqrt{\sum_{i=1}^{n} \left(\frac{n}{m \hat\alpha_{n,m}(X_i)}  - \frac{f_X(X_i)}{f_{X \mid S = 1}(X_i)}\right)^2} \frac{\mu_M}{\eta_1}  && \text{Assumption~\ref{a:consistency-aipsw} (H2-AIPSW)}  \\
%& = \frac{\mu_M}{\eta_1} \sqrt{\widehat{RMSE}}(\frac{n}{m \hat\alpha_{n,m}(X_i)})   && \text{Definition of RMSE} \\
 & \rightarrow 0 \text{, when } n, m \to \infty. &&\text{Assumption~\ref{a:consistency-alpha}}
\end{align*}
The reasoning is the same for the term $F_{n,m}$, which also converges uniformly toward 0  when $n, m \to \infty$.\\

\textbf{Considering $G_{n}$ and $C_{n,m}$} \\

%\es{le terme ci-dessous doit être traité avec un peu plus de prudence car la loi des grands nombres ne s'applique directement que si la quantité sous la somme est déterministe. Ici, le coeur du raisonnement est bon mais il faut d'abord décomposer ce terme en regardant ce qui se passe à n fini puis faire tendre n vers l'infini.}
By Assumption (H2-AIPSW), for all $\varepsilon>0$, for all $n$ large enough,  for all $x \in \mathcal{X},$
$$\hat \mu_{1,n}(x) \in [\xi_1(x) - \varepsilon, \xi_1(x) + \varepsilon].$$
%\es{remplacer $\epsilon$ par $\varepsilon$, c'est plus joli^^}
Therefore, for all $n$ large enough, and for all $m$, 
\begin{align*}
\left| \frac{1}{m}\sum_{i=n+1}^{m+n} \hat \mu_{1,n}(X_{i}) - \frac{1}{m}\sum_{i=n+1}^{m+n} \xi_1(X_i) \right| & \leq  \frac{1}{m}\sum_{i=n+1}^{m+n} |\hat \mu_{1,n}(X_{i}) - \xi_1(X_i)| \\
& \leq \varepsilon.
\end{align*}
Consequently, 
\begin{align*}
\left| C_{n,m} - \frac{1}{m}\sum_{i=n+1}^{m+n} \xi_1(X_i) + \frac{1}{m}\sum_{i=n+1}^{m+n} \xi_0(X_i) \right| \leq 2 \varepsilon.
\end{align*}
Therefore, 
\begin{align*}
\big| C_{n,m} -  \mathbb{E}[\xi_1(X)] +  \mathbb{E}[\xi_0(X)] \big| & \leq 2 \varepsilon + \left| \frac{1}{m}\sum_{i=n+1}^{m+n} \xi_1(X_i) - \mathbb{E}[\xi_1(X)] \right|  + \left| \frac{1}{m}\sum_{i=n+1}^{m+n} \xi_0(X_i) - \mathbb{E}[\xi_0(X)] \right|.
\end{align*}
Hence, by the law of large numbers, 
\begin{align*}
 C_{n,m} \underset{n, m \to \infty}{\overset{L^1}{\longrightarrow}} \mathbb{E}[\xi_1(X)] - \mathbb{E}[\xi_0(X)].
\end{align*}

%\begin{alignat*}{2}
%   \frac{1}{m}\sum_{i=n+1}^{m+n} (\xi_1(X_i) - \epsilon) & \leq \frac{1}{m}\sum_{i=n+1}^{m+n} \hat \mu_{1,n}(X_{i}) & \leq \frac{1}{m}\sum_{i=n+1}^{m+n} (\xi_1(X_i) + \epsilon).
%\end{alignat*}
%By the law of large numbers, for all $m,n$ large enough, 
%\begin{alignat*}{2}
%   \mathbb{E}[\xi_1(X)] - 2\epsilon & \leq \frac{1}{m}\sum_{i=n+1}^{m+n} \hat \mu_{1,n}(X_{i}) & \leq \mathbb{E}[\xi_1(X)] + 2 \epsilon,
%\end{alignat*}
%which proves that $\lim_{m,n\to\infty} \frac{1}{m}\sum_{i=n+1}^{m+n} \hat \mu_{1,n}(X_{i})  = \mathbb{E}[\xi_1(X)]$. Similarly, $\lim_{m,n\to\infty} \frac{1}{m}\sum_{i=n+1}^{m+n} \hat \mu_{0,n}(X_{i})  = \mathbb{E}[\xi_0(X)]$, and consequently, 
%$$\lim\limits_{m,n\to\infty} C_{n,m} =  \mathbb{E}[\xi_1(X)] -  \mathbb{E}[\xi_0(X)].$$

We can apply the same reasoning for the term $G_{n}$, by taking into account the fact that it uses a cross-fitting strategy. By Assumption~\ref{a:consistency-aipsw} (H2-AIPSW), for all $\varepsilon>0$, for all $n$ large enough, for all $x \in \mathcal{X},$ for all $ i \in \{1, \hdots, n\}$, 
$$\hat \mu_{1,n}^{-k(i)}(x) \in [\xi_1(x) - \epsilon, \xi_1(x) + \epsilon].$$
Using this inequality, we obtain
\begin{align*}
\left| \frac{1}{n}\sum_{i=1}^{n} \frac{f_X(X_i)}{f_{X \mid S = 1}(X_i)} \frac{A_i}{e_1(X_i)}\hat \mu_{1,n}^{-k(i)}(X_{i}) - \frac{1}{n}\sum_{i=1}^{n} \frac{f_X(X_i)}{f_{X \mid S = 1}(X_i)} \frac{A_i}{e_1(X_i)} \xi(X_i) \right| \leq \frac{\varepsilon}{\eta_1} \sup_{x \in \mathcal{X}} \left( \frac{1}{\alpha_0(x)} \right).
\end{align*}
Besides, by the law of large numbers, 
\begin{align*}
\lim\limits_{n\to\infty} \frac{1}{n}\sum_{i=1}^{n} \frac{f_X(X_i)}{f_{X \mid S = 1}(X_i)} \frac{A_i}{e_1(X_i)} \xi_1(X_{i})   = \mathbb{E} \left [\frac{f_X(X_i)}{f_{X \mid S = 1}(X_i)} \frac{A_i}{e_1(X_i)} \xi_a(X) \right ] = \mathbb{E}[\xi_a(X)].    
\end{align*}
Consequently, as above
\begin{align*}
 G_{n,m} \underset{n, m \to \infty}{\overset{L^1}{\longrightarrow}} \mathbb{E}[\xi_0(X)] - \mathbb{E}[\xi_1(X)].
\end{align*}
Finally,  
\begin{align*}
 C_{n,m} + G_{n,m} \underset{n, m \to \infty}{\overset{L^1}{\longrightarrow}} 0, 
\end{align*}
which concludes the proof. 
%\begin{alignat*}{2}
%   \frac{1}{n}\sum_{i=1}^{n} \frac{f_X(X_i)}{f_{X \mid S = 1}(X_i)} \frac{A_i}{e_1(X_i)} (\xi_1(x) - \epsilon) & \leq \frac{1}{n}\sum_{i=1}^{n} \frac{f_X(X_i)}{f_{X \mid S = 1}(X_i)} \frac{A_i}{e_1(X_i)}\hat \mu_{1,n}^{-k(i)}(X_{i}) & \leq \frac{1}{n}\sum_{i=1}^{n} \frac{f_X(X_i)}{f_{X \mid S = 1}(X_i)} \frac{A_i}{e_1(X_i)} (\xi_1(x) + \epsilon).
%\end{alignat*}
%The same reasoning as above shows that, for all $a \in \{0,1\},$
%
%$$\lim\limits_{n\to\infty} \frac{1}{n}\sum_{i=1}^{n} \frac{f_X(X_i)}{f_{X \mid S = 1}(X_i)} \frac{A_i}{e_1(X_i)}\hat \mu_{0,n}^{-k(i)}(X_{i})   = \mathbb{E} \left [\frac{f_X(X_i)}{f_{X \mid S = 1}(X_i)} \frac{A_i}{e_1(X_i)} \xi_0(X) \right ] = \mathbb{E}[\xi_0(X)]].$$
%Therefore,
%$$\lim\limits_{n\to\infty} G_{n} = - \left( \mathbb{E}[\xi_1(X)] - \mathbb{E}[\xi_0(X)] \right), $$
%hence $G_{n} + C_{n,m}\stackrel{a . s .}{\longrightarrow} 0 \text{ , when } n,m\rightarrow \infty.$ \\

%\textbf{Conclusion}\\

%Therefore, taking the expectation of the absolute value of each terms $E_{n,m}$, $F_{n,m}$, and $G_n + C_{n,m}$ allows to have the right part of the equation~\eqref{eq:key-ineq-aipsw} tends toward 0 as $n,m \rightarrow \infty$, which concludes the proof. 

\end{proof}

\section{Proofs for the missing covariate setting \label{sec:appendix-original-proof}}
%Uncomplete adjustment set's proofs

This section gathers proofs related to the case where key covariates (treatment effect modifiers with distributional shift) are missing. In particular this appendix contains the proofs of results presented in  Section~\ref{sec:linear-causal-model}.

\subsection{Proof of Theorem~\ref{lemma:linear-gformula-unbiased}}

\begin{proof}
\modif{
The Theorem~\ref{lemma:linear-gformula-unbiased} is essentially a statement about the observed distribution. One can first derived what is the partial-identification of $\tau$ under the observed distribution $\tau_{obs}$, that is,

\begin{align*}
    \tau_{obs} &= \mathbb{E}\left[\mathbb{E}[Y(1) - Y(0) \mid X_{o b s} = x_{obs},  S = 1]\right] \\
    &= \mathbb{E}\left[\mathbb{E}[ \langle \delta, X \rangle \mid X_{o b s} = x_{obs},  S = 1] \right] &&\text{Linear CATE} \\
    &= \mathbb{E}\left[ \mathbb{E}[ \langle \delta, X_{o b s} \rangle + \langle \delta, X_{m i s} \rangle \mid X_{o b s} = x_{o b s},  S = 1] \right] &&\text{$X = (X_{m i s}, X_{o b s})$} \\
    &= \mathbb{E}\left[ \langle \delta, X_{o b s} \rangle\right] + \mathbb{E}\left[ \mathbb{E}\left[ \langle \delta, X_{m i s} \rangle \mid X_{o b s} = x_{o b s},  S = 1\right] \right]. && \text{Ignorability}
\end{align*}

As the covariates $X$ are assumed to be a Gaussian vector distributed as $\mathcal{N}(\mu, \Sigma)$, and considering the assumption on the variance-covariance matrix (Assumption~\ref{a:trans-sigma}), one can have an explicit expression of the conditional expectation \citep{ross1998firstbook}.
\begin{equation*}
\mathbb{E}\left[X_{m i s} \mid X_{o b s} = x_{obs}\right]=\mathbb{E}[X_{m i s}]+\Sigma_{\operatorname{mis}, o b s}\left(\Sigma_{o b s, o b s}\right)^{-1}\left(x_{obs}-\mathbb{E}[X_{o b s}]\right).
\end{equation*}

Therefore, pluging this expression into $\tau_{obs}$ and comparing it to $\tau$,

\begin{align*}
\tau - \tau_{obs} &= \langle \delta, \mathbb{E}[X_{m i s}] - \mathbb{E}[X_{m i s} \mid S=1] - \Sigma_{m i s, o b s}\Sigma_{o b s, o b s}^{-1}(\mathbb{E}[X_{o b s}]-\mathbb{E}[X_{o b s}\mid S = 1] \rangle \\ 
&=  \sum_{j \in mis} \delta_j \left( \mathbb{E}[X_j] - \mathbb{E}[X_j \mid S=1] - \Sigma_{j, o b s}\Sigma_{o b s, o b s}^{-1}(\mathbb{E}[X_{o b s}]-\mathbb{E}[X_{o b s}\mid S = 1] \right)
\end{align*}

Note that the last row is only a different way to write the scalar product into a sum.\\

Then, any $L^1$-consistent estimator $hat \tau_{n, m, o b s}$ of $\tau$ on the observed set of covariates will follow 
\begin{align*}
      \lim\limits_{n,m \to \infty} \mathbb{E}[\hat \tau_{n, m, o b s}] - \tau = - \sum_{j \in mis} \delta_j \left( \mathbb{E}[X_j] - \mathbb{E}[X_j \mid S=1] - \Sigma_{j, o b s}\Sigma_{o b s, o b s}^{-1}(\mathbb{E}[X_{o b s}]-\mathbb{E}[X_{o b s}\mid S = 1] \right).
     \end{align*}
}

\end{proof}

\sout{This proof contains the proof for the G-formula and the IPSW. Each of the proof relies on an oracle estimator that knows the nuisance components (could it be the surface response or the weights), and then the consistency of the estimators is derived using assumptions on the convergence rate of the nuisance components, so that estimators have a behavior close to that of the oracles.} %\modif{In the rest of the section we provide additional derivations of oracle IPSW and G-formula estimators so that the asymptotic bias can be understood either as a partial-identification on $\tau_1(x)$ or on $\frac{f_{X \mid S = 1}(x)}{f_X(x)}$. These derivations are not needed in to prove Theorem~\ref{lemma:linear-gformula-unbiased}, but we believe it can help to understand why IPSW and G-formula estimators - though based on totally different estimation strategies - have the same bias.}

\subsection{Imputation \label{proof:imputation-no-bias}}

This part contains the proof of Corollary~\ref{lem:imputation}.

\begin{proof}
This proof is divided into two parts, depending on the missing covariate pattern.

\paragraph{Consider the RCT as the complete dataset} We assume that the linear link between the missing covariate $X_{mis}$ and the observed one $X_{obs}$ in the trial population is known, so is the true response surfaces $\mu_1(.)$ and $\mu_0(.)$.  We consider the estimator $\hat \tau_{G, \infty, m, imp}$ based on the two previous oracles quantities. 
We denote by $c_{0}, \dots, c_{\#{obs}}$ the coefficients linking $X_{obs}$ and $X_{mis}$ in the trial, so that, on the event $S=1$, 
\begin{equation}
    X_{m i s} =  
      c_{0} +  \sum_{j \in obs } c_{j}  X_j + \varepsilon, \label{eq_bonus1}
\end{equation}
where $\varepsilon$ is a Gaussian noise satisfying $\E[\varepsilon \mid X_{obs}] =0$ almost surely. Since we assume that the true link between $X_{mis}$ and $X_{obs}$ is known (that is we know the coefficients $c_0, \hdots, c_d$), the imputation of the missing covariate on the observational sample writes
%
%As the estimator is an oracle, the relation in \eqref{eq_bonus1} is used to impute the missing covariate in the observational sample, so that
\begin{equation}
    \hat X_{m i s} :=  
      c_{0} +  \sum_{j \in obs } c_{j}  X_j. \label{eq_bonus2}
\end{equation}

We denote $\tilde X$ the imputed data set composed of the observed covariates and the imputed one in the observational sample. 
%In other words, $\tilde X = X_{obs} \cup \hat X_{mis}$.
%J'ai commenté car l'utilisation du signe $\cup$ n'est pas très claire ici. Je trouve la phrase plus claire que la notation mathématique.
The expectation of the oracle estimator $\hat \tau_{G, \infty,m,imp}$ is defined as,
\begin{align*}
    \mathbb{E}[\hat \tau_{G, \infty,m,imp}] &=\mathbb{E}\left[ \frac{1}{m} \sum_{i=n+1}^{n+m} \mu_1(\tilde X_i) -  \mu_0(\tilde X_i) \right] && \text{By definition of $\hat \tau_{G, \infty,m,imp}$}\\
    &= \mathbb{E}\left[ \frac{1}{m} \sum_{i=n+1}^{n+m} \langle \delta, \tilde X_i \rangle \right] && \text{Linear CATE \eqref{eq:linear-causal-model}} \\
    &=  \mathbb{E}\left[ \frac{1}{m} \sum_{i=n+1}^{n+m} \left(  \left( \sum_{j \in obs} \delta_j X_{j,i}\right) + \delta_{mis} \hat X_{mis,i} \right) \right] && %\text{$\tilde X = X_{obs} \cup \hat X_{mis}$} 
    \\
    % &=   \sum_{j \in obs}\left(  \frac{1}{m} \sum_{i=n+1}^{n+m} \delta_j \mathbb{E}\left[  X_{j,i}\right) + \frac{1}{m} \sum_{i=n+1}^{n+m} \delta_{mis} \hat X_{mis,i}\right] 
\end{align*}

%\es{on peut calculer l'espérance à la place ici (puisque c'est ce qu'on veut obtenir), ce qui nous permet d'avoir un résultat pour tout $m$}

Because of the finite variance of $X_{obs}$ and $\hat X_{mis}$ the law of large numbers allows to state that: 
\begin{align*}
\lim\limits_{m \to \infty} \mathbb{E}[\hat \tau_{G, \infty,m,imp}] = \left(  \sum_{j \in obs} \delta_j \mathbb{E}[X_j]\right) + \delta_{mis}\mathbb{E}[\hat X_{mis}].
\end{align*}

Due to Assumption~\ref{a:trans-sigma}, the distribution of the vector $X$ is Gaussian in both populations, and one can use the conditional expectation for a multivariate gaussian law to write the conditional expectation in the trial population, that is
\begin{equation}
\mathbb{E}[X_{m i s} \mid X_{o b s}, S = 1] =  \mathbb{E}[X_{m i s} \mid S = 1] + \Sigma_{m i s, o b s}\Sigma_{o b s, o b s}^{-1}(X_{o b s}-\EE[X_{o b s} \mid S = 1]). \label{eq_bonus3}
\end{equation}
Combining \eqref{eq_bonus1} and \eqref{eq_bonus3}, one can obtain:
\begin{equation}
\label{eq_bonus4}
     c_{0} +  \sum_{j \in obs } c_{j}  X_j = \mathbb{E}[X_{m i s} \mid S = 1] + \Sigma_{m i s, o b s}\Sigma_{o b s, o b s}^{-1}(X_{o b s}-\EE[X_{o b s} \mid S = 1]).
\end{equation}
Now, we can compute, 
\begin{align*}
    \mathbb{E}[\hat X_{mis}] &= \mathbb{E}[ c_{0} +  \sum_{j \in obs } c_{j}  X_j] \\
    &= \mathbb{E}\left[ \mathbb{E}[X_{m i s} \mid S = 1] + \Sigma_{m i s, o b s}\Sigma_{o b s, o b s}^{-1}(X_{o b s}-\EE[X_{o b s} \mid S = 1]) \right] && \text{\eqref{eq_bonus4}} \\
    &=  \mathbb{E}[X_{m i s} \mid S = 1] + \Sigma_{m i s, o b s}\Sigma_{o b s, o b s}^{-1}(\EE[X_{o b s}]-\EE[X_{o b s} \mid S = 1]). 
\end{align*}

This last result allows to conclude that,

\begin{align*}
\lim\limits_{m \to \infty} \mathbb{E}[\hat \tau_{G, \infty,m,imp}] = \left(  \sum_{j \in obs} \delta_j \mathbb{E}[X_j]\right) + \delta_{mis}\left( \mathbb{E}[X_{m i s} \mid S = 1] + \Sigma_{m i s, o b s}\Sigma_{o b s, o b s}^{-1}(\EE[X_{o b s}]-\EE[X_{o b s} \mid S = 1]) \right).
\end{align*}
%\es{idem ici, écrire tout à $m$ fixé en terme d'espérance, pour avoir un résultat final à $m$ fixé}

Finally, as $\tau = \sum_{j=1}^p \delta_j \mathbb{E}[X_j]$,

\begin{align*}
\tau - \lim\limits_{m \to \infty} \mathbb{E}[\hat \tau_{G, \infty,m,imp}]  = \delta_{mis}\left( \mathbb{E}[X_{mis}] -  \mathbb{E}[X_{m i s} \mid S = 1] - \Sigma_{m i s, o b s}\Sigma_{o b s, o b s}^{-1}(\EE[X_{o b s}]-\EE[X_{o b s} \mid S = 1]) \right),
\end{align*}

which concludes this part of the proof.

\paragraph{Consider the observational data as the complete data set} 
We assume here that the true relations between $X_{mis}$ and $X_{obs}$ is known and the true response model is also known. We denote by $\tau_{G, \infty, \infty, imp}$ the estimator based on these two quantities.
%This time consider the reversed situation where the missing covariate is in the trial data set. The procedure is to use the complete observational data set to learn a relation between $X_{mis}$ and $X_{obs}$, and to impute the trial using this relation. Then the procedure supposes to learn the surface response on the imputed trial, and then to apply the G-formula. This time, to have an estimator that knows the two models we suppose infinite trial and observational sample size. The estimator is denoted $\tau_{G, \infty, \infty, imp}$.

More precisely, 
%First, recall that the infinite sample size on the observational data set allows to have the correct linear model between $X_{mis}$ and $X_{obs}$. 
we denote by $c_{0}, \dots, c_{\#{obs}}$ the coefficients linking $X_{obs}$ and $X_{mis}$ in the observational population, so that 
\begin{equation}
    X_{m i s} =  
      c_{0} +  \sum_{j \in obs } c_{j}  X_j + \varepsilon, \label{eq_bonus5}
\end{equation}
where $\varepsilon$ is a Gaussian noise satisfying $\E[\varepsilon \mid X_{obs}] =0$ almost surely.

As the estimator is an oracle, the relation in \eqref{eq_bonus5} is used to impute the missing covariate in the observational sample, so that

\begin{equation}
    \hat X_{m i s} :=  
      c_{0} +  \sum_{j \in obs } c_{j}  X_j. \label{eq_bonus8}
\end{equation}

We denote $\tilde X$ the imputed data set composed of the observed covariates and the imputed one in the trial population. 
%In other words, $\tilde X = X_{obs} \cup \hat X_{mis}$. 
Note that the $\hat X_{mis}$ is a linear combination of $X_{obs}$ in the trial population, and thus a measurable function of $X_{obs}$. This property is used below and labelled as \eqref{eq_bonus8}. 
As $\tau_{G, \infty, \infty, imp}$ is an oracle, one have:
%\es{je mettrais l'espérance de $\tau$ plutôt que $\tau$ ci-dessous, pour être cohérent avec le précédent résultat, et même si les deux sont vrais}
\begin{align*}
    \mathbb{E}[\tau_{G, \infty, \infty, imp}] &= \mathbb{E}\left[ \mathbb{E}[Y(1) - Y(0) \mid \tilde X, S = 1] \right] \\
    &= \mathbb{E}\left[ \mathbb{E}[Y(1) - Y(0) \mid \hat X_{mis}, X_{obs}, S = 1] \right] && %\text{$\tilde X = X_{obs} \cup \hat X_{mis}$}
    \\
    &= \mathbb{E}\left[ \mathbb{E}[Y(1) - Y(0) \mid  X_{obs}, S = 1] \right] && \text{\eqref{eq_bonus8}} \\
    &= \mathbb{E}\left[ (\sum_{j \in obs} \delta_j X_j) + \delta_{mis}\mathbb{E}[X_{mis} \mid X_{obs}, S = 1] \right]. && \text{\eqref{eq:linear-causal-model}} \\
    &=  \left( \sum_{j \in obs} \delta_j \mathbb{E}[X_j]  \right) + \delta_{mis} \mathbb{E}\left[\mathbb{E}[X_{mis} \mid X_{obs}, S = 1] \right] \\
    &= \left( \sum_{j \in obs} \delta_j \mathbb{E}[X_j]  \right) \\
    &\qquad + \delta_{mis} \left( \mathbb{E}[X_{m i s} \mid S = 1] + \Sigma_{m i s, o b s}\Sigma_{o b s, o b s}^{-1}(\mathbb{E}[X_{o b s}] -\EE[X_{o b s} \mid S = 1]) \right). && \text{\eqref{eq_bonus3}} \\
\end{align*}

%\es{On n'a pas besoin des deux équations suivantes, \eqref{eq_bonus3} est suffisante non? Si oui, tu peux enlever les deux équations suivantes et faire référence à \eqref{eq_bonus3} comme tu le fais déjà}
%Due to Assumption~\ref{a:trans-sigma}, the distribution of the vector $X$ is Gaussian in both population, and one can use \eqref{eq:conditional-expectation-Xmis} to write the conditional expectancy in the observational population, that is
%\begin{equation}
%\mathbb{E}[X_{m i s} \mid X_{o b s}] =  \mathbb{E}[X_{m i s}] + \Sigma_{m i s, o b s}\Sigma_{o b s, o b s}^{-1}(X_{o b s}-\EE[X_{o b s}]). \label{eq_bonus6}
%\end{equation}
%Note that a similar relation also exists in the trial population \eqref{eq_bonus3}. Combining \eqref{eq_bonus5} and \eqref{eq_bonus6}, one can obtain:
%\begin{equation}
%\label{eq_bonus7}
%     \mathbb{E}[X_{m i s}] + \Sigma_{m i s, o b s}\Sigma_{o b s, o b s}^{-1}(X_{o b s}-\EE[X_{o b s}]) = c_{0} +  \sum_{j \in obs } c_{j}  X_j.
%\end{equation}

Finally, as $\tau = \sum_{j=1}^p \delta_j \mathbb{E}[X_j]$,
%\es{prendre l'espérance de $\tau_{G, \infty, \infty, imp}$ ici}
\begin{align*}
\tau - \mathbb{E}[\tau_{G, \infty, \infty, imp}]  = \delta_{mis}\left( \mathbb{E}[X_{mis}] -  \mathbb{E}[X_{m i s} \mid S = 1] - \Sigma_{m i s, o b s}\Sigma_{o b s, o b s}^{-1}(\EE[X_{o b s}]-\EE[X_{o b s} \mid S = 1]) \right),
\end{align*}
which concludes this part of the proof.

\end{proof}

\subsection{Proxy variable}
 
\begin{proof}[Proof of Lemma~\ref{lemma:bias-proxy}]
\label{proof:bias-proxy}

Recall that we denote $\hat \tau_{G, n, m, p r o x}$ the G-formula estimator using $X_{p r o x}$ instead of $X_{m i s}$ in the G-formula. The derivations of $\hat \tau_{G, n, m, p r o x}$ give:

\begin{align*}
   \mathbb{E}[\hat \tau_{G, n, m, p r o x}] &= \mathbb{E}\left[ \mathbb{E}[Y \mid X_{o b s}, X_{p r o x}, S = 1, A = 1] - \mathbb{E}[Y \mid X_{o b s}, X_{p r o x}, S = 1, A = 0] \right] \\ & \hspace{7cm} \text{Definition of $\hat \tau_{G, n,m,p r o x}$}\\
   &= \mathbb{E}\left[ \mathbb{E}[g(X) + \langle \delta, X \rangle \mid X_{o b s}, X_{p r o x}, S = 1] - \mathbb{E}[ g(X)  \mid X_{o b s}, X_{p r o x}, S = 1] \right] \\ 
   &= \mathbb{E}\left[ \mathbb{E}[\langle \delta, X \rangle \mid X_{o b s}, X_{p r o x}, S = 1]\right]\\ 
   %&= \mathbb{E}\left[ \mathbb{E}[\langle \delta, X \rangle \mid X_{o b s}, X_{p r o x}, S = 1]\right]\\
   &= \sum_{j \in obs} \delta_j \mathbb{E}[X_j] + \delta_{m i s} \mathbb{E}\left[ \mathbb{E}[X_{m i s} \mid X_{o b s}, X_{p r o x}, S = 1] \right]  \\ & \hspace{7cm} \text{Linearity of $Y$ \eqref{eq:linear-causal-model}} \\
    &= \sum_{j \in obs} \delta_j \mathbb{E}[X_j] + \delta_{m i s} \mathbb{E}\left[ \mathbb{E}[X_{m i s}  \mid X_{p r o x}, S = 1] \right]  \\ & \hspace{7cm}  \text{$X_{m i s} \independent X_{o b s}$ \eqref{a:proxy} and \modif{Assumption~\ref{a:RCT-randomization}}} \\
\end{align*}
%\jj{un peu rapide pour moi de la ligne 1 a deux}
% j'ai ajouté une ligne
The framework of the proxy variable \eqref{a:proxy} allows to have an expression of the conditional expectation of $X_{m i s}$ \citep{ross1998firstbook}:

$$\mathbb{E}\left[ \mathbb{E}[X_{m i s} \mid X_{p r o x}, S = 1] \right] = \mathbb{E}[X_{m i s} \mid S = 1] + \frac{\cov(X_{m i s}, X_{p r o x})}{\mathbb{V}[X_{p r o x}]} (X_{p r o x} - \mathbb{E}[X_{p r o x} \mid S = 1]),$$
where 
\begin{align*}
    \mathbb{V}[X_{p r o x}] &= \mathbb{V}[X_{m i s} + \eta] \\
    &= \mathbb{V}[X_{m i s}] + \mathbb{V}[\eta] + 2 \underbrace{\operatorname{Cov}\left(\eta, X_{m i s}\right)}_{=0 \eqref{a:proxy}} \\
    &= \sigma_{m i s}^2 + \sigma_{p r o x}^2
\end{align*}
and 
\begin{align*}
    \cov(X_{m i s}, X_{p r o x}) &= \mathbb{E}[X_{m i s}X_{p r o x}] - \mathbb{E}[X_{p r o x}]^2 &&\text{$\mathbb{E}[X_{m i s}] = \mathbb{E}[X_{p r o x}]$} \\
    &= \mathbb{E}[X_{p r o x}^2 - \eta X_{p r o x}] - \mathbb{E}[X_{p r o x}]^2 \\
    &=\mathbb{E}[X_{p r o x}^2] - \mathbb{E}[X_{p r o x}]^2 - \mathbb{E}[\eta X_{p r o x}] \\
    &= \mathbb{V}[X_{p r o x}] - \mathbb{E}[\eta X_{m i s}] - \mathbb{E}[\eta^2] \\
    &= \sigma_{m i s}^2 + \sigma_{p r o x}^2 -0 - \sigma_{p r o x}^2 \\
    &= \sigma_{m i s}^2
\end{align*}
%Note that this last equality can also be obtained with this derivation:
%\begin{align*}
    %\cov(X_{m i s}, X_{p r o x}) &= \operatorname{Cov}\left( X_{m i s}, X_{m i s} + \eta \right) \\
    %&=  \operatorname{Cov}\left( X_{m i s},  X_{m i s}\right) + \operatorname{Cov}\left(  X_{m i s}, \eta \right) \\
    %&= \operatorname{V}\left( X_{m i s}\right) + \underbrace{\operatorname{Cov}\left(  X_{m i s}, \eta \right)}_{=0} \\
    %&= \sigma_{m i s}^2 
%\end{align*}
Therefore, we have
$$\mathbb{E}\left[ \mathbb{E}[X_{m i s} \mid X_{p r o x}, S = 1] \right] = \mathbb{E}[X_{m i s} \mid S = 1] + \frac{\sigma_{m i s}^2}{\sigma_{m i s}^2 + \sigma_{p r o x}^2} (X_{p r o x} - \mathbb{E}[X_{p r o x} \mid S = 1]),$$
which allows us to complete the first derivation:
\begin{align*}
  \mathbb{E}[\hat \tau_{G, n, m, p r o x}] &= \sum_{j \in obs} \delta_j \mathbb{E}[X_j] + \delta_{m i s} \mathbb{E}\left[  \mathbb{E}[X_{m i s} \mid S = 1] + \frac{\sigma_{m i s}^2}{\sigma_{m i s}^2 + \sigma_{p r o x}^2} (X_{p r o x} - \mathbb{E}[X_{p r o x} \mid S = 1]) \right] \\
   &=  \sum_{j \in obs} \delta_j \mathbb{E}[X_j] + \delta_{m i s} \left(  \mathbb{E}[X_{m i s} \mid S = 1] + \frac{\sigma_{m i s}^2}{\sigma_{m i s}^2 + \sigma_{p r o x}^2} \left(\mathbb{E}[X_{p r o x}] - \mathbb{E}[X_{p r o x} \mid S = 1]) \right) \right)\\
   &=  \sum_{j \in obs} \delta_j \mathbb{E}[X_j] + \delta_{m i s} \left(  \mathbb{E}[X_{m i s} \mid S = 1] + \frac{\sigma_{m i s}^2}{\sigma_{m i s}^2 + \sigma_{p r o x}^2} \left(\mathbb{E}[X_{m i s}] - \mathbb{E}[X_{m i s} \mid S = 1]) \right) \right),
\end{align*}
since $\mathbb{E}[X_{p r o x}\mid S = 1] = \mathbb{E}[X_{m i s}\mid S = 1]$ and $\mathbb{E}[X_{p r o x}] = \mathbb{E}[X_{m i s}]$.
Recalling that $\tau = \sum \delta_j \mathbb{E}[X_j]$, the final form of the bias of $\hat \tau_{G, n, m, p r o x}$ can be obtained as
\begin{equation*}
\tau-  \mathbb{E}[\hat \tau_{G, n, m, p r o x}]  = \delta_{m i s} \left(\mathbb{E}[X_{m i s}] - \mathbb{E}[X_{m i s}\mid S = 1]\right) \left(1- \frac{\sigma_{m i s}^2}{\sigma_{m i s}^2 + \sigma_{p r o x}^2}\right).
\end{equation*}

\end{proof}

%\es{quel est le statut de la discussion ci-dessous ? C'est la preuve du Corollaire 1? Il faut dans tous les cas, expliciter la preuve du corollaire 1, si ce n'est pas trivial trivial}

\begin{proof}[Proof of Corollary~\ref{corol:bias-proxy}]
Note that the final expression of the bias obtained in the previous proof can not be estimated in all missing covariate patterns. For example, if $X_{m i s}$ is partially observed in the RCT, then an estimate of $\delta_{m i s}$ can be computed, and therefore the bias can be estimated. But in all other missing covariate pattern, a temptation is to estimate $\delta_{p r o x}$ from the regression of $Y$ against $X = (X_{o b s}, X_{p r o x})$ with an OLS procedure. \cite{wooldridge2016introductory} details the infinite sample estimate of such a coefficient:
$$
\lim_{n,m\to\infty} \mathbb{E}\left[\hat \delta_{p r o x} \right] = \delta_{m i s}\frac{\sigma_{m i s}^2}{\sigma_{m i s}^2 + \sigma_{p r o x}^2}$$

Note that the quantity $\frac{\sigma_{m i s}^2}{\sigma_{m i s}^2 + \sigma_{p r o x}^2}$ is always lower than 1, therefore if $\delta_{m i s} \geq 1$, then $\hat \delta_{p r o x}$ underestimates $\delta_{m i s}$. This phenomenon is called the attenuation bias. This point is documented by \cite{wooldridge2016introductory}, and is due to heteroscedasticity in the plug-in regression:
$$\operatorname{Cov}[X_{p r o x}, \varepsilon]=\operatorname{Cov}\left[X_{m i s}+\eta, \epsilon-\delta_{m i s} \eta\right]=-\delta_{m i s} \sigma_{\eta}^{2} \neq 0$$ 

This asymptotic estimate can be plugged-in into the previous bias estimation:
\begin{equation*}
    \tau-\mathbb{E}[\hat \tau_{G, n, m, p r o x}] = \hat \delta_{p r o x} \left(\mathbb{E}[X_{p r o x}] - \mathbb{E}[X_{p r o x} \mid S = 1]\right) \frac{\sigma_{p r o x}^2}{\sigma_{m i s}^2}
\end{equation*}

\end{proof}

\section{Toward a semi-parametric model \label{subsec:toward-non-param}}

This section completes Model~\ref{eq:linear-causal-model}, and justifies why this the assumption of a linear CATE is somewhat natural when considering a continuous outcome $Y$.

For a continuous outcome $Y$, the outcome model can be written with two terms, a baseline and the CATE.  Indeed, when focusing on zero-mean additive-error representations leads to assume that the potential outcomes are generated according to:
\begin{equation}
\label{eq:nonparam-causal-model}
    Y(A) = \mu(A,X) + \varepsilon_A, 
\end{equation}
for some function $\mu \in \mathbf{L}^2(\{0,1\} \times \mathcal{X} \to \mathbb{R})$ and a noise $\varepsilon_A$ satisfying $\EE[ \varepsilon_A \mid X] = 0$ almost surely. 

\begin{lemma}
\label{lemma:simplified-nonparametric-causal-model}
%\gv{Un petit nom pour ce lemme?} \bc{en fait il est dans pas mal de papiers quand même..}
 Assume that the nonparametric generative model of Equation~\eqref{eq:nonparam-causal-model} holds, then there exists a function $g:\mathcal{X} \to \mathbb{R}$ such that
 \begin{align}
 \label{eq:simplified-nonparametric-causal-model}
 Y(A) = g(X) + A\,\tau(X) + \varepsilon_A,\quad    \text{where } \tau(X):=\mathbb{E}[Y(1)-Y(0) \mid X].
 \end{align}
\end{lemma}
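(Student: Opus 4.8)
The plan is to exploit the fact that $A$ takes only the values $0$ and $1$, so that the function $a \mapsto \mu(a,X)$ is entirely determined by its two values $\mu(0,X)$ and $\mu(1,X)$. Concretely, for any $a \in \{0,1\}$ one has the identity $\mu(a,X) = (1-a)\,\mu(0,X) + a\,\mu(1,X) = \mu(0,X) + a\,\bigl(\mu(1,X) - \mu(0,X)\bigr)$. Applying this with $a = A$ inside the model~\eqref{eq:nonparam-causal-model} immediately rewrites $Y(A) = \mu(0,X) + A\,\bigl(\mu(1,X) - \mu(0,X)\bigr) + \varepsilon_A$, which already has the desired shape once we set $g(X) := \mu(0,X)$ and identify the bracketed term with $\tau(X)$.

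The second step is to check that $\mu(1,X) - \mu(0,X)$ is indeed the CATE $\tau(X)$. For this I would take the conditional expectation given $X$ in $Y(a) = \mu(a,X) + \varepsilon_a$: since $\EE[\varepsilon_a \mid X] = 0$ by assumption and $\mu(a,X)$ is $\sigma(X)$-measurable, we get $\mu_a(X) = \EE[Y(a) \mid X] = \mu(a,X)$ for $a \in \{0,1\}$. Hence $\tau(X) = \EE[Y(1) - Y(0) \mid X] = \mu_1(X) - \mu_0(X) = \mu(1,X) - \mu(0,X)$, and the substitution of the previous paragraph yields $Y(A) = g(X) + A\,\tau(X) + \varepsilon_A$ with exactly the same noise term $\varepsilon_A$ as in~\eqref{eq:nonparam-causal-model}.

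Finally I would note the bookkeeping points: $g = \mu(0,\cdot)$ inherits measurability and square-integrability from $\mu \in \mathbf{L}^2(\{0,1\}\times\mathcal{X} \to \R)$, so $g : \mathcal{X} \to \R$ is a legitimate function of the stated type, and $\tau(\cdot)$ is well defined and integrable because the potential outcomes are assumed integrable. There is really no substantive obstacle here; the only thing to be careful about is that the decomposition $\mu(a,X) = \mu(0,X) + a(\mu(1,X)-\mu(0,X))$ is used pointwise in $a \in \{0,1\}$ (it fails for $a \notin \{0,1\}$), which is exactly why the result is a reformulation valid for the binary-treatment setting and not a genuine restriction on $\mu$.
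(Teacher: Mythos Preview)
Your proof is correct and follows essentially the same approach as the paper: set $g(X) = \mu(0,X)$, use the binary nature of $A$ to write $\mu(A,X) = \mu(0,X) + A\bigl(\mu(1,X)-\mu(0,X)\bigr)$, and identify the bracketed term with $\tau(X)$ via $\EE[\varepsilon_a \mid X] = 0$. You have simply spelled out in more detail what the paper summarizes in two lines.
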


Lemma \ref{lemma:simplified-nonparametric-causal-model} follows from rewriting Equation~\eqref{eq:nonparam-causal-model}, accounting for the fact that $A$ is binary and $Y \in \mathbb{R}$. Such a decomposition is often used in the literature \citep{nie2020quasioracle}. 
This model allows to have a simpler expression of the treatment effect without any additional assumptions, due to the discrete nature of $A$. 
\modifbis{In other words, this model enables placing independent functional form 
on the CATE $\tau(X)$, sometimes relying on the idea that the CATE is smoother, while the baseline response can be more complex \citep{Gao2021Dina}. 
In the context of the sensitivity analysis, this model has the interest of highlighting treatment effect modifier variables, such as variables that intervene in the CATE $\tau(X)$. }

\section{Robinson procedure \label{appendix:robinson}}
This appendix recall the so-called Robinson procedure that aims at estimating the CATE coefficients $\delta$ in a semi-parametric equation such as \eqref{eq:linear-causal-model}. This method was developed by \cite{robinson1988semiparam} and has been further extended \citep{chernozhukov2017doubledebiased, wager2020courses, nie2020quasioracle}. Such a procedure is called a R-learner, where the \textit{R} denotes \textit{Robinson} or \textit{Residuals}. We recall the procedure,

\begin{enumerate}
    \item Run a non-parametric regressions $Y \sim X$ using a parametric or non parametric method. The best method can be chosen with a cross-validation procedure. We denote $\hat m_n(x) = \mathbb{E}[Y \mid X =x]$ the estimator obtained.
    \item Define the transformed features $\tilde Y = Y - \hat m_n(X)$ and $\tilde Z = (A-\modif{e_1(X)})X$, using the previous procedure $\hat m_n$.
    \item Estimate $\hat \delta_n$ running the OLS regression on the transformed features $\tilde Y \sim \tilde Z$.
\end{enumerate}

If the non-parametric regressions of $m(x)$ satisfies $\mathbb{E}\left[(\hat{m}(X)-m(X))^{2}\right]^{\frac{1}{2}} = o_{P}\left(\frac{1}{n^{1 / 4}}\right)$, then the procedure to estimate $\delta$ is $\sqrt{n}$-consistent and asymptotically normal,
$$\sqrt{n}\left(\hat{\delta}-\delta\right) \Rightarrow \mathcal{N}\left(0, V_{R}\right), \quad V_{R}=\operatorname{Var}\left[\widetilde{Z}\right]^{-1} \operatorname{Var}\left[\widetilde{Z} \tilde{Y}\right] \operatorname{Var}\left[\widetilde{Z}\right]^{-1}$$ See \cite{chernozhukov2017doubledebiased, wager2020courses} for details.

\section{Synthetic simulation - Extension \label{appendix:synthetic-simulation-extension}}

%\section{Extension of Section~\ref{sec:simulation}}

This section completes the synthetic simulation presented in Section~\ref{sec:simulation}.

\paragraph{Simulation parameters}

Parameters chosen highlight different covariate roles and strength importance.
In this setting, covariates $X_1$, $X_2$, $X_3$ are the so-called treatment effect modifiers due to a non-zero $\delta$ coefficients, and $X_1$, $X_3$, $X_4$ are shifted from the RCT sample and the target population distribution due to a non-zero $\beta_s$ coefficient. %This situation is in evidence on Figure~\ref{fig:dag_simulations}. 
Therefore covariates $X_1$ and $X_3$ are necessary to generalize the treatment effect, because in both groups. Because in the simulation $X_2$ and $X_4$ are independent, the set $X_1$ and $X_3$ is also sufficient to generalize.
Only $X_2$ has the same marginal distribution in the RCT sample and in the observational study.
Note that the amplitude and sign of different coefficients used, along with dependence between variables allows to illustrate several phenomenons.
For example $X_3$ is less shifted in between the two samples compared to $X_1$ because $|\beta_{s, 1}| \leq |\beta_{s, 3}|$.

\begin{comment}
\begin{figure}[!h]
    \begin{minipage}{.45\linewidth}
	    \caption{\textbf{DAG associated with the simulation for the linear CATE}: Colorized covariates $X_1$, $X_2$, and $X_3$ correspond to treatment effect modifiers due to a non-zero $\delta$ coefficients, and the association between $X_1$ and $X_5$ is represented with a dashed arrow (unobserved confounding).}
    \label{fig:dag_simulations}
	\end{minipage}
	\begin{minipage}{.50\linewidth}
        \includegraphics[width = 8cm]{}
    \end{minipage}%
\end{figure}
\end{comment}

\paragraph{Additional comments on Figure~\ref{fig:simulations-all-pattern}}

Note that depending on the correlation strength between $X_1$ and $X_5$, the missingness of $X_1$ can lead to different coefficients estimations when using the G-formula estimation, and different bias on the ATE. Table~\ref{tab:simulation-linear-effects-on-coefficients} illustrates this situation, where the higher the correlation, the higher the error on the coefficients estimations, but the lower the bias on the ATE when only $X_1$ is missing.

\begin{table}[!h]
\begin{minipage}{.5\linewidth}
    \caption{\textbf{Coefficients estimated in the simulation}: Simulation with $X_1$ as the missing covariate repeated 100 times, means of estimated coefficients for $X_5$ and bias on ATE using the Robinson procedure.} 
    \label{tab:simulation-linear-effects-on-coefficients}
    %\todo[inline]{Gael: This table is a candidate to move in the supp mats: it is well summarized by the corresponding sentence in the text}
%\jj{yes}
\end{minipage}%
\begin{minipage}{.5\linewidth}
\begin{center}
\begin{tabular}{cccc}
\hline
$\rho_{X_1, X_5}$  &  $\delta_5 - \hat \delta_5 $ &     $\hat \tau_{\text{\tiny G},o b s} - \tau$                 \\ \hline
0.05 &   6.34  &  -8.32  \\ 
0.5  &   16.83  & -6.29 \\ 
0.95 &  28.53 & -0.81 \\ 
\multicolumn{1}{l}{} & \multicolumn{1}{l}{} & \multicolumn{1}{l}{} & \multicolumn{1}{l}{}
\end{tabular}
\end{center}
\end{minipage}
\end{table}

\paragraph{Imputation}

When a covariate is partially observed, at temptation is to imputed the missing part with a model learned on the complete part as detailed in procedure~\ref{algo:imputation}.
Section~\ref{sec:linear-causal-model} illustrates Corollary~\ref{lem:imputation}, as it shows that linear imputation does not diminish the bias compared to a case where the generalization is performed using only the restricted set of observed covariates. On Figure~\ref{fig:simulations-imputation} we simulated all the missing covariate patterns (in RCT or in observational) considering $X_1$ is partially missing, with varying correlation strength between $X_5$ and $X_1$, and fitting a linear imputation model.
%Note that in this simulation scheme, this is the case only for $X_1$ due to its correlation with $X_5$.  \jj{la on comprend pas? tu peux dire, ici on impute X1 avec quoi d'ailleurs?}
%\jj{la avant de dire ça, il faut que tu fasses référence aux résultats théoriques, comme vu sur la formule.. le biais est le même que quand la variable est manquantes, néomoins on voit une peu d'effet empiriqueemnt sur la varaince?} 
%Beyond our theoretical result that the bias is the same, empirical results suggest that only one pattern, when the missing covariate is in the observational data set, and only when using the G-formula estimator, benefits from imputation with a slightly smaller bias and a smaller variance. This could be due to the fact that the \textit{true} $X_1$ covariate (and not the imputed one) is used when estimating the coefficients of the linear regression.
Imputation does not lead to a lower bias than totally removing the partially observed covariate. Therefore, in case of a partially missing covariate we advocate running a sensitivity analysis rather than a linear imputation.
%\jj{Est-ce qu'il ne faut pas dire clairement, imputer ne sert à rien dans la mesure où ça fait  comme si tu supprimais complètement la variable, donc il vaut mieux faire une analyse de sensibilité avec partielllement missing}

\begin{figure}[!h]
    \centering
    \includegraphics[width=0.9\textwidth]{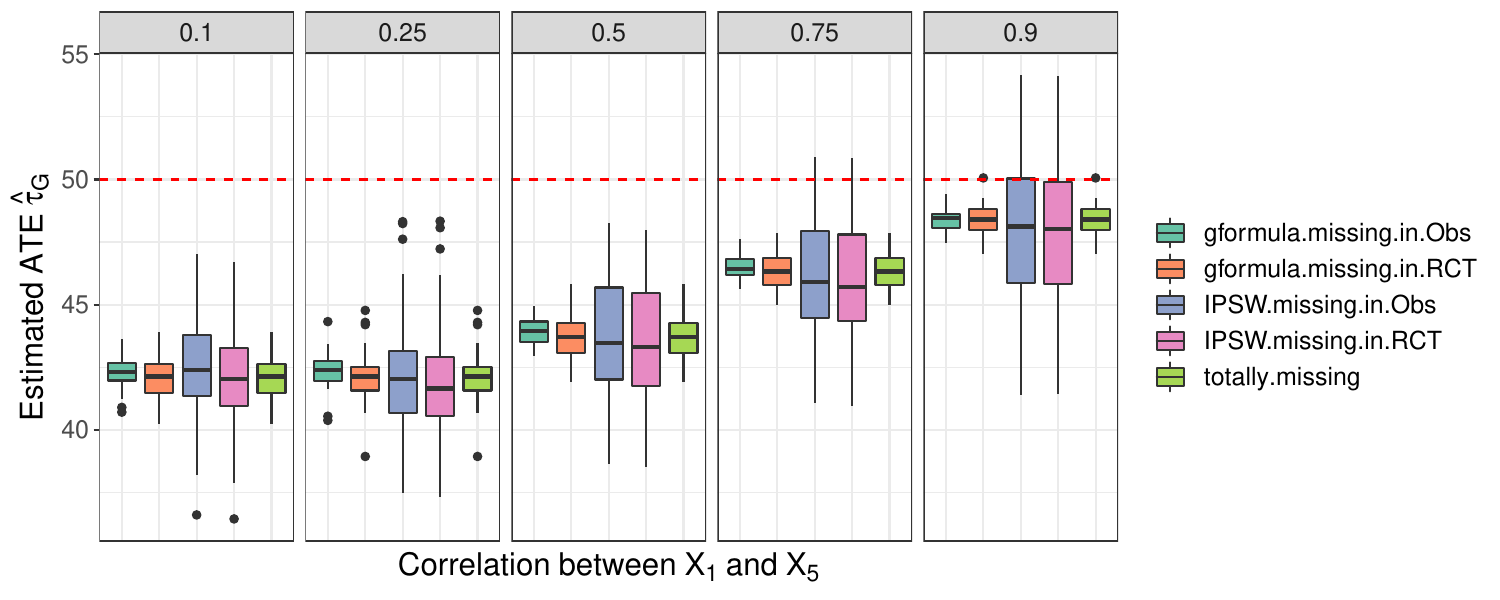}
    \caption{\textbf{Simulations results when imputing}
(procedure~\ref{algo:imputation}): Results when imputing $X_1$ with a
linear model fitted on the complete data set (either the RCT or the
observational). All the  missing covariate pattern are simulated using
either the G-formula or the IPSW estimators. The impact of the
correlation between $X_1$ and $X_5$ is investigated. Each simulation is
repeated 100 times. All procedures have a similar bias as the procedure
ignoring the partially-missing covariate (\texttt{totally.missing}), so
that a linear imputation (procedure~\ref{algo:imputation}) improves neither the bias nor the variance.
%\todo[inline]{The caption is missing a guide on how to read the results. Should it be that imputing (procedure 4) does not improve things (diminish bias orvariance)?}
}
    \label{fig:simulations-imputation}
\end{figure}

\paragraph{Proxy variable}

Finally and to illustrate Lemma~\ref{lemma:bias-proxy}, the simulation is
extended to replace $X_1$ by 
a proxy variable, generated following \eqref{a:proxy} with a varying
$\sigma_{p r o x}$. The generalized ATE is estimated with the G-formula. The experiments is repeated 20 times per $\sigma_{p r o x}$ values. Results are presented on Figure~\ref{fig:simulation-linear-proxy}. Whenever $\sigma_{p r o x}$ is small compared to $\sigma_{m i s}$ (which is equal to one in this simulation), therefore the bias is small. 

\begin{figure}[H]
    \centering
    \includegraphics[width=0.75\textwidth]{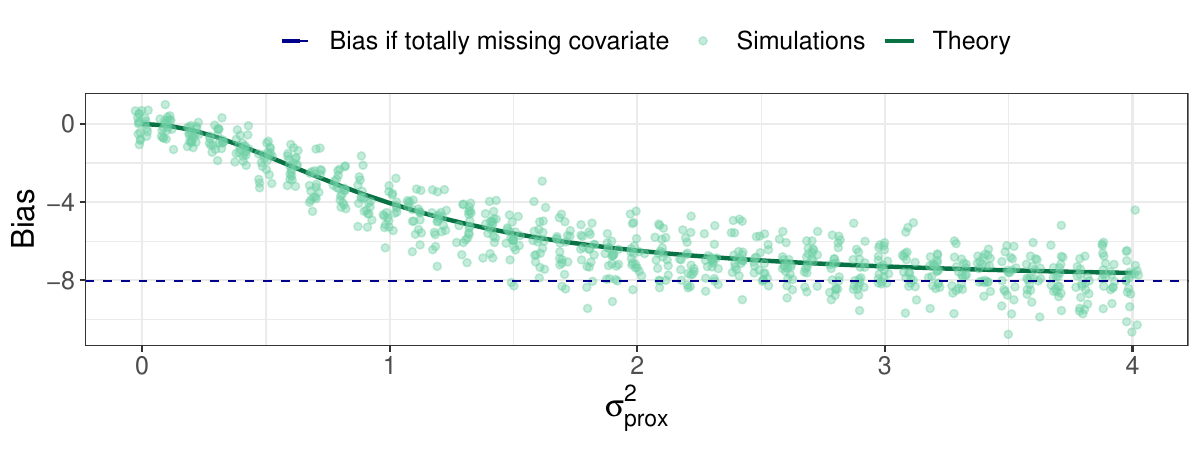}
    \caption{\textbf{Simulation results for proxy variable}
(procedure~\ref{algo:proxy}) Simulation when a key covariate is replaced by a proxy following the proxy-framework (see Assumption~\ref{a:proxy}). The theoretical bias \eqref{lemma:bias-proxy} is represented along with the empirical values obtained when generalizing the ATE with the plugged-in G-formula estimator.}
    \label{fig:simulation-linear-proxy}
\end{figure}

\section{Homogeneity of the variance-covariance matrix \label{appendix:assumption-8}}

Recall that Assumption~\ref{a:trans-sigma} states that the covariance matrices in both data sets are identical. This assumption, which may appear to be very restrictive, can be partially tested on the set of observed covariates. In this section, we present such a test \citep[Box's M-test][]{box1949boxmtest}, which illustrates the validity of Assumption~\ref{a:trans-sigma} on some particular data set. Taking one step further, we study the impact of Assumption~\ref{a:trans-sigma} violation on the resutling estimate.

%Assumption~\ref{a:trans-sigma} has to be considered as a strong assumption as it supposes the covariance matrix, denoted $\Sigma$, to be the same in both data set. This assumption can be partially tested on the set of observed covariates. In this section we provide practical details about a possible test, and also visualizations. 
%Indeed, researchers often rely on a single number to determine if their data have met a particular threshold, but such metric can obscure interesting information.
%Extensions of the simulation are also reported to empirically observe how a poorly verified Assumption~\ref{a:trans-sigma} impacts the results. 
%Note that beyond the consequences of a poorly controlled hypothesis on the method, do real data sets from several sources have different variance-covariance matrix? We can bring no general rule, but only some examples in which the assumption holds. \cite{friendly2020covariancematrixviz} report that the \texttt{skull} data is an example of a real data set with multiple sources where there are substantial differences among the means of groups, but little evidence for heterogeneity of their covariance matrices.

\subsection{Statistical test and visualizations}

\cite{friendly2020covariancematrixviz} detail available tests to assess if covariance matrices from two data sample are equal. Despite its sensitivity to violation, Box's M-test \citep{box1949boxmtest} can be used test the equality. In particular the package \texttt{heplots} contains the tests and visualizations in \texttt{R}. The command line to perform the test is detailed below.

\begin{lstlisting}
library(heplots)
boxM(data[, c("X1", "X2", "X3", "X4")], group = data$S)
\end{lstlisting}

\begin{wrapfigure}{r}{0.4\textwidth}
    \centering
    \includegraphics[width=6cm]{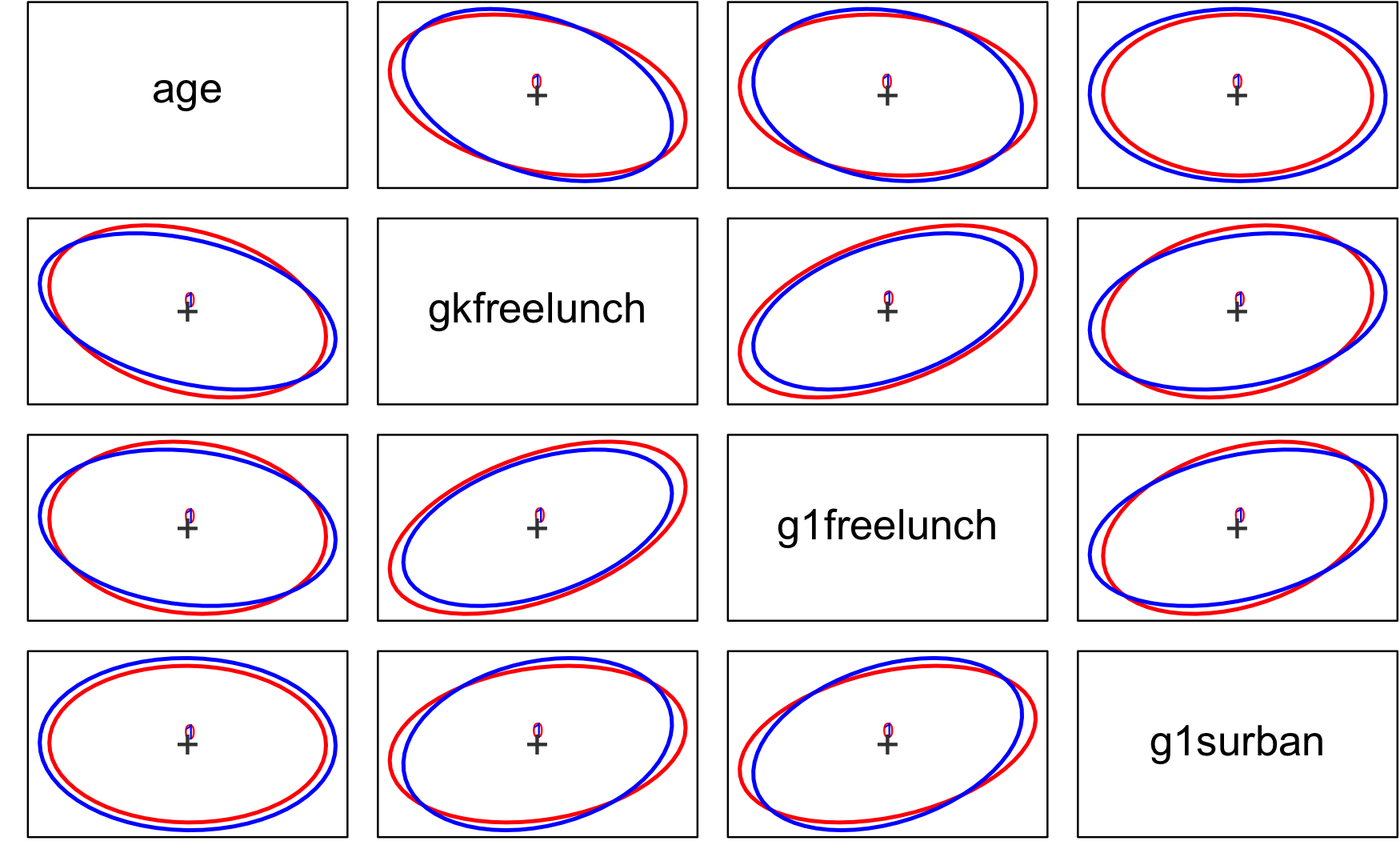}
    \caption{\textbf{Pairwise data ellipses for the STAR data}, centered at the origin. This view allows to compare the variances and covariances for all pairs of variables.}
    \label{fig:covariance-matrix-star}
\end{wrapfigure}

Even if we cannot bring a general rule to know if the covariance matrices are equal, we can display some examples in which Assumption~\ref{a:trans-sigma} holds. For instance, \cite{friendly2020covariancematrixviz} report that the \texttt{skull} data is an example of a real data set with multiple sources where there are substantial differences among the means of groups, but little evidence for heterogeneity of their covariance matrices.

\subsubsection{Semi-synthetic experiment: STAR}
While doing the semi-synthetic experiment on the STAR data set, the Box M-test rejects the null hypothesis when considering only numerical covariates (\texttt{age}, \texttt{g1freelunch}, \texttt{gkfreelunch}, and \texttt{g1surban}) with a p-value of 0.022. 
This indicates that the preservation of the variance-covariance structure between the two simulated sources does not hold.
To help support conclusions, one can visualize how the variance covariance matrix vary in between the two sources, as presented on Figure~\ref{fig:covariance-matrix-star}, supporting that the changes in the variance-covariance are not very strong.
%Still, the last covariates considered as observed in the simulation is the ethnicity, a categorical covariate. When transforming it into a dummy variable the test reject the null-hypothesis with a p-value of $3e-10$. 

\subsubsection{Traumabase and CRASH-3}

\textit{Note that this part's purpose is only to illustrate the principle as the application performed in Section~\ref{sec:traumabase} relies on the independence between the time to treatment and all other covariates, and not Assumption~\ref{a:trans-sigma}.}

One can inspect how far the variance and covariance change in between the two sources. Pairwise data ellipses are presented on Figure~\ref{fig:covariance-matrix-crash-TB} for CRASH-3 and Traumabase patients, suggesting rather strong difference in the variance-covariance matrix. As expected Box M-test largely rejects the null hypothesis.

\begin{figure}[!h]
    \centering
    \includegraphics[width=0.5\textwidth]{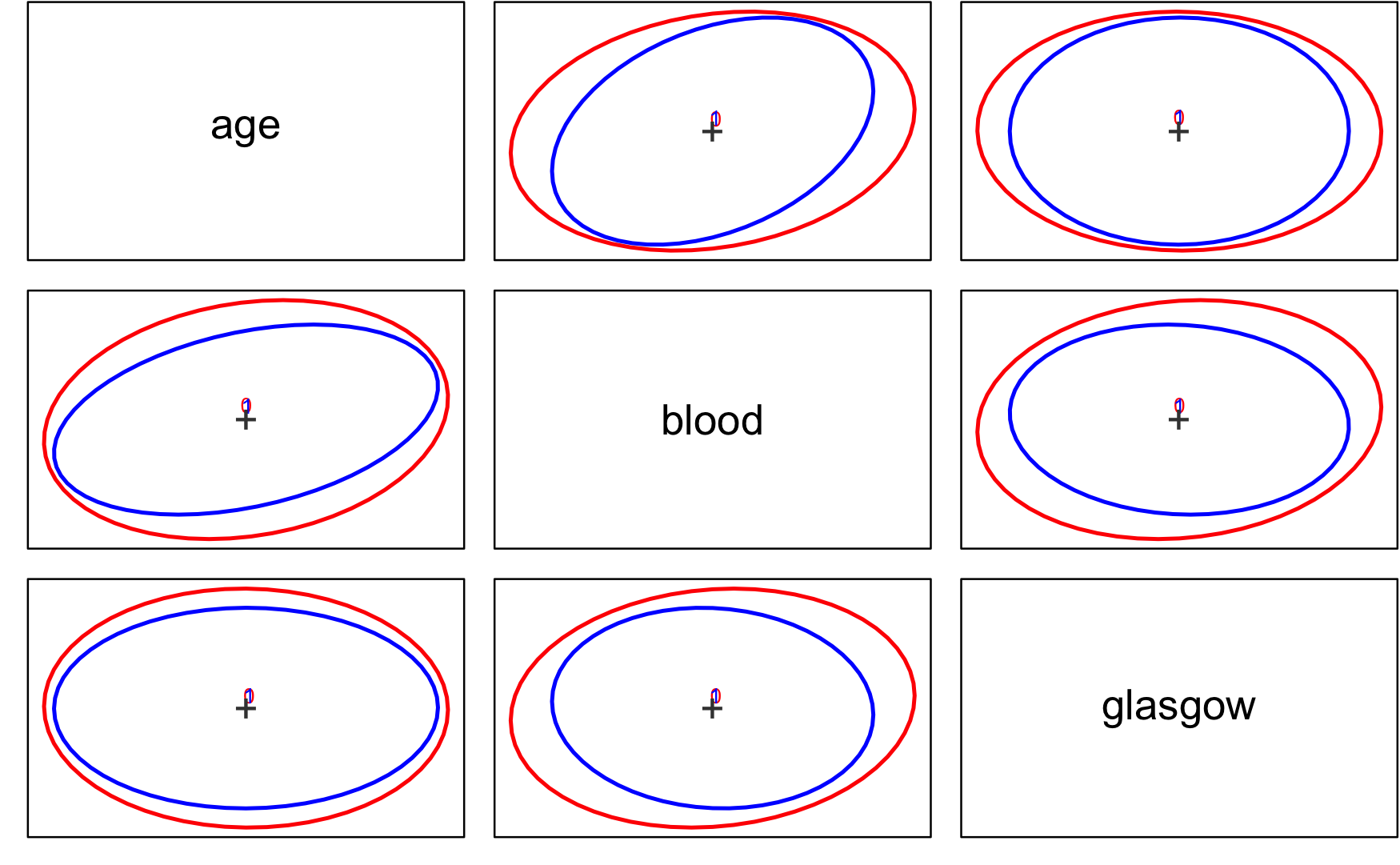}
    \caption{\textbf{Pairwise data ellipses for the CRASH-3 and Traumabase data}, centered at the origin. CRASH-3 data are in blue and Traumabase data in red. This view allows to compare the variances and covariances for all pairs of variables. While the mean are really different in the two sources, the variances and covariances are not so different.}
    \label{fig:covariance-matrix-crash-TB}
\end{figure}

It is interesting to note that in some cases the variance covariance matrix is identical in between two populations. For example we tested whether the two major trauma centers in France present heterogeneity in the variance-covariance matrix, and the Box M test does not reject the null hypothesis.

\subsection{Extension of the simulations \label{appendix:assumption-8-simulation}}

Simulations presented in Section~\ref{sec:simulation} can be extended to illustrate empirically the consequences of a poorly specified Assumption~\ref{a:trans-sigma}. 
Suppose $X_1$ is the unobserved covariate, and that the variance-covariance matrix is not the same in the randomized population ($S=1$) as in the target population. But the heterogeneities in between the two sources can be different in their nature, affecting covariates depending or not from $X_1$. We can imagine two situations, a situation (A) where the link in between $X_1$ and $X_5$ is different in the two sources, and another situation (B) where the link in between $X_2$ and $X_3$ is not the same. The situation is illustrated on Figures~\ref{fig:situation-A} and \ref{fig:situation-B} with pairwise data ellipses. Note that with $n=1000$ and $m=10000$ a Box-M test largely rejects the null-hypothesis with a similar statistic value for both situations. When computing the bias according to Theorem~\ref{lemma:linear-gformula-unbiased} and repeating the experiment $50$ times, empirical evidence is made that the localization of the heterogeneity impacts or not the bias computation. As presented on Figure~\ref{fig:covariance-matrix-simulation-ATE}, situation A affects the bias computation, when situation B keeps the bias estimation valid.

%In particular, we propose to gradually change the correlation between $X_1$ and $X_2$ ($\rho_{X_1, X_5}$) (Experiment 1), and another experiment where an association is gradually present between $X_2$ and $X_3$ in the randomized controlled trial, but not in the target population (Experiment 2). For each experiment, several quantities are estimated such as $\hat \tau$, $\hat \tau_{G, obs}$, $\hat \tau_{G, obs} + \hat B$ (the corrected value using procedure~\ref{algo:observed-obs}), and also a naively corrected value considering that $X_1$ is independent with all other covariates in both sources. Results are presented on Figure~\ref{fig:covariance-matrix-simulation}, and highlight the fact that local misspecification of Assumption~\ref{a:trans-sigma} does not necessarily bias the estimate as soon as all covariates correlated with the missing one have the same structure, not matter the variability of the covariance matrix for other covariates. In particular the Box M-test would reject the null hypothesis for Experiment 2, while the estimate is unbiased. 

\begin{figure}[!h]
\begin{subfigure}{.5\linewidth}
\centering
\includegraphics[width=0.8\textwidth]{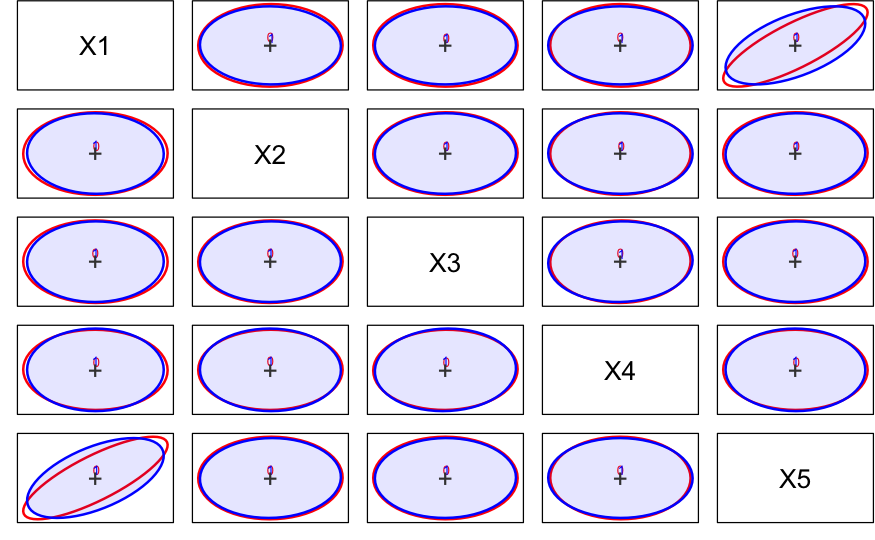}
\caption{\textbf{Situation A} - Centered pairwise data ellipses}
\label{fig:situation-A}
\end{subfigure}%
\begin{subfigure}{.5\linewidth}
\centering
\includegraphics[width=0.8\textwidth]{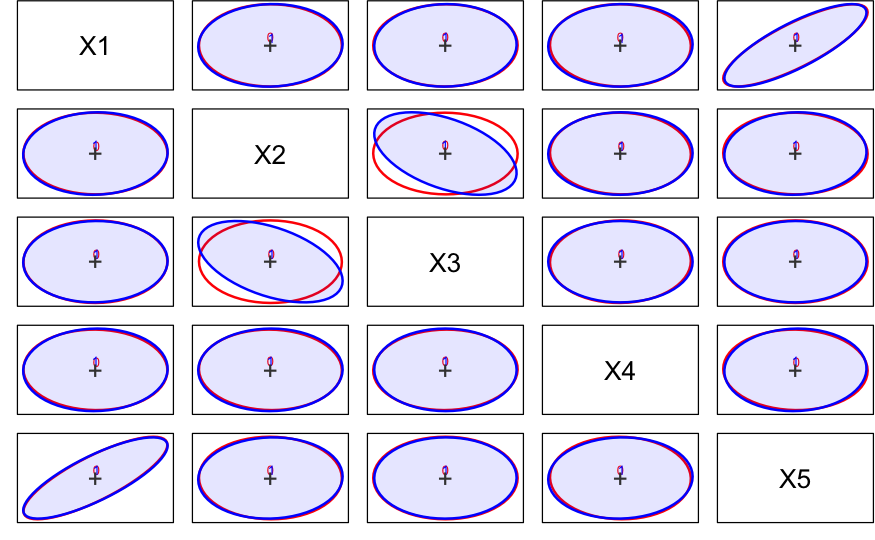}
\caption{\textbf{Situation B} - Centered pairwise data ellipses}
\label{fig:situation-B}
\end{subfigure}\\[1ex]
\begin{subfigure}{\linewidth}
\centering
\includegraphics[width=0.7\textwidth]{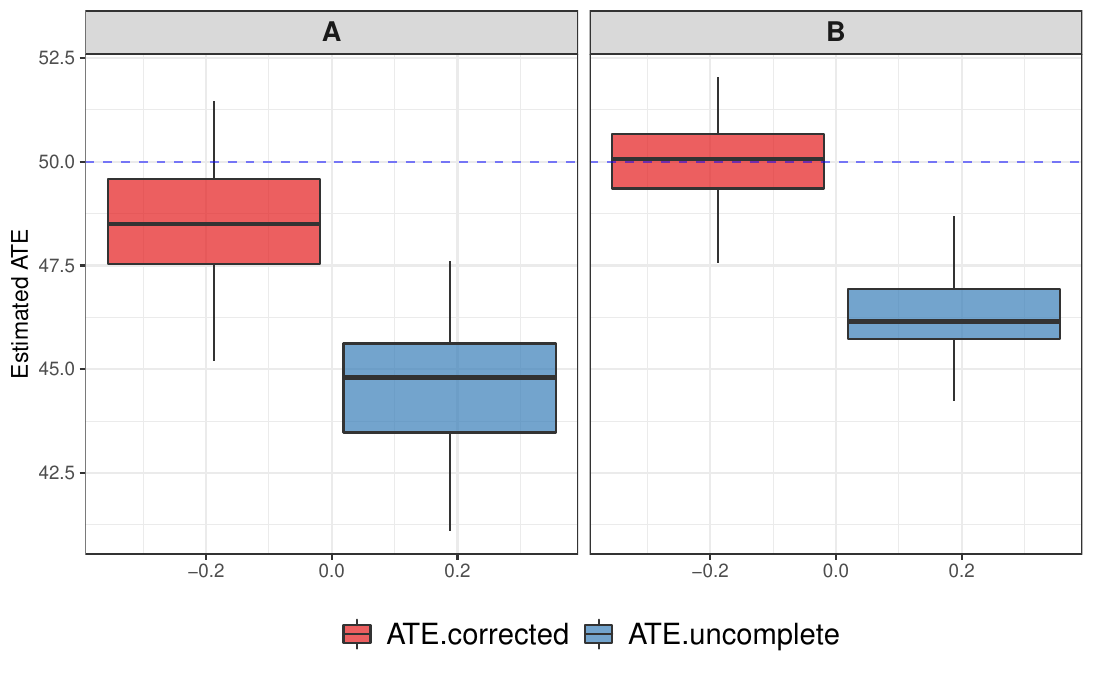}
\caption{\textbf{ATE estimation in the two situations} where $\hat \tau_{\text{\tiny G}, obs}$ is estimated considering $X_1$ is missing and denoted \texttt{ATE.uncomplete}, while the bias $B$ is estimated following Theorem~\ref{lemma:linear-gformula-unbiased} giving \texttt{ATE.corrected} ($\hat \tau_{\text{\tiny G}, obs} + \hat B$).}
\label{fig:covariance-matrix-simulation-ATE}
\end{subfigure}
\caption{\textbf{Effect of a different variance-covariance matrix on the ATE estimation}, where heterogeneity between the two variance-covariance matrix is introduced as presented in (a) and (b), and on (c) the impact on the estimated average treatment effect (ATE). Situations A and B result in a similar statistics when using a Box-M test, but leads to very different impact on the bias estimation as visible on (c). The simulation are repeated 50 times, with a similar outcome generative model as in \eqref{eq:Ymodel}, and $n=1000$ and $m=10000$.}
\label{fig:covariance-matrix-simulation}
\end{figure}

\subsection{Recommendations}

Our current recommendations when considering the Assumption~\ref{a:trans-sigma} is, first, to visualize the heterogeneity of variance-covariance matrix with pairwise data ellipses on $\Sigma_{obs,obs}$. A statistical test such as a Box-M test can be applied on $\Sigma_{obs,obs}$. 
We also want to emphasize that a statistical test depends on the size of the data sample, when what really matters in this assumption for the sensitivity analysis to be valid is the permanence of covariance structure of the missing covariates with the strongly correlated observed covariates. Simulations presented on Figure~\ref{fig:covariance-matrix-simulation-ATE} is somehow an empirical pathological case where the variance-covariance matrix are equivalently different when considering a statistical test, but leads to different consequences on the validity of Theorem~\ref{lemma:linear-gformula-unbiased}, and therefore the sensitivity analysis.
%Finally, the assumption can not be entirely tested as the user do not have access to $\Sigma$. 
%For example, considering a data set with children, with covariates being age and size, the correlation between the two covariates can be safely considered being close in two data sources in certain context, when the correlation with the age when entering school can be differently correlated between different data sources due to different educational policy. 

%Another simulation can be performed, where the variance-covariance matrix in the RCT is randomly different from the target's population one.
\modifbis{
\subsection{Comment about the notations}

The notations used in this work inherits from the generalization literature and reflects the idea of a plausibility to be sampled from a target superpopulation. The point of view of two population with support inclusion is equivalent for our purpose. Still, thinking to the problem of a sampling bias, then Assumption~\ref{a:trans-sigma} imposes unusual restrictions for $P(X \mid S =0)$, that is a subpopulation of the target population. As we do not do any inference on that population and as it has no practical interpretation, we do not discuss this in this work.}

\end{document}